\newcommand{\stitle}[1]{\vspace{1ex} \noindent{\bf #1}}
\newcommand{\gr}{\emph{EGR}}
\newcommand{\ls}{\emph{ALS}}
\newcommand{\qd}{\emph{query drift }}
\newcommand{\qds}{\emph{query-drifted vertices }}
\newtheorem{definition}{Definition}
\newenvironment{proof}{\quad{\it Proof:}}{\hfill $\square$\par}  
\newtheorem{theorem}{Theorem}[section]
\newtheorem{lemma}{Lemma}[section]
\newtheorem{proposition}{Proposition}
\def\BibTeX{{\rm B\kern-.05em{\sc i\kern-.025em b}\kern-.08em
		T\kern-.1667em\lower.7ex\hbox{E}\kern-.125emX}}
\begin{document}
	\title{Query-Centered Temporal Community Search via Time-Constrained Personalized PageRank}

	\author{Longlong Lin,
		Pingpeng Yuan$^*$, Member, IEEE,
	Rong-Hua Li$^*$,  Member, IEEE, Chunxue Zhu, Hongchao Qin,  Hai Jin,~\IEEEmembership{Fellow, IEEE, Life Member, ACM}, and Tao Jia
		\IEEEcompsocitemizethanks{\IEEEcompsocthanksitem Longlong Lin and Tao Jia are with the College of
			Computer and Information Science,
			Southwest University, Chongqing 400715, China. \protect\\ Email: \{longlonglin, tjia\}@swu.edu.cn 
			\IEEEcompsocthanksitem  Pingpeng Yuan, Chunxue Zhu and Hai Jin are with National Engineering Research Center for Big Data Technology and System / Service Computing Technology and System Laboratory /
			Cluster and Grid Computing Laboratory / School of Computer Science and Technology, Huazhong University of Science and Technology, Wuhan 430074, China.\protect\\
			E-mail: \{ppyuan, cxzhu, hjin\}@hust.edu.cn
			\IEEEcompsocthanksitem  Rong-Hua Li and Hongchao Qin are with the School of
			Computer Science and Technololgy,
			Beijing Institute of Technology, Beijing 100081, China. \protect\\ Email: lironghuabit@126.com, qhc.neu@gmail.com \protect\\
			Pingpeng Yuan and Rong-Hua Li are Corresponding Authors}
		\thanks{Manuscript received XXX, XXX; revised XXX, XXXX.}}

\IEEEtitleabstractindextext{%

	\begin{abstract}
Existing temporal community search suffers from two defects: (i) they ignore the temporal proximity between the query vertex $q$ and other vertices but simply require the result to include $q$. Thus, they find many temporal irrelevant vertices (these vertices are called \emph{query-drifted vertices}) to $q$ for satisfying their cohesiveness, resulting in $q$ being marginalized; (ii) their methods are NP-hard, incurring high costs for exact solutions or compromised qualities for approximate/heuristic algorithms. Inspired by these, we propose a novel problem named \emph{query-centered} temporal community search to circumvent \emph{query-drifted vertices}. Specifically, we first present a novel concept of Time-Constrained Personalized PageRank to characterize the temporal proximity between  $q$ and other vertices. Then, we introduce a  model called $\beta$-temporal proximity core, which can combine temporal proximity and structural cohesiveness. Subsequently, our problem is formulated as an optimization task that finds a $\beta$-temporal proximity core with the largest $\beta$. To solve our problem, we first devise an exact and near-linear time greedy removing algorithm that iteratively removes unpromising vertices.  To improve efficiency, we then design an approximate two-stage local search algorithm with bound-based pruning techniques. Finally, extensive experiments on eight real-life datasets and nine competitors show the superiority of the proposed solutions.
\end{abstract}
}

\maketitle

\section{Introduction} \label{sec:introduction}
Many real-life graphs exhibit rich community structures that are defined as densely connected subgraphs. Community mining is a significant vehicle for analyzing network organization. In general, the research on community mining can be divided into community detection \cite{newman2004fast, donetti2004detecting, rokach2005clustering, DBLP:conf/icde/ChangQ19} and community search \cite{DBLP:conf/focs/AndersenCL06,DBLP:conf/kdd/SozioG10, DBLP:conf/sigmod/CuiXWW14,  DBLP:conf/dasfaa/ZhangLYJ22,  DBLP:conf/sigmod/YangXWBZL19, DBLP:conf/kdd/ChenL0XY020}. The former aims to find all communities by some predefined criteria (e.g., modularity \cite{newman2004fast}), resulting in that it is time-consuming and not customized for user-specified query requests. To alleviate these defects, the latter identifies the specific community containing the user-specified query vertex, which is more efficient and personalized. Additionally, it also witnesses a series of applications such as social recommendation \cite{DBLP:conf/kdd/SozioG10}, protein complexes identification \cite{DBLP:conf/sigmod/CuiXWW14} and  impromptu activities organization \cite{DBLP:conf/kdd/ChenL0XY020}.

Despite the significant success of community search, most existing approaches are tailored to static networks. However, many real networks often contain complex time interaction information among vertices, which are typically named temporal networks \cite{DBLP:journals/corr/Holme15a}. For example, in e-commerce or social media, the connection between two parties was made at a specific time. Thus, conventional static community search methods may find a sub-optimal result. For example, Fig. \ref{fig:motivation} shows a sample money transfer network, in which the timestamps of each edge indicate when the two individuals make transactions. We assume \textit{Frank} is the query vertex. By using 3-core as the community model (i.e., a 3-core is a community in which each vertex has at least 3 neighbors), the vertices $S_1$ within the green circle is the answer if the time information of edges is ignored  \cite{DBLP:conf/kdd/SozioG10,  DBLP:conf/sigmod/CuiXWW14}. Although \textit{David}, \textit{Carol} and \textit{Erin} meet the structural constraints in $S_1$ (i.e., each of them has at least 3 neighbors in $S_1$), the occurrence time of the transactions among $S_1$ differs greatly. Thus, $S_1$ is an unpromising temporal community \cite{DBLP:conf/icde/LiSQYD18, DBLP:journals/pvldb/ChuZYWP19, DBLP:conf/icde/QinLWQCY19}. Recently, a few researches have been done on temporal community search \cite{DBLP:conf/bigdataconf/TsalouchidouBB20, DBLP:journals/tkdd/GalimbertiCBBCG21}. The vertices $S_2$ included in the blue circle is the answer if \cite{DBLP:conf/bigdataconf/TsalouchidouBB20} is executed. However, we can see that \textit{Frank} is on the periphery of $S_2$. This is because \textit{Alice} and \textit{Bob}  have poor temporal proximity with respect to \textit{Frank} (Section \ref{subsec:tppr}), resulting in  \textit{Frank}  being marginalized. 

\begin{figure}[t!]
	\centering
	\includegraphics [width=0.48\textwidth] {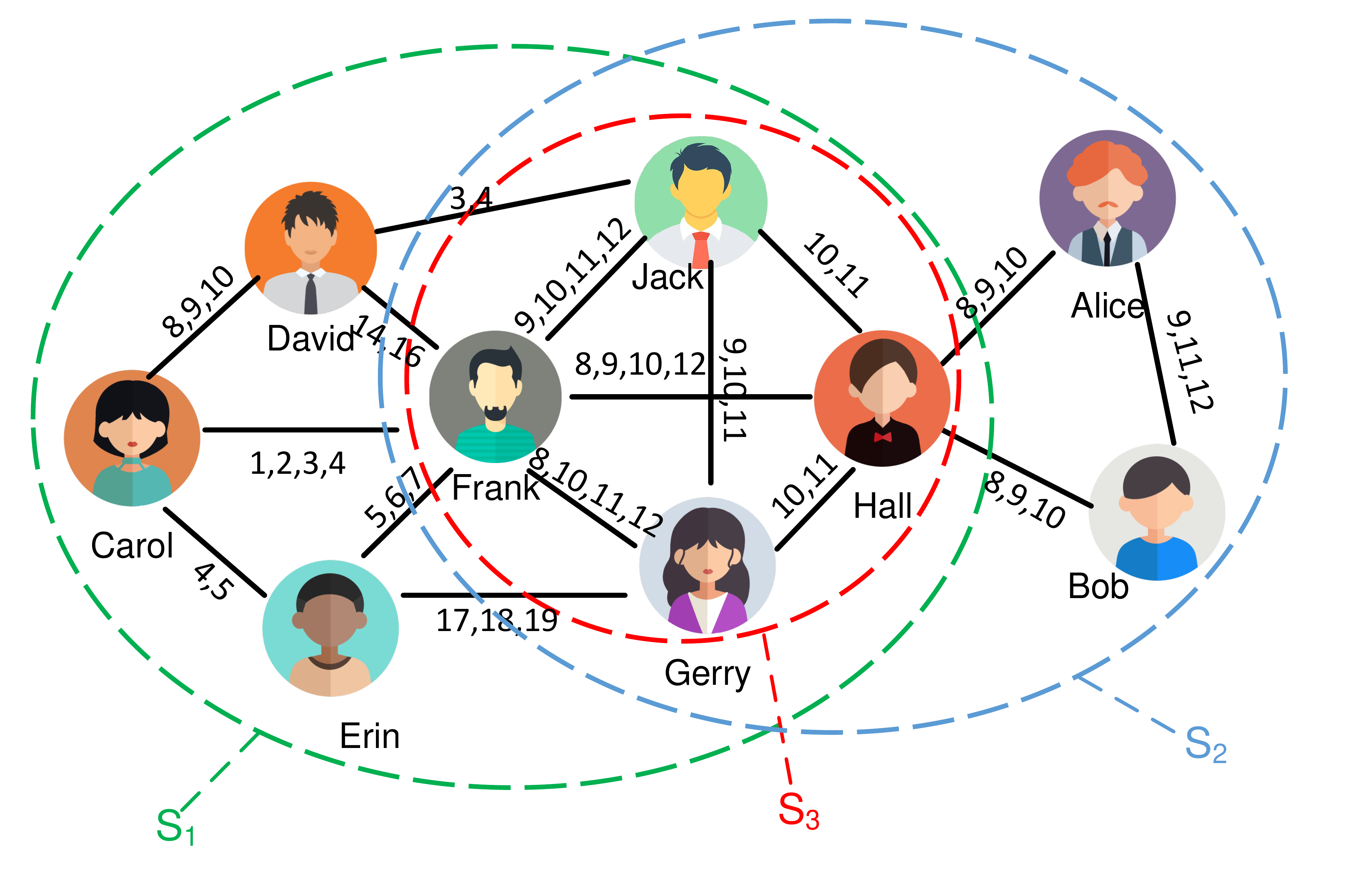}\vspace*{-0.3cm}
	\caption{Motivation Example}
	\label{fig:motivation} \vspace*{-0.7cm}
\end{figure}

In this paper, we study a novel problem named    \emph{query-centered} temporal community search (\emph{QTCS}), which aims to identify a community such that the theme of this  community revolvers around the query vertex.  Intuitively, the vertices $S_3$ included in the red circle may be the target community. This is because \textit{Hall}, \textit{Jack} and \textit{Gerry}  trade with \textit{Frank} frequently  at time 8-12. Thus,  \textit{Frank}-centered \emph{QTCS} may be a criminal gang headed by \textit{Frank} \cite{levi2006money}. Besides, on temporal collaboration networks, \emph{QTCS} may be the research group initiated by the given query vertex. Therefore, detecting \emph{QTCS} enables us to reveal some interesting applications.

There are  some studies on temporal community detection that can also solve temporal community search with simple adjustments. For instance, they first find all possible communities by the predefined criteria \cite{DBLP:conf/icde/LiSQYD18, DBLP:journals/pvldb/ChuZYWP19, DBLP:conf/icde/QinLWQCY19}, and then select the target community containing the query vertex from these communities. 
Unfortunately, existing  temporal community search methods suffer from two major defects in terms of \emph{QTCS}.   First,  the vertices in the target community should be closely related to the query vertex in community search problem \cite{ DBLP:journals/pvldb/WuJLZ15, DBLP:journals/pvldb/HuangLYC15}. However,  all existing methods do not consider the temporal proximity between the query vertex and other vertices but simply require the result to include the query vertex. Thus, they may find many temporal irrelevant vertices to the query vertex for satisfying their objective functions (e.g., structural and temporal cohesiveness), resulting in the query vertex being marginalized (Section \ref{sec:experiments}).  We refer to such temporal irrelevant vertices as \qds (Section \ref{subsection:query_drift}). Second, most existing methods are NP-hard, incurring either prohibitively high costs for exact solutions or severely compromised results for approximate/heuristic algorithms. For example, \cite{DBLP:conf/icde/LiSQYD18, DBLP:conf/bigdataconf/TsalouchidouBB20} cannot obtain the results within two days in our experiments, which is clearly impractical for online interactive graph explorations.

\stitle{\textbf{Solutions.}} For the first defect, we extend the well-known proximity metric Personalized PageRank to Time-Constrained Personalized PageRank (\textit{TPPR}) by integrating temporal constraint, which can more properly capture the temporal proximity between the query vertex and other vertices. Equipped with \textit{TPPR}, we then propose $\beta$-temporal proximity core to model the preference of user-specified query vertex by combining seamlessly the temporal proximity and structural cohesiveness.  As a result, by maximizing the value of $\beta$ of a $\beta$-temporal proximity core, we can ensure that these \qds are removed and the query vertex is \emph{centered} in the detected community (Section \ref{subsection:query_drift} and \ref{sec:experiments}). Besides, $\beta$-temporal proximity core has only one parameter (i.e., the teleportation probability $\alpha$ in Section \ref{subsec:problem}) like \cite{DBLP:conf/bigdataconf/TsalouchidouBB20, DBLP:journals/tkdd/GalimbertiCBBCG21}, which is user-friendly.  In contrast, \cite{DBLP:conf/icde/LiSQYD18, DBLP:journals/pvldb/ChuZYWP19, DBLP:conf/icde/QinLWQCY19} have many parameters which are heavily dependent on datasets and are often hard-to-tune. For the second defect, we propose two efficient algorithms. Specifically, we first develop an exact and near-linear time greedy removing algorithm called  $\gr$.  $\gr$ first computes \textit{TPPR} for every vertex and then greedily selects out the vertices with the minimum query-biased temporal degree (Definition \ref{def:degree}). To compute \textit{TPPR}, a straightforward solution is to apply the traditional power iteration method \cite{Page1999ThePC}, but it requires prohibitively high time costs. Based on in-depth observations, we propose a non-trivial dynamic programming approach to compute \textit{TPPR} for every vertex. To further boost efficiency, we then develop an approximate two-stage local search algorithm named $\ls$  with several powerful pruning techniques. The high-level idea of $\ls$ is to adopt the expanding and reducing paradigm. The expanding stage directly starts from the query vertex and progressively adds qualified vertices with proposed bound-based pruning techniques. Until it touches the termination condition with theoretical guarantees.  The reducing stage iteratively removes unqualified vertices to satisfy the approximation ratio. Our main contributions are listed as follows:

\begin{itemize} [leftmargin=8pt, topsep=0pt]
	\item  \underline{Novel Model.} We formulate the \emph{query-centered} temporal community search (\emph{QTCS}) problem in Section \ref{sec:pro}. To the best of our knowledge, the problem has never been studied in the literature.

	\item \underline{Theoretical Analysis.} We introduce the concept of \emph{query-drifted vertices} to analyze the limitations of the existing	solutions in Section \ref{sec:analysis}. We show that most existing
	methods contain many \emph{query-drifted vertices}, resulting in the query vertex being marginalized. However, our proposed model can circumvent these \emph{query-drifted vertices}, resulting in that the query vertex is \emph{centered} in the target community.
	
	\item \underline{Efficient Algorithms.} To solve our problem quickly, we propose two practical algorithms in Section \ref{sec:gr} and \ref{sec:ls}. One of them is the exact greedy removing algorithm $\gr$ with near-linear time complexity. The other is the approximate two-stage local search algorithm $\ls$.

	\item \underline{Comprehensive Experiments.} Experiments (Section \ref{sec:experiments}) on eight datasets with different domains and sizes demonstrate our proposed solutions indeed are more
	efficient, scalable, and effective than the existing nine competitors. For instance, on a million-vertex DBLP dataset, $\ls$ consumes about 13 seconds while $\gr$ takes 47 seconds. However, some competitors cannot get the results within two days on some datasets. Our model is much denser and more separable in terms of temporal feature than the competitors. Our model can find high-quality \emph{query-centered} temporal communities by eliminating \qds which the competitors cannot identify.
\end{itemize}

%
\section{Problem Formulation} \label{sec:pro}
   \begin{figure*}[t!]
	\centering
	\subfigure[Temporal graph $\mathcal{G}$]
	{\includegraphics [width=0.17\textwidth,height=0.12\textwidth]
		 {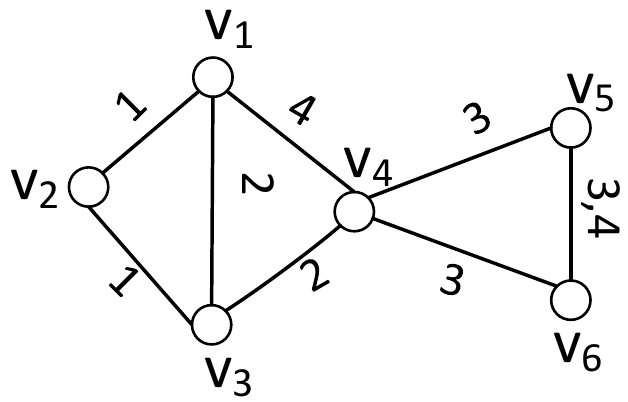}
		\label{fig:1(a)}
	}
	\subfigure[De-temporal graph $G$]
	{\includegraphics [width=0.17\textwidth,height=0.12\textwidth] {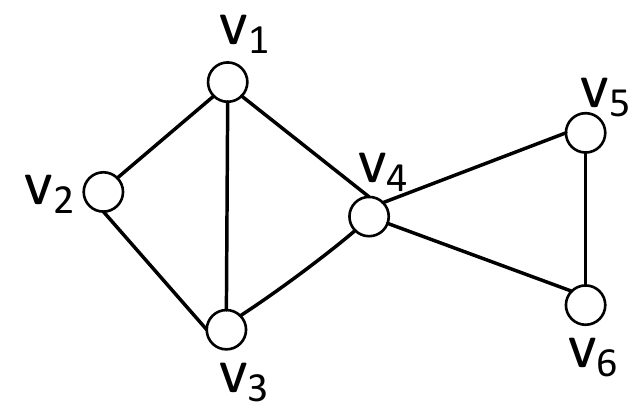}
		\label{fig:b}
	}
	\subfigure[Edge stream representation for $\mathcal{G}$]
	{\includegraphics [width=0.3\textwidth,height=0.11\textwidth]{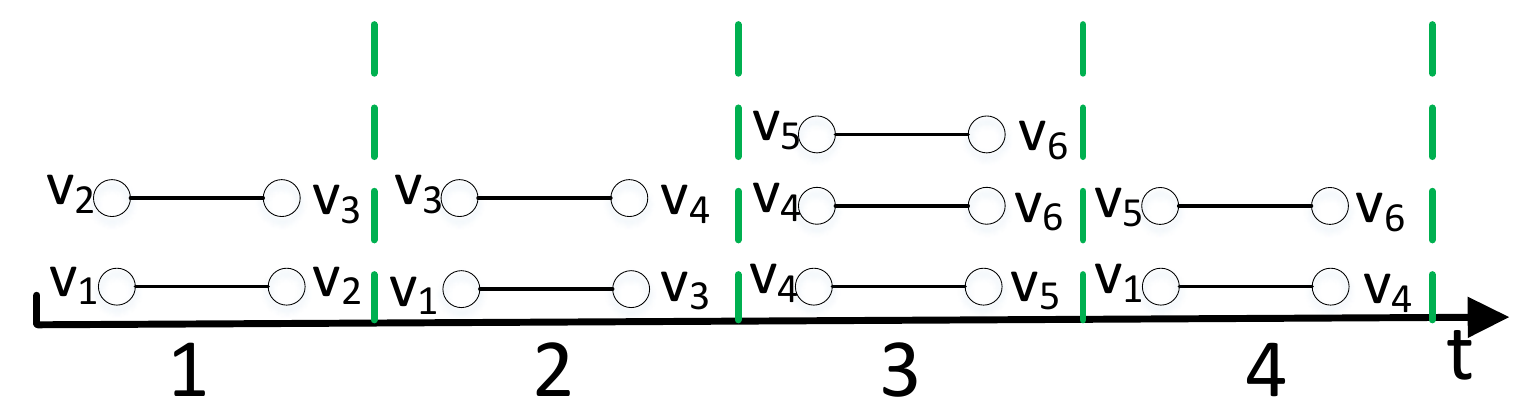}
		\label{fig:edge_stream}
	}
	\subfigure[State transition of $\mathcal{G}$]
	{\includegraphics [width=0.31\textwidth,height=0.12\textwidth]{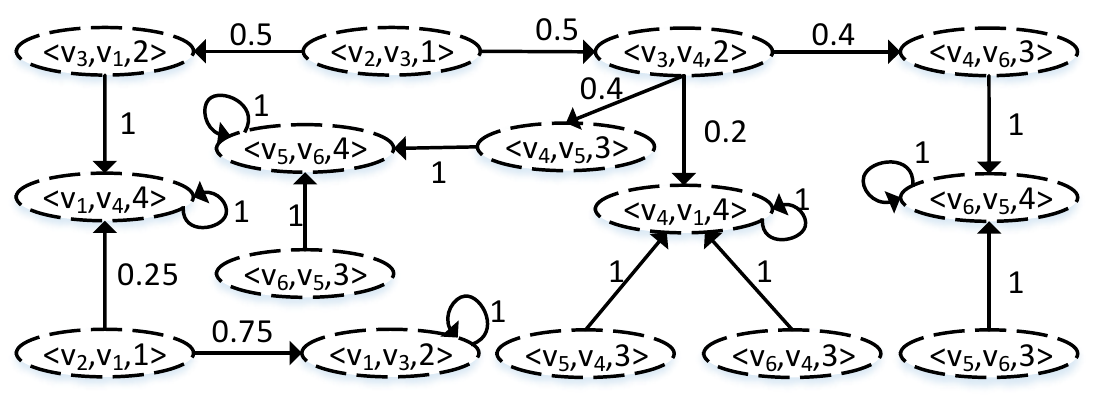}
		\label{fig:state}}
	\vspace*{-0.25cm}
	\caption{De-temporal graph, Edge stream and State transition of an example temporal graph} \vspace*{-0.3cm}
	\label{fig:example}  \vspace{-0.1cm}
\end{figure*}

\subsection{Notations}\label{subsec:pre}

We use $\mathcal{G}(V,\mathcal{E})$ to denote an undirected temporal graph, in which $V$ (resp. $\mathcal{E}$) indicates the vertex set (resp. the temporal edge set). Let $(u,v,t)$ $\in \mathcal{E}$ be any temporal edge which  indicates an interaction was made between $u$ and $v$ at timestamp $t$. Note that $(u,v,t_1)$ and $(u,v,t_2)$ are regarded as two different temporal edges if $t_1 \neq t_2$. That is, $u$ and $v$ may be connected at different timestamps.  Let $|V|=n$ and $|\mathcal{E}|=m$ be the number of vertices and the number of temporal edges, respectively. For example, Fig. \ref{fig:example}(a) illustrates a sample temporal graph $\mathcal{G}$ with 6 vertices and 9 temporal edges.   More generally, temporal graphs can also be modeled as edge stream \cite{DBLP:journals/corr/Holme15a}, which is a  sequence of all temporal edges ordered by timestamps. Fig. \ref{fig:example}(c) shows the edge stream representation for Fig. \ref{fig:example}(a). We use $G(V,E)$ to denote the de-temporal graph of $\mathcal{G}$, in which $E=\{(u,v)|(u,v,t)\in \mathcal{E}\}$ and $|E|=\bar{m}$. That is, $G$ is a static graph that ignores the timestamps of $\mathcal{G}$.  Fig. \ref{fig:example}(b) shows a de-temporal graph $G$. Let $G_{S}=(S,E_S)$ be the subgraph induced by  $S$ if $S \subseteq V$ and $E_S= \{(u,v)\in E| u,v \in S\}$. Let $N_{S}(v)=\{u\in S|(u,v)\in E\}$ be the neighbors of $v$  in $S$. 

\subsection{Time-Constrained Personalized PageRank}\label{subsec:tppr}
Recall that Personalized PageRank (\textit{PPR}) is a widely adopted proximity metric in  network analysis, which can measure the structural proximity between two vertices \cite{Page1999ThePC, DBLP:conf/wsdm/LofgrenBG16, DBLP:conf/sigmod/WeiHX0SW18}. Essentially, \textit{PPR} models a random walk process that has a unique stationary distribution and solves the following equation\footnote{We use lowercase letters to denote scalars (e.g., $\alpha$), bold lowercase letters to denote row vectors (e.g., \textbf{s} or \textbf{x}), bold capital letters to denote matrices (e.g., \textbf{W} or \textbf{P}).}:
\begin{equation}\label{eq:ppr}
\textbf{x}=\alpha \textbf{s} +(1-\alpha)\textbf{x}\textbf{W}
 \end{equation}
\textbf{x} is the stationary \textit{PPR} distribution, $\alpha$ is the teleportation probability, and $\textbf{s}$\footnote{$\textbf{s}$ is a distribute in the original \textit{PPR}. That is, multiple non-zero entries are allowed in $\textbf{s}$. When $\textbf{s}$ is a one-hot vector, \textit{PPR} is also called random walk with restart \cite{DBLP:conf/kdd/TongF06}.} is a start distribution named the teleportation vector. $\textbf{W}$ is the state transition matrix, where each entry $W_{vu}$ indicates the transition probability from vertex $v$ to vertex $u$.

Although \textit{PPR} has achieved significant success in static networks, the research on how to design effective temporal proximity is not sufficient (Section \ref{sec:relate}).  Thus, to preserve the rich temporal information in \textit{PPR}, we face the following two challenges. First, how to design an effective walk in temporal networks. In real-world scenarios, the information transmission follows the time-order and asynchronous interaction behaviors. For example, $(v_2,v_1,v_4,v_5)$ is a walk in Fig. \ref{fig:example} (b), but $(v_2,v_1,v_4,v_5)$ in Fig. \ref{fig:example} (a) is clearly problematic with respect to (w.r.t.) time-order. Second, how to design an effective state transition matrix in temporal networks. Intuitively, the preference of an interaction decreases as time goes by \cite{DBLP:conf/icde/XieTSBH15} (i.e., the tie between two vertices becomes stronger if the interaction between them happens in a more current time). For instance, in Fig. \ref{fig:1(a)}, when the walker walks to $v_1$ through temporal edge $(v_2,v_1,1)$, the probability that the walker chooses $(v_1,v_3,2)$  to walk is higher than $(v_1,v_4,4)$. But the traditional state transition matrix $\textbf{W}$ cannot distinguish such edge relationships. Additionally, more than an interaction may occur between two vertices in temporal networks. So, \textbf{W}  is not applicable for modeling temporal proximity.

For ease of description, we convert each temporal edge to two \emph{ordered} temporal edges of opposing directions. For example, $(u,v,t)$ converts to $<u,v,t>$ and $<v,u,t>$\footnote{To avoid confusion, we use $( )$ and $< >$ represent the temporal edge and \emph{ordered} temporal edge, respectively.}. Moreover, we use $\vec{e}$ to denote any \emph{ordered} temporal edge.  Let $head(\vec{e})$, $tail(\vec{e})$ and $time(\vec{e})$ be the head vertex, tail vertex and timestamp of $\vec{e}$,  $N^{>}(\vec{e})=\{<u,v,t>|u=tail(\vec{e}), t>time(\vec{e})\}$, $\vec{e}^{out}_u=\{\vec{e}|head(\vec{e})=u\}$, $\vec{e}^{in}_u=\{\vec{e}|tail(\vec{e})=u\}$. Based on these symbols, we present the following definition to overcome the challenges discussed above.
\begin{definition} \label{def:ttp}
	\textbf{[Temporal transition matrix]}
	Given a temporal graph $\mathcal{G}(V,\mathcal{E})$, the temporal transition matrix \textbf{P} $\in R^{m\times m}$ on two \emph{ordered} temporal edges $\vec{e}_i$ and $\vec{e}_j$ can be computed as
	\begin{equation} \label{eq:ttm}
	P(\vec{e}_i \to \vec{e}_j)=
	\begin{cases}
	\frac{g(time(\vec{e}_j)-time(\vec{e}_i))}{\sum\limits_ {\vec{e}_k\in N^{>}(\vec{e}_i)} g(time(\vec{e}_k)-time(\vec{e}_i))}, & \vec{e}_j \in N^{>}(\vec{e}_i) \\
		0, &\vec{e}_j \notin N^{>}(\vec{e}_i) \\
		\end{cases}
	\end{equation}
\end{definition}
$P(\vec{e}_i \to \vec{e}_j)$ indicates the temporal transition probability from $\vec{e}_i$ to $\vec{e}_j$ and $g(a-b)$ is a decaying function to capture the dependency between interactions. Here,  we apply a linear decaying function $g(a-b)=\frac{1}{a-b}$, which is often used in temporal settings \cite{DBLP:conf/sdm/LaiWY13, DBLP:conf/dasfaa/WuZCY17}. Our proposed solutions can trivially accommodate different functions (e.g.,exponential  or logarithmic function). In the case that $\sum_{\vec{e}_{j}}P(\vec{e}_i \to \vec{e}_j)=0$, we call $\vec{e}_i$ a dangling state as \cite{Page1999ThePC, DBLP:conf/wsdm/LofgrenBG16, DBLP:conf/sigmod/WeiHX0SW18}. For simplicity, we set $P(\vec{e}_i \to \vec{e}_i)=1$ to handle these dangling states. By doing so, we can guarantee that \textbf{P} is a stochastic matrix, that is,  $\sum_{\vec{e}_{j}}P(\vec{e}_i \to \vec{e}_j)=1$ for any $\vec{e}_i$ holds.  Note that  \textbf{P} is constructed only once for each dataset to support different queries. Fig. \ref{fig:state} shows the state transition for Fig. \ref{fig:1(a)}, in which we ignore the isolated \emph{ordered} temporal edges.

\begin{definition} \label{def:tppr}
	\textbf{[Time-Constrained Personalized PageRank]} Given a temporal graph $\mathcal{G}(V,\mathcal{E})$, a query vertex $q$ and a teleportation probability $\alpha$, the Time-Constrained Personalized PageRank of  vertex $u$ w.r.t. $q$ is denoted by $tppr_q(u)= \sum_{\vec{e}\in \vec{e}^{in}_u}\widetilde{ppr}(\alpha,\widetilde{\chi_{q}})(\vec{e})$.
	\begin{equation} \label{eq:tppr}
	\widetilde{ppr}(\alpha,\widetilde{\chi_{q}})=\alpha \widetilde{\chi_{q}} +(1-\alpha) \widetilde{ppr}(\alpha,\widetilde{\chi_{q}}) \textbf{P}
	\end{equation}
$\widetilde{\chi_{q}} \in R^{1 \times m}$ is a vector with $\widetilde{\chi_{q}}(\vec{e})=1/|\vec{e}^{out}_q|$ for $\vec{e} \in \vec{e}^{out}_q$.
\end{definition}

We explain the intuition behind the Definition \ref{def:tppr} as follows: (i) Equation \ref{eq:tppr} is also  a random walk process analogous to Equation \ref{eq:ppr}, except that each state in Equation \ref{eq:tppr} is an \emph{ordered} temporal edge instead of a vertex. Thus, $\widetilde{ppr}(\alpha,\widetilde{\chi_{q}})(\vec{e})$ reflects the temporal proximity of each \emph{ordered} temporal edge $\vec{e}$ w.r.t. $q$. (ii) Since \textbf{P} is a stochastic matrix, $\widetilde{ppr}(\alpha,\widetilde{\chi_{q}})$ is a probability distribution \cite{DBLP:conf/wsdm/LofgrenBG16, DBLP:conf/sigmod/WeiHX0SW18}. Thus, $\sum_{u} tppr_q(u)=\sum_{u}\sum_{\vec{e}\in \vec{e}^{in}_u}\widetilde{ppr}(\alpha,\widetilde{\chi_{q}})(\vec{e})=1$. That is, $tppr_q$ is also a probability distribution. So, it is reasonable to use $tppr_q(u)$ to describe temporal proximity of $u$ w.r.t. $q$. For simplicity, we use $tppr(u)$ to denote $tppr_q(u)$ if the context is clear.

\stitle{\textbf{Remark.}} Although ``new" edges (i.e., those associated with the largest timestamps) do not have a change to connect to any \emph{ordered} temporal edge by Definition \ref{def:ttp},  they have an $\alpha$ probability to jump back to $\widetilde{\chi_{q}}$  by Definition \ref{def:tppr}. Thus, ``new" edges does not get trapped in self-loops (see \ref{subsec:com_tppr} for details). 
 
\subsection{Problem Statement}\label{subsec:problem}
As mentioned above, the Time-Constrained Personalized PageRank (\textit{TPPR}) can be used to measure the temporal proximity between the query vertex and other vertices. Therefore, a naive way is to identify a connected subgraph containing the query vertex and has optimal \textit{TPPR} score. Unfortunately, it ignores the fact that a perfect temporal community should also have strong structural cohesiveness. Thus, another potential approach is to adopt the cohesive subgraph model $k$-core to model the structural cohesiveness of the community \cite{DBLP:conf/kdd/SozioG10, DBLP:conf/sigmod/CuiXWW14}. We call this model \textit{QTCS\_{Baseline}}, which serves as a baseline model for experimental comparison in Section \ref{sec:experiments}.

\begin{definition} \label{def:basemodel}
\textbf{[\textit{QTCS\_{Baseline}}]}
For a temporal graph $\mathcal{G}(V,\mathcal{E})$, a teleportation probability $\alpha$, a query vertex $q$ and a parameter $k$, \textit{QTCS\_{Baseline}} finds a vertex set $S$, satisfying (i) $q \in S$;
(ii) $G_S$ is a connected $k$-core (i.e., $|N_{S}(v)|\geq k$ for any $v \in S$); (iii) $\min\{tppr(u)|u\in S\}$ is maximum.
\end{definition}

However, \textit{QTCS\_{Baseline}} considers separately structural cohesiveness and temporal proximity, resulting in that it may identify a sub-optimal result (see Section \ref{sec:experiments} for details). For example, \textit{QTCS\_{Baseline}} may remove many vertices with good temporal proximity under the structural constraints of the $k$-core. Conversely, it may contain many vertices with poor temporal proximity to satisfy the structural cohesiveness. Thus, we propose the following novel metrics to combine seamlessly structural cohesiveness and temporal proximity.

\begin{definition}
	\textbf{[Query-biased temporal degree]} \label{def:degree}
	 Given a vertex set $C$, the query-biased temporal degree of vertex  $u$ w.r.t. $C$ is defined as	$\rho_{C}(u)=\sum_{v\in N_C(u)} tppr(v)$.
\end{definition}

By Definition \ref{def:degree}, we know that the query-biased temporal degree measures the quality of neighbors rather than quantity. For example,  $u$ has $10^5$ neighbors and each neighbor has a \textit{TPPR} value of $10^{-10}$. As a result, the query-biased degree of $u$ is $10^{-5}$. On the other hand, suppose $u$ has only 10 neighbors, but each neighbor has a \textit{TPPR} value of $10^{-2}$. In this case, the query-biased degree of $u$ is $10^{-1}$. So,  the higher the query-biased temporal degree of $u$, $u$ may have more neighbors that are closely related to the query vertex.
\begin{definition}
	\textbf{[$\beta$-temporal proximity core]} \label{def:tpcore}
	 The $\beta$-temporal proximity core is a vertex set $C$, satisfying (i) $G_C$ is connected; (ii) $\min\{\rho_{C}(u)|u \in C\}\geq \beta$.
\end{definition}

By maximizing the value of $\beta$ of a $\beta$-temporal proximity core, we can detect a community in which each vertex of the community has many neighbors that are closely related to the query vertex. As a result, it ensures that  the detected  community is  very related to the query vertex, which is easier to interpret why the community is formed (see case studies of Section \ref{sec:experiments} for details).  

\stitle{Problem 1 (\textit{QTCS}).} Given a temporal graph $\mathcal{G}(V,\mathcal{E})$, a teleportation probability $\alpha$ and a query vertex $q$,  \emph{\textbf{q}uery-centered} \textbf{t}emporal  \textbf{c}ommunity \textbf{s}earch aims to identify a vertex set $C$, satisfying (i) $q \in C$;
(ii) $C$ is a $\beta$-temporal core with the largest $\beta$; (iii) there does not exist another community $C'\supseteq C$ meets the above conditions.

\stitle{\textbf{Remark.}}  Our proposed model \textit{QTCS} is asymmetric. Namely, \textit{QTCS} depends on query nodes and different query nodes return different communities. For example, a user X is in a Trump-centered circle (i.e. the theme of this circle revolves around Trump), but Trump is not in X-centered circle. 
\section{Problem Analysis} \label{sec:analysis}
\subsection{Comparison with \textit{CSM}} \label{subsec:csm}
The community search by maximizing the minimum degree (\textit{CSM}) \cite{DBLP:conf/kdd/SozioG10, DBLP:conf/sigmod/CuiXWW14} does have many similarities with our methods, but there are also pivotal differences.  First, a key concept in \textit{CSM} is the degree of each vertex. So, we can simply adapt the \textit{CSM} model to solve the temporal community search problem by using a concept of temporal degree. Specifically, the temporal degree of a vertex $u$ is the number of temporal edges that $u$ participates in. Such a simple adaption, however, has some serious defects. For example, the temporal degree is a local metric used to measure the absolute importance of vertices in the network. However, for the community search problem, it may be more appropriate to consider the relative importance between the query vertex and other vertices \cite{ DBLP:journals/pvldb/WuJLZ15, DBLP:journals/pvldb/HuangLYC15}. Unlike \textit{CSM}, our solution is based on a new definition of query-biased temporal degree (Definition \ref{def:degree}) which can capture the relative importance for temporal community search. Second, in \textit{CSM}, the (temporal) degree of a vertex can be obtained by simply checking the number of neighbors. However, the proposed query-biased temporal degree is a global metric, needing more complicated techniques to calculate it. Finally, the technologies of \textit{CSM} are very hard to handle massive temporal networks. This is because their technologies are tailored to static networks. Even if a temporal network can be approximately transformed into a static network by existing methods, the size of the static network is often much larger than the original temporal network  (e.g., \cite{DBLP:journals/pvldb/WuCHKLX14}), resulting in prohibitively computational costs. However, our technologies are directly oriented to temporal networks which are very efficient as shown in our experiments. Besides, we have also empirically demonstrated the superiority of our approach by comparing it with \textit{CSM} in terms of community quality (Section \ref{sec:experiments}).

\subsection{Query Drift Issue} \label{subsection:query_drift}
Here, we want to prove that most existing methods may identify many temporal irrelevant vertices to the query vertex $q$ for optimizing their objective functions. For simplicity, we assume that  $f(.)$ is an objective function, and the larger the value of $f(C)$, the better the quality of the community $C$. Let $C^*(f)$  be any optimal community based on $f(.)$, and $C_q$  be any community containing  $q$.

\begin{definition} \label{def:query_drift} Given an objective function $f(.)$, we say $C^*(f)-C_q (\neq \emptyset)$ is \qds and $f(.)$ suffers from the \qd issue  iff the following two conditions hold: (i) $f(C^*(f)\cup C_q)\geq f(C_q)$; (ii) $\min\{\rho_{C^*(f)\cup C_q}(u)|u\in C^*(f)\cup C_q \} \leq \min\{\rho_{C_q}(u)|u\in C_q \}$.
\end{definition}
By Definition \ref{def:query_drift}, we know that adding \qds $C^*(f)-C_q$ to $C_q$ can improve its objective function score (i.e., condition (i)), but reduce the query-biased temporal degree (i.e., condition (ii)).  In other words, if an objective function $f(.)$  finds many temporal irrelevant vertices to the query vertex (i.e., condition (ii))  for optimizing $f(.)$ (i.e., condition (i)), then we say that $f(.)$ suffers from the \qd issue.

\stitle{\textbf{Remark.}} Surprisingly, the condition (i) of Definition \ref{def:query_drift} is also
called the free rider issue, which has been widely considered in static
community search \cite{ DBLP:journals/pvldb/WuJLZ15, DBLP:journals/pvldb/HuangLYC15}. Specifically, if an objective function $f(.)$ has the free rider issue (i.e., condition (i)), $f(.)$-based community search methods tend to include some redundant vertices (e.g., $C^*(f)-C_q$) in the detected community. However, the free rider issue cannot measure the temporal proximity between the query vertex and the redundant vertices. Thus, we introduce condition (ii) to  further measure how these redundant vertices affect the temporal proximity of the detected community. As a result,  our proposed \qd issue is more strict than the free rider issue. That is, if $f(.)$ suffers from the \qd issue, then $f(.)$ must  have the free rider issue, and vice versa is not necessarily true.

\begin{proposition}\label{prop:our_model}
Given a temporal graph $\mathcal{G}$ and a query vertex $q$, \textit{QTCS} does not suffer from the query drift issue.
	
\end{proposition}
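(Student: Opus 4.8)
Proof proposal.

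The plan is to instantiate the abstract objective $f(\cdot)$ of Definition \ref{def:query_drift} by the \textit{QTCS} objective, namely $f(C)=\min\{\rho_C(u)\mid u\in C\}$ ranging over connected vertex sets $C$ with $q\in C$, and then to show that the alleged set of \qds is in fact always empty. Writing $\beta^{*}=\max_C f(C)$, conditions (ii)--(iii) of Problem~1 say that $C^{*}(f)$ is a \emph{maximal} $\beta^{*}$-temporal proximity core containing $q$ (its uniqueness will follow from the union lemma below), and any competing community $C_q$ with $q\in C_q$ is itself a maximal $\beta_q$-temporal proximity core for $\beta_q:=f(C_q)$, which satisfies $\beta_q\le\beta^{*}$ by definition of $\beta^{*}$. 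The target is therefore the inclusion $C^{*}(f)\subseteq C_q$ for every such $C_q$; once this holds we get $C^{*}(f)-C_q=\emptyset$, and since Definition~\ref{def:query_drift} requires $C^{*}(f)-C_q\neq\emptyset$ for the query drift issue to be present, the proposition follows immediately.

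The argument rests on two structural facts that I would establish first. (a) \emph{Monotonicity of $\rho$}: since $tppr(\cdot)\ge 0$ and $N_{C}(u)\subseteq N_{C'}(u)$ whenever $C\subseteq C'$, Definition~\ref{def:degree} gives $\rho_{C}(u)\le\rho_{C'}(u)$ for all $u\in C$. (b) \emph{Union-closure}: if $C_1,C_2$ are $\beta$-temporal proximity cores (Definition~\ref{def:tpcore}) both containing $q$, then $G_{C_1\cup C_2}$ is connected because $C_1$ and $C_2$ share $q$, and by (a) every $u\in C_1\cup C_2$ has $\rho_{C_1\cup C_2}(u)\ge\rho_{C_i}(u)\ge\beta$; hence $C_1\cup C_2$ is again a $\beta$-temporal proximity core containing $q$. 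Consequently, for each threshold $\beta$ there is a \emph{unique} maximal $\beta$-temporal proximity core containing $q$, which I denote $M_\beta$. Finally, (c) \emph{nestedness}: any $\beta_2$-core is also a $\beta_1$-core when $\beta_1\le\beta_2$ (its minimum query-biased temporal degree is $\ge\beta_2\ge\beta_1$), so $M_{\beta_2}\subseteq M_{\beta_1}$.

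Putting these together finishes the proof: $C^{*}(f)=M_{\beta^{*}}$ and $C_q=M_{\beta_q}$ with $\beta_q\le\beta^{*}$, so by (c) $C^{*}(f)=M_{\beta^{*}}\subseteq M_{\beta_q}=C_q$, i.e.\ $C^{*}(f)-C_q=\emptyset$, and \textit{QTCS} does not suffer from the query drift issue. As a cross-check one can also run the contradiction-style route: assuming $C^{*}(f)-C_q\neq\emptyset$ together with conditions (i)--(ii), monotonicity forces $f(C^{*}(f)\cup C_q)\ge f(C_q)$ (so \textit{QTCS} has the free-rider property, consistent with the Remark), whence (i)--(ii) collapse to $f(C^{*}(f)\cup C_q)=f(C_q)$; but $C^{*}(f)\cup C_q$ is a connected core containing $q$, so it equals $M_{f(C_q)}\supseteq C^{*}(f)$, i.e.\ $C^{*}(f)\subseteq C_q$, contradicting $C^{*}(f)-C_q\neq\emptyset$.

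I expect the main obstacle to be the union-closure lemma (b): it is precisely what upgrades the statement ``$C^{*}(f)$ is \emph{a} query-centered community'' to ``$C^{*}(f)$ is contained in \emph{every} query-centered community'', and it is what gives meaning to ``the community with the largest $\beta$''. A secondary point requiring care is the reading of ``community'' in Definition~\ref{def:query_drift}: the inclusion $C^{*}(f)\subseteq C_q$ genuinely uses that $C_q$ is a \emph{maximal} $\beta_q$-temporal proximity core (Problem~1(iii)); for an arbitrary non-maximal $\beta_q$-core this inclusion can fail, so I would make that hypothesis explicit in the proof.
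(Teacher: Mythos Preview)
Your proposal is correct, but it takes a noticeably more structural route than the paper. The paper's argument is a two-line contradiction: it sets $C_q=S^{*}$ (the \textit{QTCS} solution), observes that for \textit{QTCS} one has $f(C)=\min\{\rho_C(u)\mid u\in C\}$ so that condition~(ii) of Definition~\ref{def:query_drift} is literally $f(C^{*}(f)\cup S^{*})\le f(S^{*})$, combines this with~(i) to get $f(C^{*}(f)\cup S^{*})=f(S^{*})=\beta^{*}$, and then notes that $C^{*}(f)\cup S^{*}\supsetneq S^{*}$ contradicts the maximality clause~(iii) of Problem~1. No core lattice, no $M_\beta$, no nestedness. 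Your main route instead builds the hierarchy $\{M_\beta\}$ and proves the stronger containment $C^{*}(f)\subseteq C_q$ outright; this gives a clearer structural picture (uniqueness of the QTCS output, nested cores) at the cost of more machinery, and it makes explicit the hypothesis that $C_q$ must itself be a \emph{maximal} core---a point you rightly flag and that the paper's formulation leaves implicit by taking $C_q=S^{*}$.

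One small slip in your cross-check route: from $f(C^{*}(f)\cup C_q)=f(C_q)$ you only get $C^{*}(f)\cup C_q\subseteq M_{f(C_q)}$, not equality; the conclusion $C^{*}(f)\subseteq C_q$ then still needs $C_q=M_{f(C_q)}$, i.e.\ the same maximality assumption, so the cross-check does not actually bypass it. This does not affect your main argument.
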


\begin{proof}
	Let $S^{*}$ be the solution for the \textit{QTCS} problem, and thus $q\in S^{*}$. The Proposition can be proved by contradiction. Assume that  there is a vertex set $S$ such that $f(S\cup S^*)\geq f(S^*)$ and $\min\{\rho_{S\cup S^{*}}(u)|u\in S\cup S^{*} \} \leq \min\{\rho_{S^{*}}(u)|u\in S^{*}\}$. By Definition \ref{def:tpcore} and Problem 1, we have $f(C)=\min\{\rho_{C}(u)|u\in C \}$ for \textit{QTCS}. Thus, $f(S\cup S^*)\geq f(S^*)$ is equivalent to $\min\{\rho_{S\cup S^{*}}(u)|u\in S\cup S^{*} \} \geq \min\{\rho_{S^{*}}(u)|u\in S^{*}\}$. So, $\min\{\rho_{S\cup S^{*}}(u)|u\in S\cup S^{*} \} =\min\{\rho_{S^{*}}(u)|u\in S^{*}\}$. As a result, (i) $q \in S\cup S^{*}$; (ii) $S\cup S^{*}$ is a $\beta$-temporal core with the largest $\beta$. This contradicts the maximality of $S^{*}$ (i.e., condition (iii) of Problem 1). Thus, there does not exit \emph{query-drifted vertices} $S$ for \textit{QTCS}.
\end{proof}

\begin{proposition}\label{prop:other_model}
 Given a temporal graph $\mathcal{G}$ and a query vertex $q$,  \cite{DBLP:journals/pvldb/ChuZYWP19, DBLP:journals/tkdd/GalimbertiCBBCG21, DBLP:conf/icde/LiSQYD18, DBLP:conf/icde/QinLWQCY19} suffer from the query drift issue.
\end{proposition}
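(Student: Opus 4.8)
The plan is to unpack Definition~\ref{def:query_drift} and verify its two conditions for each of the four competitors in turn. Concretely, for a fixed objective $f(\cdot)$ I need to produce a query community $C_q \ni q$ with $C^{*}(f)-C_q \neq \emptyset$ such that (i)~$f(C^{*}(f)\cup C_q)\ge f(C_q)$ and (ii)~$\min\{\rho_{C^{*}(f)\cup C_q}(u):u\in C^{*}(f)\cup C_q\}\le \min\{\rho_{C_q}(u):u\in C_q\}$. The observation that makes this uniform is that all four objectives are \emph{query-oblivious}: the persistent-core model of \cite{DBLP:conf/icde/LiSQYD18}, the periodic-core model of \cite{DBLP:conf/icde/QinLWQCY19}, the density-bursting subgraph of \cite{DBLP:journals/pvldb/ChuZYWP19}, and the span-core decomposition of \cite{DBLP:journals/tkdd/GalimbertiCBBCG21} all score a candidate $C$ purely from its induced temporal structure, never from temporal proximity to $q$. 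Hence condition~(i) is exactly the classical free-rider effect, and the argument splits cleanly into ``(i) = free rider'' and ``(ii) = a temporal-proximity fact''.

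For condition~(i) I would treat the two families separately. For the $k$-core--style models \cite{DBLP:conf/icde/LiSQYD18,DBLP:conf/icde/QinLWQCY19}, the target is a \emph{maximal} feasible community, so $f$ is non-decreasing under feasible supersets: if $C_q$ and $C^{*}(f)$ are both $(\theta,\tau)$-persistent (resp.\ $\sigma$-periodic) $k$-cores, then a connected feasible community spanning them has $f$-value at least $f(C_q)$, and the free-rider effect is inherited from the static analysis of \cite{DBLP:journals/pvldb/WuJLZ15,DBLP:journals/pvldb/HuangLYC15} transported to the temporal setting; the only care needed is to keep $C^{*}(f)\cup C_q$ feasible, which I would arrange by joining the two pieces with a single edge carrying an innocuous timestamp. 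For the ratio/density objectives of \cite{DBLP:journals/pvldb/ChuZYWP19,DBLP:journals/tkdd/GalimbertiCBBCG21}, monotonicity under union is not automatic, so instead I would exhibit an explicit witness: a temporal graph in which $q$ sits in a small, temporally isolated gadget that is the only feasible community available to it, while a separate, temporally active dense block forms the global optimum $C^{*}(f)$. By construction $C^{*}(f)\setminus C_q\neq\emptyset$ and the dense block dominates the burst/density ratio, so $f(C^{*}(f)\cup C_q)\ge f(C_q)$. One such graph can be made to serve all four models at once, which is the cleanest way to present the proposition.

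Condition~(ii) is then verified on the same witness (or in the lemma's generality): every $w\in C^{*}(f)\setminus C_q$ is reachable from $q$ only through late or sparse \emph{ordered} temporal edges, so its $\widetilde{ppr}$-mass, and hence $tppr(w)$, is negligible; since the query-biased temporal degree rewards neighbour \emph{quality} rather than quantity (cf.\ the discussion after Definition~\ref{def:degree}), $\rho_{C^{*}(f)\cup C_q}(w)$ falls strictly below $\min\{\rho_{C_q}(u):u\in C_q\}$, so the minimum query-biased temporal degree of the union drops and (ii) holds. Combining (i) and (ii) shows $C^{*}(f)-C_q$ is a set of \qds and $f$ suffers from the query drift issue, for each of the four methods, which is the statement.

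The main obstacle I anticipate is establishing condition~(i) for the non-core objectives of \cite{DBLP:journals/pvldb/ChuZYWP19,DBLP:journals/tkdd/GalimbertiCBBCG21}: their scores are not union-monotone, so there is no algebraic shortcut in the style of the proof of Proposition~\ref{prop:our_model}, and I would have to commit to a specific temporal-graph construction and check that the burst rate / density of the dense block genuinely dominates after the union is formed. A secondary subtlety is that \cite{DBLP:journals/tkdd/GalimbertiCBBCG21} returns a hierarchy rather than a single community; I would resolve this by taking $f$ to be the score of the innermost span-core and proving the proposition against that canonical choice.
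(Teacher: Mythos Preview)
Your plan is workable but it takes a noticeably different and heavier route than the paper. The paper's proof has a one-line shortcut for condition~(ii): it simply takes $C_q$ to be a set satisfying conditions (i) and (ii) of Problem~1, i.e., the \textit{QTCS} optimum itself. Then, by optimality of $\min\{\rho_{C_q}(u):u\in C_q\}$, condition~(ii) holds automatically for \emph{any} $C^{*}(f)$, with no need to estimate $\widetilde{ppr}$-mass on a hand-built gadget. This collapses half of your work. For condition~(i) the paper also avoids explicit constructions entirely and argues algebraically, case by case: for \cite{DBLP:journals/pvldb/ChuZYWP19} it observes that $m(\mathcal{G}_C)/|\mathcal{T}_C|$ is supermodular and $|C|$ is submodular, then invokes \cite{DBLP:journals/pvldb/WuJLZ15} to get $f(C^{*}(f)\cup C_q)\ge f(C_q)$ directly; for \cite{DBLP:journals/tkdd/GalimbertiCBBCG21} it treats the objective as $\min_{u\in C} d_I(u,C)$ over a fixed interval and uses degree monotonicity under supersets (so this is a core-style model, not a ratio/density one as you grouped it, and your anticipated ``main obstacle'' disappears); and for \cite{DBLP:conf/icde/LiSQYD18,DBLP:conf/icde/QinLWQCY19} it encodes feasibility via an indicator ($f(C)=|C|$ or $f(C)=1$ if feasible, else $0$) and does a two-line case split on whether $C_q$ is feasible. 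What your construction-based approach buys is concreteness and the ability to exhibit a single graph serving all four models at once; what the paper's approach buys is generality (it holds for every $\mathcal{G}$ and $q$, not just a witness) and a much shorter proof that sidesteps both obstacles you flagged.
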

\begin{proof} Let $C_q$ be a vertex set that satisfies conditions (i) and (ii) of Problem 1. Thus, by Definition \ref{def:tpcore} and Problem 1, we know that condition (ii) of Definition \ref{def:query_drift} holds for $C_q$ and any $C^*(f)$. Next, we prove that \cite{DBLP:journals/pvldb/ChuZYWP19, DBLP:journals/tkdd/GalimbertiCBBCG21, DBLP:conf/icde/LiSQYD18, DBLP:conf/icde/QinLWQCY19} meet the condition (i) of Definition \ref{def:query_drift}.

\underline{For \cite{DBLP:journals/pvldb/ChuZYWP19}:} The objective function $f(C)=\frac{m(\mathcal{G}_C)}{|C| \cdot |\mathcal{T}_C|}$, in which $m(\mathcal{G}_C)$ is the sum of edge weights within the temporal subgraph $\mathcal{G}_C$ and $\mathcal{T}_C$ is the time set of $\mathcal{G}_C$. For example, in Fig. \ref{fig:1(a)}, we let $C=\{v_1,v_3,v_4,v_5, v_6\}$, thus $m(\mathcal{G}_C)=7$ and $\mathcal{T}_C=\{2,3,4\}$.  So, $C^*(f)$ is a vertex set with the largest $f$ value. Since $m(\mathcal{G}_C)/\mathcal{T}_C$ is a monotonically increasing supermodular and $|C|>0$ is a submodular,  $f(C^*(f)\cup C_q)\geq f(C_q)$ according to \cite{DBLP:journals/pvldb/WuJLZ15}. Thus, \cite{DBLP:journals/pvldb/ChuZYWP19} has the \qd issue.

\underline{For \cite{DBLP:journals/tkdd/GalimbertiCBBCG21}:} Given a fixed interval $I$ and a static "AND" graph $G_I(C)=\cap_{t\in I}\{(u,v)|u,v \in C, (u,v,t) \in \mathcal{G}\}$, the objection function $f(C)=\min_{u \in C} d_{I}(u,C)$, in which $d_{I}(u,C)$ is the degree of $u$ in $G_I(C)$. So, $C^*(f)$ is a vertex set with the largest $f$ value. Thus, $\min_{u \in C^*(f)\cup C_q} d_{I}(u,C^*(f)\cup C_q)\geq \min_{u \in C_q} d_{I}(u,C_q)$. That is, $f(C^*(f)\cup C_q)\geq f(C_q)$. Thus, \cite{DBLP:journals/tkdd/GalimbertiCBBCG21} suffer from the \qd issue.

\underline{For \cite{DBLP:conf/icde/LiSQYD18}:} If $C$ is a $(\theta,\tau)$-persistent $k$-core, then  $f(C)=|C|$, otherwise $f(C)=0$. Thus, $C^*(f)$ is a $(\theta,\tau)$-persistent $k$-core with the largest $f$ value. When $C_q$ is a $(\theta,\tau)$-persistent $k$-core, then we have $C^*(f)\cup C_q$ is also a $(\theta,\tau)$-persistent $k$-core and  $f(C^*(f)\cup C_q)=|C^*(f)\cup C_q|\geq f(C_q)=|C_q|$. When $C_q$ is not a $(\theta,\tau)$-persistent $k$-core, we have $f(C_q)=0$ and $f(C^*(f)\cup C_q)\geq f(C_q)$.  So, \cite{DBLP:conf/icde/LiSQYD18} has the \qd issue.

\underline{For \cite{DBLP:conf/icde/QinLWQCY19}:} If $C$ is a periodic clique, then  $f(C)=1$, otherwise $f(C)=0$. Thus $C^*(f)$ is any periodic clique. When $C_q$ is a periodic clique, we let $C^*(f)$ contains $C_q$. Thus we have $C^*(f)\cup C_q$ is also a periodic clique and  $f(C^*(f)\cup C_q)=1\geq f(C_q)=1$. When $C_q$ is not a periodic clique, we have $f(C_q)=0$ and $f(C^*(f)\cup C_q)\geq f(C_q)$.  As a consequence, \cite{DBLP:conf/icde/QinLWQCY19} suffer from the \qd issue.
\end{proof}
\stitle{Remark.}
The objection function $f(C)=\sum_{u,v \in C} (k(\frac{1}{dist_C(u,v)}+\frac{1}{dist_C(v,u)})-1)$ in \cite{DBLP:conf/bigdataconf/TsalouchidouBB20}, in which $k\in [0,1/2]$  and $dist_C(.)\geq 1$ is an asymmetric distance function within $\mathcal{G}_C$ that linearly integrates the temporal and spatial dimensions. When $C_q\subseteq C^*(f)$, we have $f(C^*(f))=f(C^*(f)\cup C_q)\geq f(C_q)$ because $C^*(f)$ is the vertex set with the largest $f$ value. As a result, \cite{DBLP:conf/bigdataconf/TsalouchidouBB20}  has the \qd issue when  $C_q\subseteq C^*(f)$. Unfortunately, the formal proof for $C_q\not\subseteq  C^*(f)$ is quite difficult and we leave it as an open problem. In this regard, we note as follows. First, \cite{DBLP:conf/bigdataconf/TsalouchidouBB20} involves complex distance calculations, so it has high time complexity and even it is NP-hard (more details in \cite{DBLP:conf/bigdataconf/TsalouchidouBB20}). In particular, \cite{DBLP:conf/bigdataconf/TsalouchidouBB20} cannot obtain the results within two days on some datasets (Exp-1 of Section \ref{sec:experiments}). Second, \cite{DBLP:conf/bigdataconf/TsalouchidouBB20} has poor community quality (Exp-6 of Section \ref{sec:experiments}). This is because \cite{DBLP:conf/bigdataconf/TsalouchidouBB20} only applied distance to measure the quality of the community, resulting in that it is a local measure and ignores the cohesiveness of the community.


\subsection{Handle Multiple Query Vertices} \label{subsec:multi}
In many applications, multiple query vertices may be initiated by users. We show that our proposed frameworks for single query vertex can be generalized to deal with multiple query vertices. Let $S$ be the query vertex set, the \textit{TPPR} of vertex $u$ w.r.t. $S$ is denoted by $tppr_{S}(u)=\sum_{q \in S}tppr_{q}(u)/|S|$. \footnote{Other alternatives are possible for defining $tppr_{S}(u)$. For example, $tppr_{S}(u)=\min\{tppr_{q}(u)| q\in S\}$ or $tppr_{S}(u)=\prod_{q \in S}tppr_{q}(u)$.} By doing so, we propose a new definition and a new problem as follows.
\begin{definition}
	Given a vertex set $C$ and a query vertex set $S$, the query-biased temporal degree  of vertex  $u$ w.r.t. $C$ and $S$ is defined as:
	$\rho_{C}^{S}(u)=\sum_{v\in N_C(u)} tppr_S(v)$.
\end{definition}
\stitle{Problem 2 (\textit{QTCS} with multiple query vertices).} Given a temporal graph $\mathcal{G}(V,\mathcal{E})$, a teleportation probability $\alpha$ and a query vertex set $S$, the problem is to identify a vertex set $C$, satisfying (i) $S \subseteq C$ and $G_C$ is connected; (ii) $\min\{\rho_{C}^{S}(u)|u \in C\}$ is the maximum; (iii) there does not exist another community $C'\supseteq C$ meets the above conditions.

%
\section{Exact Greedy Removing for \textit{QTCS}} \label{sec:gr}
 In this section, we devise an exact greedy removing algorithm  $\gr$ to address our problem \textit{QTCS}. The main idea of $\gr$ is first to calculate the \textit{TPPR} of each vertex and then greedily remove the vertices with the minimum query-biased temporal degree.
\subsection{Edge Stream For \textit{TPPR} Computation} \label{subsec:com_tppr}

Here, we focus on calculating \textit{TPPR} of every vertex. Straightforwardly, we can use the classic power iteration method \cite{Page1999ThePC} to solve the Equation \ref{eq:tppr} by utilizing the knowledge of linear algebra (i.e., matrix-vector product operations). However, such a method has a high time overhead when handling temporal networks. The reasons are as follows. The time complexity of the  power iteration method is O($MN$), in which $M$ is the number of non-zero elements in the state transition matrix and $N$ is the number of iterations. For temporal graphs, since each state in our model is an ordered temporal edge instead of a vertex,  $M=O(m^2)$ ($m$ is the number of temporal edges). Thus, the time complexity of the power iteration method is O($m^2N$). Motivated by this, we propose an efficient algorithm with near-linear time by simulating the process of the temporal walk and applying edge stream to reduce computational cost.

\begin{definition} \label{def:tem_walk}
	\textbf{[$l$-hop  temporal  walk]} A $l$-hop  temporal  walk from vertex $i$ to vertex $j$ is a sequence of ordered temporal edges $\{\vec{e}_1,\vec{e}_2,..., \vec{e}_l\}$, satisfying $head(\vec{e}_1)=i$, $tail(\vec{e}_l)=j$, $tail(\vec{e}_i) = head(\vec{e}_{i+1})$ and $time(\vec{e}_i) \leq time(\vec{e}_{i+1})$ for all $1\leq i \leq l-1$. For simplicity, we denote $tw_{l}$ and $TW_{l}^{u \leadsto v}$ as the  $l$-hop  temporal  walk and the set of $l$-hop  temporal  walk from $u$ to $v$, respectively.
\end{definition}

\begin{definition} \label{def:trp}
	\textbf{[$l$-hop temporal transition probability]} Given a $l$-hop  temporal walk $tw_l=\{\vec{e}_1,\vec{e}_2,..., \vec{e}_l\}$, the $l$-hop temporal transition probability of $tw_l$, denoted by $P(tw_l)$, is $P(tw_l)=P(\vec{e}_1 \to \vec{e}_2)*P(\vec{e}_2 \to \vec{e}_3)*...*P(\vec{e}_{l-1} \to \vec{e}_{l})$. For completeness, we set  $P(tw_0)=0$, $P(tw_1)=1/|\vec{e}^{out}_u|$ if $tw_1=\{<u,v,t>\}$.
\end{definition}

\begin{lemma} \label{lemma:tppr_0}
	Given a temporal graph $\mathcal{G}(V,\mathcal{E})$, a query vertex $q$ and a teleportation probability $\alpha$, we have
	$tppr(u)= \sum_{i=0}^{\infty}\alpha(1-\alpha)^{i} \sum_{tw_{i+1} \in TW_{i+1}^{q \leadsto u}} P(tw_{i+1})$.
\end{lemma}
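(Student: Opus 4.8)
The plan is to expand the recursive definition of $\widetilde{ppr}(\alpha,\widetilde{\chi_q})$ (Equation~\ref{eq:tppr}) into its Neumann series and then identify each term with a sum over $l$-hop temporal walks. First I would unfold the fixed-point equation: since $\textbf{P}$ is stochastic and $0<\alpha\le 1$, the operator $(1-\alpha)\textbf{P}$ is a contraction, so the unique solution is $\widetilde{ppr}(\alpha,\widetilde{\chi_q}) = \alpha\sum_{i=0}^{\infty}(1-\alpha)^{i}\,\widetilde{\chi_q}\textbf{P}^{i}$. This is the standard power-series form of personalized PageRank, and I would justify convergence by the contraction-mapping/geometric-series argument (each $\widetilde{\chi_q}\textbf{P}^i$ is a probability vector, so the series is absolutely summable entrywise).

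Next I would interpret the $i$-th term combinatorially. The key claim is that for any ordered temporal edge $\vec e$,
$$(\widetilde{\chi_q}\textbf{P}^{i})(\vec e) \;=\; \sum_{\substack{tw_{i+1}\in TW_{i+1}^{q\leadsto v}\\ \text{ending in }\vec e}} P(tw_{i+1}),$$
where $v=tail(\vec e)$ and ``ending in $\vec e$'' means $\vec e_{i+1}=\vec e$. I would prove this by induction on $i$. The base case $i=0$ is immediate: $\widetilde{\chi_q}(\vec e)=1/|\vec e^{\,out}_q|$ for $\vec e\in\vec e^{\,out}_q$, which by Definition~\ref{def:trp} equals $P(tw_1)$ for the one-hop walk $tw_1=\{\vec e\}$. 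For the inductive step, I would write $(\widetilde{\chi_q}\textbf{P}^{i})(\vec e) = \sum_{\vec e'} (\widetilde{\chi_q}\textbf{P}^{i-1})(\vec e')\,P(\vec e'\to\vec e)$; by Definition~\ref{def:ttp}, $P(\vec e'\to\vec e)$ is nonzero only when $\vec e\in N^{>}(\vec e')$, i.e.\ exactly when appending $\vec e$ to a walk ending in $\vec e'$ yields a valid $(i+1)$-hop temporal walk (the head/tail matching and the nondecreasing-time condition of Definition~\ref{def:tem_walk} are precisely what $N^{>}$ encodes). Applying the inductive hypothesis and the multiplicative definition of $P(tw_{i+1})$ (Definition~\ref{def:trp}) then gives the claim after reindexing the sum over walks.

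Finally I would assemble the pieces. Summing the displayed identity over all $\vec e\in\vec e^{\,in}_u$ and using $tppr(u)=\sum_{\vec e\in\vec e^{\,in}_u}\widetilde{ppr}(\alpha,\widetilde{\chi_q})(\vec e)$ from Definition~\ref{def:tppr}, together with the power series, yields
$$tppr(u)=\sum_{i=0}^{\infty}\alpha(1-\alpha)^{i}\sum_{\vec e\in\vec e^{\,in}_u}\;\sum_{\substack{tw_{i+1}\in TW_{i+1}^{q\leadsto u}\\ \text{ending in }\vec e}} P(tw_{i+1})=\sum_{i=0}^{\infty}\alpha(1-\alpha)^{i}\sum_{tw_{i+1}\in TW_{i+1}^{q\leadsto u}}P(tw_{i+1}),$$
since every $(i+1)$-hop temporal walk from $q$ to $u$ ends in exactly one ordered temporal edge $\vec e$ with $tail(\vec e)=u$, i.e.\ $\vec e\in\vec e^{\,in}_u$, so the inner double sum is just a partition of $TW_{i+1}^{q\leadsto u}$.

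The main obstacle I anticipate is the bookkeeping in the inductive step: one must be careful that the dangling-state convention $P(\vec e_i\to\vec e_i)=1$ (self-loops for edges with no later neighbor) does not spuriously create walks that violate the strict-time-increase in $N^{>}$ versus the non-strict $time(\vec e_i)\le time(\vec e_{i+1})$ in Definition~\ref{def:tem_walk} — I would need to check that self-loop steps either are excluded from the walk count or are handled consistently on both sides, and confirm that the convergence of the series is unaffected by these self-loops (they only redistribute mass without breaking stochasticity). A secondary point is making the interchange of the infinite sum over $i$ with the (finite, for each $i$) sums over edges and walks rigorous, which follows from nonnegativity and Tonelli/absolute convergence.
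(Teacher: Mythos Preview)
Your proposal is correct and follows essentially the same route as the paper: expand the fixed-point equation into the Neumann series $\widetilde{ppr}(\alpha,\widetilde{\chi_q})=\alpha\sum_{i\ge 0}(1-\alpha)^i\widetilde{\chi_q}\textbf{P}^i$, identify $(\widetilde{\chi_q}\textbf{P}^i)(\vec e)$ with the total probability of $(i{+}1)$-hop temporal walks from $q$ ending in $\vec e$, and sum over $\vec e\in\vec e^{\,in}_u$. The only stylistic differences are that the paper argues uniqueness via strict diagonal dominance of $\textbf{I}-(1-\alpha)\textbf{P}$ rather than contraction, and it simply asserts the walk interpretation whereas you spell it out by induction; your flag about the dangling-state self-loops (strict $N^{>}$ versus the non-strict $\le$ in Definition~\ref{def:tem_walk}) is a genuine bookkeeping subtlety that the paper does not resolve in this lemma either---it is absorbed into the subsequent Lemma~\ref{lemma:tppr}.
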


\begin{proof}
First, the equation $\textbf{x}=\alpha \textbf{s} +(1-\alpha)\textbf{x}\textbf{P}$ is equivalent to $\textbf{x}(\textbf{I}-(1-\alpha)\textbf{P})=\alpha \textbf{s}$. Furthermore, the matrix $(\textbf{I}-(1-\alpha)\textbf{P})$ is nonsingular because it is strictly diagonally dominant, so this equation has a unique solution $\textbf{x}$ according to Cramer's Rule.

Second, let $\textbf{y}= \alpha \sum_{i=0}^{\infty}(1-\alpha)^{i}\textbf{P}^{i}$, we have $\alpha \textbf{s} +(1-\alpha) \textbf{s} \textbf{y}\textbf{P}=\alpha \textbf{s} +(1-\alpha) \textbf{s} \alpha \sum_{i=0}^{\infty}(1-\alpha)^{i}\textbf{P}^{i} \textbf{P}=\alpha \textbf{s} +\textbf{s} \alpha \sum_{i=1}^{\infty}(1-\alpha)^{i}\textbf{P}^{i} =\textbf{s} \alpha \sum_{i=0}^{\infty}(1-\alpha)^{i}\textbf{P}^{i}=\textbf{s} \textbf{y}$. That is $\alpha \textbf{s} +(1-\alpha) \textbf{s} \textbf{y}\textbf{P}=\textbf{s} \textbf{y}$. Since $\textbf{\textbf{x}}=\alpha \textbf{s} +(1-\alpha)\textbf{x}\textbf{P}$ and $\textbf{x}$ has a unique solution,  $\textbf{x}=\textbf{s} \textbf{y}=\alpha \sum_{i=0}^{\infty}(1-\alpha)^{i} \textbf{s} \textbf{P}^{i}$.

Third, for $\widetilde{ppr}(\alpha,\widetilde{\chi_{q}})=\alpha \widetilde{\chi_{q}} +(1-\alpha) \widetilde{ppr}(\alpha,\widetilde{\chi_{q}}) \textbf{P}$, we have $\widetilde{ppr}(\alpha,\widetilde{\chi_{q}})=\alpha\sum_{i=0}^{\infty} (1-\alpha)^{i} \widetilde{\chi_{q}} \textbf{P}^{i}$ by the previous proof. Therefore, $\widetilde{ppr}(\alpha,\widetilde{\chi_{q}})(\vec{e})=\alpha\sum_{i=0}^{\infty}(1-\alpha)^{i}P(q \stackrel{(i+1)hop}{\leadsto}\vec{e})$, in which $P(q \stackrel{(i+1)hop}{\leadsto}\vec{e})$ represents the probability that first from $q$ to $head(\vec{e})$ by $i$-hop temporal walk and then walking to $tail(\vec{e})$. So, $tppr(u)= \sum_{\vec{e}\in \vec{e}^{in}_u}\widetilde{ppr}(\alpha,\widetilde{\chi_{q}})(\vec{e})=\alpha\sum_{i=0}^{\infty}(1-\alpha)^{i}\sum_{\vec{e}\in \vec{e}^{in}_u}P(q \stackrel{(i+1) hop}{\leadsto}\vec{e})=\alpha\sum_{i=0}^{\infty}(1-\alpha)^{i}\sum_{tw_{i+1} \in TW_{i+1}^{q \leadsto u}} P(tw_{i+1})$.
\end{proof}

\stitle{A failed attempt.}  According to Lemma \ref{lemma:tppr_0}, a naive solution is first to  enumerate all  temporal walks from query vertex $q$ to any vertex $u$. Then, it computes the $l$-hop temporal transition probability from $q$ to $u$ by previous  temporal walks, and finally obtains $tppr(u)$ by Lemma \ref{lemma:tppr_0}. Unfortunately,  it is impossible to calculate exactly the $tppr(u)$  as the summation goes to infinity. So, it is very challenging to directly apply Lemma \ref{lemma:tppr_0} to  compute $tppr(u)$. To tackle this challenge, we present an important observation as follows.

\stitle{An important observation.}  According to Definition \ref{def:ttp} and \ref{def:trp}, for  $tw_\infty =\{\vec{e}_1, \vec{e}_2,...\}$, we observe that $P(tw_\infty) \neq 0$ iff there is an integer $l$ such that (1) $time(\vec{e}_i)<time(\vec{e}_{i+1})$ and $\vec{e}_i$ is not a dangling state for $1\leq i \leq l-1$; (2) $\vec{e}_l$ is a dangling state and $\vec{e}_{l}=\vec{e}_{l+k}$ for any integer $k$.

Based on this observation, we further present an important lemma (Lemma \ref{lemma:tppr}).   Before proceeding further, we denote a $\alpha$-discount temporal walk as the following random walk process: (1) it starts from $q$; (2) at each step it stops in the current state with probability $\alpha$, or it continues to walk according to Equation \ref{eq:ttm} with probability 1-$\alpha$. Furthermore, we use $u^{t}$ to denote any ordered temporal edge $\vec{e}$ with $tail(\vec{e})=u$ and $time(\vec{e})=t$. Let $D[u][t]$ be the probability that a  $\alpha$-discount temporal walk stops in  $u^t$ given the $\alpha$-discount temporal walk at most one dangling state $u^t$ if any.

\begin{lemma} \label{lemma:tppr}
	Given a temporal graph $\mathcal{G}(V,\mathcal{E})$, a query vertex $q$, and a teleportation probability $\alpha$,  we have $tppr(u)=\sum_{t\in T_1}D[u][t]+\sum_{t\in T_2}D[u][t]/\alpha$, in which $T_1=\{t|u^{t}$ is not a dangling state$\}$ and $T_2=\{t|u^{t}$ is a dangling state$\}$.
\end{lemma}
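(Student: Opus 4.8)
The plan is to derive the identity from Lemma~\ref{lemma:tppr_0} together with the structural observation on nonzero-probability temporal walks stated just above. Lemma~\ref{lemma:tppr_0} writes $tppr(u)=\sum_{i\ge 0}\alpha(1-\alpha)^{i}\sum_{tw_{i+1}\in TW_{i+1}^{q\leadsto u}}P(tw_{i+1})$, i.e.\ $tppr(u)$ is exactly the probability that the $\alpha$-discount temporal walk started at $q$ stops at some ordered temporal edge with tail $u$. First I would use the observation to describe the shape of every temporal walk with $P(tw)\ne 0$: it is a ``core'' $\vec e_1,\dots,\vec e_k$ of pairwise distinct states with strictly increasing timestamps in which $\vec e_1,\dots,\vec e_{k-1}$ are non-dangling, followed -- only when the last core state $\vec e_k$ is itself dangling -- by an arbitrary number $j\ge 0$ of repetitions of $\vec e_k$ (a dangling state admits only the self-loop $P(\vec e_k\to\vec e_k)=1$). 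Accordingly I would regroup the double sum in Lemma~\ref{lemma:tppr_0} by the core of each temporal walk.

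Then I would compute the contribution of a single fixed core $w=\{\vec e_1,\dots,\vec e_k\}$ whose last state satisfies $tail(\vec e_k)=u$ and $time(\vec e_k)=t$. If $\vec e_k$ is non-dangling, $w$ cannot be extended, so $w$ occurs in $TW_{i+1}^{q\leadsto u}$ only for $i=k-1$ and contributes $\alpha(1-\alpha)^{k-1}P(w)$. If $\vec e_k$ is dangling, then for each $j\ge 0$ the walk ``$w$ followed by $j$ copies of $\vec e_k$'' lies in $TW_{k+j}^{q\leadsto u}$ with transition probability still equal to $P(w)$ (every appended self-loop factor is $1$), so the total contribution of $w$ is $\sum_{j\ge 0}\alpha(1-\alpha)^{k-1+j}P(w)=(1-\alpha)^{k-1}P(w)=\tfrac1\alpha\,\alpha(1-\alpha)^{k-1}P(w)$. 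Summing the geometric series over the trailing self-loops is precisely where the factor $1/\alpha$ in the statement comes from.

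Finally I would match the regrouped sums with $D[u][t]$. By definition $D[u][t]$ is the probability that the $\alpha$-discount temporal walk stops at $u^t$ under the constraint that it visits at most one dangling state and that this state, if any, is $u^t$; since reaching a state $\vec e$ with $tail(\vec e)=u$ and $time(\vec e)=t$ forces all earlier states to be non-dangling, this is exactly $\sum_{k\ge 1}\alpha(1-\alpha)^{k-1}$ times the total transition probability $\sum_w P(w)$ over cores $w$ of length $k$ ending at an ordered temporal edge with tail $u$ and timestamp $t$ (strict time increase along such $w$ is automatic because $P(\vec e_i\to\vec e_{i+1})\ne 0$). Hence the cores ending at a non-dangling $u^t$ contribute $\sum_{t\in T_1}D[u][t]$ to $tppr(u)$, the cores ending at a dangling $u^t$ contribute $\sum_{t\in T_2}D[u][t]/\alpha$, and adding the two gives the claimed formula.

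The main obstacle is the bookkeeping in the middle step: one must verify that the assignment sending a nonzero-probability temporal walk to (its core, number of trailing self-loop repetitions) is a bijection -- onto cores alone when the core ends at a non-dangling state, and onto cores paired with the nonnegative integers when it ends at a dangling state -- so that in the regrouped sum no temporal walk is omitted or double-counted, and that $P(tw)$ factorizes cleanly as $P(\text{core})$. Apart from that single geometric summation, the argument only reuses the standard $\alpha\sum_i(1-\alpha)^i\mathbf P^i$ expansion already established in the proof of Lemma~\ref{lemma:tppr_0}.
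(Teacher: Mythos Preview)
Your proposal is correct and follows essentially the same approach as the paper: both split the walks ending at $u^t$ according to whether $u^t$ is dangling, identify the ``core'' contribution with $D[u][t]$, and obtain the $1/\alpha$ factor in the dangling case by summing the geometric series over trailing self-loops. Your version is organized somewhat more carefully (the explicit bijection between nonzero-probability walks and pairs (core, number of self-loop repetitions)), but the underlying argument is the same as the paper's two-case computation.
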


\begin{proof}
 Assume that there is a temporal walk $\{\vec{e}_1, \vec{e}_2,  ...\vec{e}_l\}$ such that $head(\vec{e}_1)=q$ and $\vec{e}_l=u^t$.

	\underline{Case 1:} If $\vec{e}_{i}$ is not a dangling state for $1 \leq i \leq l$ and $P(tw_{i+1}) \neq 0$, we have $l \neq \infty$ by the previous observation. Let $l_{max}$ be the maximum $l$ that satisfies the above condition, we have $ \sum_{i=0}^{l_{max}}\alpha(1-\alpha)^{i} \sum_{tw_{i+1} \in TW_{i+1}^{q \leadsto u^t}} P(tw_{i+1})=D[u][t]$.
	
	\underline{Case 2:}   If there are some dangling states, there must exist an integer $k$ such that $\vec{e}_i$ is  not a dangling state for $i<k$ and $\vec{e}_{j}$ is a dangling state for $k \leq j \leq l$. Let $l_{max}$ be the maximum $l$ that satisfies the above condition, note that $l_{max}$ may be $\infty$.  Thus,  we have $ \sum_{i=0}^{l_{max}}\alpha(1-\alpha)^{i} \sum_{tw_{i+1} \in TW_{i+1}^{q \leadsto u^t}} P(tw_{i+1})=\sum_{i=0}^{l_{max}-k}(1-\alpha)^{i}D[u][t]$.  So, $ \sum_{i=0}^{\infty}\alpha(1-\alpha)^{i} \sum_{tw_{i+1} \in TW_{i+1}^{q \leadsto u^t}} P(tw_{i+1})=\sum_{i=0}^{\infty}(1-\alpha)^{i}D[u][t]=D[u][t]*(1+(1-\alpha)+(1-\alpha)^{2}+...(1-\alpha)^{\infty})=D[u][t]*(1/(1-(1-\alpha)))=D[u][t]/\alpha$.
	
	In short, if $u^t$ is not a dangling state, $ \sum_{i=0}^{\infty}\alpha(1-\alpha)^{i} \sum_{tw_{i+1} \in TW_{i+1}^{q \leadsto u^t}} P(tw_{i+1})= D[u][t]$. If $u^{t}$ is a dangling state, we have $ \sum_{i=0}^{\infty}\alpha(1-\alpha)^{i} \sum_{tw_{i+1} \in TW_{i+1}^{q \leadsto u^t}} P(tw_{i+1})= D[u][t]/\alpha$. Thus, we have $tppr(u)=\sum_{i=0}^{\infty}\alpha(1-\alpha)^{i} \sum_{tw_{i+1} \in TW_{i+1}^{q \leadsto u}} P(tw_{i+1})=\sum_{t\in T_1}D[u][t]+\sum_{t\in T_2}\\D[u][t]/\alpha$ according to Lemma \ref{lemma:tppr_0}.
\end{proof}

Based on Lemma \ref{lemma:tppr}, we devise an efficient and non-trivial dynamic programming approach (Algorithm \ref{algor:tppr}) to compute \textit{TPPR} for every vertex with one pass over all temporal edges. Algorithm \ref{algor:tppr} first initializes $tppr(u)$ as 0 and $D[u]$ as a dictionary structure for every vertex $u \in V$ (Line 1).  In Line 2, we represent the temporal graph as edge stream to ensure the time of temporal edges is  non-decreasing, which can facilitate the $D[u][t]$
calculation (see the definition of $D[u][t]$ for details). Thus, for each temporal edge $(u,v,t)$, we update the dictionary structures  $D[u][t]$ and $D[v][t]$ accordingly (Lines 3-10). As a result, the \textit{TPPR} of $u$ is the sum of $D[u][t]$ for different $t$ according to Lemma \ref{lemma:tppr} (Lines 11-15).

\begin{theorem} \label{thm:alg1}
Algorithm \ref{algor:tppr} can compute \textit{TPPR} for each vertex. The  time complexity of Algorithm \ref{algor:tppr} is $O(\mathcal{T}_{max}\cdot (m+n))$, where $\mathcal{T}_{max}=\max\{\mathcal{T}_u|u \in V\}$, $\mathcal{T}_{u}=|\{t|(u,v,t) \in \mathcal{E}\}|$.
\end{theorem}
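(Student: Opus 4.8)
The statement has two parts — correctness of Algorithm~\ref{algor:tppr} and its running time — and I would handle them separately, with the correctness half resting entirely on Lemma~\ref{lemma:tppr}. For correctness, the plan is to show that once the stream pass of Lines~3--10 finishes, each dictionary entry $D[u][t]$ equals exactly the quantity defined just before Lemma~\ref{lemma:tppr}: the probability that an $\alpha$-discount temporal walk started at $q$ stops at an ordered edge with tail $u$ and timestamp $t$, counting at most the first dangling state. I would prove this by induction on the timestamps in increasing order. Two facts drive the argument. First, any two ordered temporal edges sharing the same tail vertex $u$ and the same timestamp $t$ have identical successor sets $N^{>}(\cdot)$ and identical outgoing transition probabilities, since by Definition~\ref{def:ttp} both the set and the weights $g(\cdot)/\sum g(\cdot)$ depend only on $u$ and $t$; hence nothing is lost by collapsing the walk mass onto one number $D[u][t]$ indexed by a (vertex, timestamp) pair instead of by individual ordered edges. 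Second, by the observation preceding Lemma~\ref{lemma:tppr}, a temporal walk with nonzero probability has strictly increasing timestamps until it hits a dangling state; so when the edge stream — made non-decreasing in Line~2 — reaches $(u,v,t)$, every ordered edge that can transition \emph{into} one of timestamp $t$ already has timestamp $<t$ and has been fully processed. The inductive step then just verifies that Lines~3--10 implement the one-step recurrence behind Equation~\ref{eq:tppr}: the base mass $1/|\vec{e}^{out}_q|$ carried by the first-hop edges out of $q$ (the teleport term $\alpha\widetilde{\chi_q}$), plus the $(1-\alpha)$-discounted mass pushed from already-finalized earlier states through the transition weights of Definition~\ref{def:ttp}. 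Classifying $u^t$ as dangling iff no temporal edge incident to $u$ has timestamp $>t$ (a local check producing $T_1$ and $T_2$), Lemma~\ref{lemma:tppr} then gives $tppr(u)=\sum_{t\in T_1}D[u][t]+\sum_{t\in T_2}D[u][t]/\alpha$, which is exactly what Lines~11--15 return.

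\emph{Complexity.} Line~1 is $O(n)$. Producing a timestamp-sorted edge stream in Line~2 is $O(m)$ by counting/radix sort on timestamps (or $O(m\log m)$, dominated below). Processing one temporal edge $(u,v,t)$ updates $D[u][t]$ and $D[v][t]$; forming the incoming mass for tail $u$ (resp.\ $v$) requires aggregating over the earlier timestamps at which $u$ (resp.\ $v$) occurs, i.e.\ $O(\mathcal{T}_u)$ (resp.\ $O(\mathcal{T}_v)$) dictionary operations, hence $O(\mathcal{T}_{max})$ per edge. Summing over $\mathcal{E}$, $\sum_{(u,v,t)\in\mathcal{E}}(\mathcal{T}_u+\mathcal{T}_v)=\sum_{w\in V}\mathcal{T}_w\cdot\deg_{\mathcal{E}}(w)\le 2\,\mathcal{T}_{max}\,m$, so the stream pass costs $O(\mathcal{T}_{max}\cdot m)$. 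Assembling the output in Lines~11--15 touches each entry of each $D[u]$ once, i.e.\ $\sum_u|D[u]|\le\sum_u\mathcal{T}_u\le\min\{2m,\ \mathcal{T}_{max}\,n\}$. Adding these yields the claimed $O(\mathcal{T}_{max}\cdot(m+n))$.

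\emph{Main obstacle.} The delicate point is the correctness induction, and within it two things that need careful justification: (i) that aggregating walk mass by (tail, timestamp) rather than per ordered edge is lossless — this is precisely the ``identical successor set'' observation, and it is what makes the near-linear bound possible; and (ii) the bookkeeping for dangling states, where a walk that enters $u^t$ self-loops forever, so its total contribution to $tppr(u)$ is the geometric series $\sum_{i\ge 0}(1-\alpha)^i D[u][t]=D[u][t]/\alpha$. The latter is the content of Case~2 in the proof of Lemma~\ref{lemma:tppr}, explains why Lines~11--15 rescale the $T_2$ entries by $1/\alpha$, and is also why ``new'' (maximum-timestamp) edges are not trapped during the computation. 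The remaining obligations — that the per-edge update faithfully realizes the affine recurrence of Equation~\ref{eq:tppr}, and the summations in the complexity bound — are routine.
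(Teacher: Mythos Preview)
Your proposal is correct and follows the same approach as the paper --- both rest correctness on Lemma~\ref{lemma:tppr} by arguing that the stream pass fills $D[u][t]$ with the $\alpha$-discount walk mass, and both charge $O(\mathcal{T}_{max})$ per edge update plus $O(\mathcal{T}_{max}\cdot n)$ for the final summation. Your version is considerably more rigorous: the paper's proof merely asserts that ``$D[u][t]$ records the probability that the walk stops at $u$ at time $t$'' and that ``each round consumes $\mathcal{T}_{max}$ time'', whereas you spell out the timestamp induction and, usefully, make explicit why indexing by (tail, timestamp) rather than by individual ordered edge is lossless --- a structural fact the algorithm silently depends on but the paper never states.
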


\begin{proof}
For the correctness, we know that $tppr(u)$ is the probability that a  temporal walk from $q$ stops at $u$ according to Lemma \ref{lemma:tppr_0} and \ref{lemma:tppr}. $D[u][t]$ of Algorithm \ref{algor:tppr} records the probability that the  walk stops at $u$ at time $t$. So, Algorithm \ref{algor:tppr} can correctly compute the \textit{TPPR} for every vertex. The algorithm takes $m$ rounds to update the dictionaries $D[u]$ (Line 2).  In each round, it consumes $\mathcal{T}_{max}$ time to perform the update process. In Lines 11-15, it consumes  $O(\mathcal{T}_{max}\cdot n)$ time to calculate \textit{TPPR} of every vertex. Therefore, the   time complexity of Algorithm \ref{algor:tppr} is $O(\mathcal{T}_{max}\cdot (m+n))$.
\end{proof}

\begin{algorithm}[t]
	\scriptsize
	\caption{\textit{Compute\_tppr} ($\mathcal{G}, q, \alpha$)} \label{algor:tppr}
		\begin{flushleft}
		\hspace*{0.02in} {\bf Input:}
temporal graph $\mathcal{G}$; query vertex $q$; teleportation probability $\alpha$\\
		\hspace*{0.02in} {\bf Output:}  the \textit{TPPR} for every vertex.
	\end{flushleft}
	\begin{algorithmic}[1]
		
		\State $tppr(u) \leftarrow 0$, $D[u] \leftarrow \{\}$ for any $u \in V$
		\For{ $(u,v,t)$ in the edge stream of $\mathcal{G}$}
		\For {$t_1 \in D[u]$}
		\State  $D[v][t]=D[v][t]+ (1-\alpha) D[u][t_1] P(u^{t_1} \rightarrow <u,v,t>)$
		\EndFor
		\If {$u==q$}
		\State $D[v][t]=D[v][t]+ \frac{\alpha}{|\vec{e}^{out}_q|}$
		\EndIf
		\For {$t_2 \in D[v]$}
		\State  $D[u][t]=D[u][t]+ (1-\alpha) D[v][t_2]  P(v^{t_2}  \rightarrow <v,u,t>)$
		\EndFor
		\If {$v==q$}
		\State $D[u][t]=D[u][t]+ \frac{\alpha}{|\vec{e}^{out}_q|}$
		\EndIf
		\EndFor
		
		\For {$u \in D$}
		\For {$t \in D[u]$}
		\If {$u^{t}$ is a dangling state}
		\State $D[u[t]=D[u][t]/\alpha$
		\EndIf
		\State $tppr[u]=tppr[u]+D[u][t]$
		\EndFor
		\EndFor
		\State \Return \emph{tppr}
	\end{algorithmic}
\end{algorithm}

\subsection{The $\gr$ Algorithm} \label{subsec:gr}
Below, we show that the query-biased temporal degree satisfies a monotonic property, which supports an exact greedy removing algorithm to solve our problem.
\begin{lemma} \label{lem:anti_degree}
\textbf{[Monotonic property]}
	 Given two vertex sets $S$ and $H$ and $S\subseteq H$, we have  $\rho_{S}(u) \leq \rho_{H}(u)$ for any vertex $u \in S$ holds.
	\end{lemma}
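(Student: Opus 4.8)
The plan is to reduce the claim to two elementary observations: a set-inclusion fact about neighborhoods and the non-negativity of $tppr$. First I would note that the hypothesis $S \subseteq H$ immediately gives, for every vertex $u \in S$, the inclusion $N_S(u) \subseteq N_H(u)$. Indeed, by the definition in Section~\ref{subsec:pre}, $N_C(u) = \{w \in C \mid (w,u) \in E\}$ depends on $C$ only through which vertices are allowed to be counted, and enlarging $C$ from $S$ to $H$ can only add neighbors, never remove them.

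Next I would invoke the fact, established right after Definition~\ref{def:tppr}, that $tppr_q$ is a probability distribution over $V$; in particular $tppr(v) \geq 0$ for every vertex $v$. This non-negativity is the only analytic input the proof needs, and it is exactly why the monotonic property holds for $\rho$ but would fail for a signed quantity.

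Combining these, for any $u \in S$ I would write
\begin{equation*}
\rho_{S}(u) = \sum_{v \in N_S(u)} tppr(v) \;\leq\; \sum_{v \in N_S(u)} tppr(v) + \sum_{v \in N_H(u) \setminus N_S(u)} tppr(v) = \sum_{v \in N_H(u)} tppr(v) = \rho_{H}(u),
\end{equation*}
where the inequality uses that the extra terms indexed by $N_H(u) \setminus N_S(u)$ are each $\geq 0$, and the penultimate equality uses the disjoint decomposition $N_H(u) = N_S(u) \,\dot\cup\, (N_H(u)\setminus N_S(u))$ valid because $N_S(u) \subseteq N_H(u)$. This establishes $\rho_S(u) \le \rho_H(u)$ for all $u \in S$, as required.

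There is no real obstacle here: the statement is essentially a monotonicity-of-finite-sums argument once non-negativity of $tppr$ is in hand. The only point worth stating carefully is that $\rho$ is computed with the \emph{same} global $tppr$ values regardless of the induced-subgraph argument $C$ — only the index set $N_C(u)$ changes with $C$ — so that enlarging $C$ genuinely just appends non-negative summands. I would make that explicit to avoid any confusion with quantities (like a local temporal degree) that would instead be recomputed relative to $C$.
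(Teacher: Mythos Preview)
Your proposal is correct and follows essentially the same approach as the paper: both use that $S\subseteq H$ implies $N_S(u)\subseteq N_H(u)$ and then conclude $\rho_S(u)\le\rho_H(u)$ from the definition of $\rho$ as a sum of non-negative $tppr$ values. Your write-up is simply more explicit about the non-negativity of $tppr$ and the fact that only the index set $N_C(u)$ changes with $C$, points the paper leaves implicit.
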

\begin{proof}
	By Definition \ref{def:degree}, we have $\rho_S(u)=\sum_{v \in N_S(u)}tppr(v)$ and $\rho_H(u)=\sum_{v \in N_H(u)}tppr(v)$. Since $S \subseteq H$, $N_{S}(u) \subseteq N_{H}(u)$, we have $\rho_S(u)\leq \rho_H(u)$.
\end{proof}

By Lemma \ref{lem:anti_degree}, we know that the larger the vertex set, the greater the query-biased temporal degree of vertex $u$. Inspired by this, we devise an exact greedy removing  algorithm called $\gr$ (Algorithm \ref{algor:gr}).  Algorithm \ref{algor:gr} first calls Algorithm \ref{algor:tppr} to calculate \textit{TPPR} of every vertex (Line 1). Then, it initializes the current search space $temp$ as $V$, candidate result $R$ as $V$, the optimal value $\beta^{*}$ of \textit{QTCS} as 0, and the query-biased temporal degree $\rho(u)$ for every vertex $u\in V$ according to Definition \ref{def:degree} (Lines 2-3). Subsequently, it executes the  greedy removing process in each round to improve the quality of the target community (Lines 4-12). Specifically, in each round, it obtains one vertex $u$ with the minimum query-biased temporal degree (Line 5). Lines 8-12 update the candidate result $R$, the optimal value $\beta^{*}$, the search space $temp$, and the query-biased temporal degree. The iteration terminates once the current search space is empty (Line 4) or the query vertex $q$ is removed (Line 6-7). Finally, it returns  $CC(R,q)$ as the exact \emph{query-centered} temporal community (Line 13).

\begin{theorem} \label{thm:alg2}
	 Algorithm \ref{algor:gr} can identify the exact \emph{query-centered} temporal community.	The  time complexity and space complexity of Algorithm \ref{algor:gr} are $(\mathcal{T}_{max}\cdot (m+n)+n\log n+\bar{m})$ and $O( \mathcal{T}_{max} \cdot n+m)$ respectively.	
\end{theorem}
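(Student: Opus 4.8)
The plan is to prove the two assertions of Theorem~\ref{thm:alg2} in turn: that $\gr$ returns the exact \textit{QTCS} community, and that it runs in the stated time and space. For exactness I would use a peeling argument in the spirit of the classical ``maximize the minimum degree'' community search, but carried out for the query-biased temporal degree $\rho$ and adapted to the connectivity and maximality conditions of Problem~1. Three preliminary facts do the work. \emph{(a) Monotonicity:} by Lemma~\ref{lem:anti_degree}, shrinking a vertex set never increases any $\rho(\cdot)$. \emph{(b) Locality:} since $N_C(u)$ contains only vertices in $u$'s connected component of $G_C$, we have $\rho_{temp}(u)=\rho_{CC(temp,u)}(u)$, hence the quantity $\gr$ maximizes over $q$'s component, $\min\{\rho_{temp}(u)\mid u\in CC(temp,q)\}$, is exactly the $\beta$-value of the genuine connected candidate community $CC(temp,q)$. \emph{(c) Union-closedness:} if $C_1,C_2$ are connected, contain $q$, and satisfy $\min\rho\ge\beta$, then so does $C_1\cup C_2$ (by (a)); thus among all connected $\beta^{\mathrm{opt}}$-temporal proximity cores containing $q$ (with $\beta^{\mathrm{opt}}$ the optimal \textit{QTCS} value) there is a unique maximal one, $C^{\max}$, which is exactly the community Problem~1 asks for.

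Given these, I would examine $\gr$ at the instant it is about to delete the first vertex $v$ lying in $C^{\max}$, letting $temp$ be the current search space. No vertex of $C^{\max}$ has been removed yet, so $C^{\max}\subseteq temp$ and $q\in temp$; because $v$ is the chosen minimizer, every $u\in CC(temp,q)$ satisfies $\rho_{temp}(u)\ge\rho_{temp}(v)$; by (a), $\rho_{temp}(v)\ge\rho_{C^{\max}}(v)$; and by $v\in C^{\max}$ and Definition~\ref{def:tpcore}, $\rho_{C^{\max}}(v)\ge\beta^{\mathrm{opt}}$. Chaining these, every vertex of $CC(temp,q)$ has query-biased temporal degree at least $\beta^{\mathrm{opt}}$, so $CC(temp,q)$ is a connected $\beta^{\mathrm{opt}}$-temporal proximity core containing $q$; combined with $C^{\max}\subseteq CC(temp,q)$ and the maximality of $C^{\max}$, this forces $CC(temp,q)=C^{\max}$. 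Hence $\gr$ evaluates — and, since this snapshot still contains $q$, records — the candidate $C^{\max}$ with value $\beta^{\mathrm{opt}}$. Conversely, every candidate $\gr$ records is a connected subgraph containing $q$ whose recorded value is its own $\beta$, hence at most $\beta^{\mathrm{opt}}$; and every strictly earlier iteration either already exhibits $C^{\max}$ (if its $q$-component has shrunk to $C^{\max}$) or has a $q$-component properly containing $C^{\max}$, which by maximality is not a $\beta^{\mathrm{opt}}$-core and so yields a value below $\beta^{\mathrm{opt}}$. Since $\gr$ overwrites its stored solution only on a \emph{strict} increase of the minimum query-biased degree, it retains the earliest (hence largest) iteration attaining the optimum, whose $q$-component we have just shown is $C^{\max}$; therefore $CC(R,q)=C^{\max}$, the exact answer.

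For complexity, Line~1 costs $O(\mathcal{T}_{max}\cdot(m+n))$ by Theorem~\ref{thm:alg1}. Initializing $\rho(u)=\sum_{v\in N_V(u)}tppr(v)$ for all $u$ is one pass over the de-temporal edge set, $O(\bar{m})$. The removal loop runs at most $n$ rounds; maintaining the values $\rho(\cdot)$ in a priority queue, each round performs one extract-min, and when $u$ is deleted we do one update $\rho(v)\leftarrow\rho(v)-tppr(u)$ for each de-temporal neighbour $v$ of $u$ — in total $n$ extract-mins and $\sum_u\deg_G(u)=O(\bar{m})$ decrease-key operations, which with a Fibonacci heap cost $O(n\log n+\bar{m})$; the concluding extraction of $CC(R,q)$ is a single BFS, $O(\bar{m})$. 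Summing yields $O(\mathcal{T}_{max}\cdot(m+n)+n\log n+\bar{m})$. For space, the dictionaries $D[\cdot]$ built by Algorithm~\ref{algor:tppr} occupy $\sum_u|D[u]|=O(\mathcal{T}_{max}\cdot n)$, storing the temporal and de-temporal graphs takes $O(m)$, and the array $tppr$, the heap, and the auxiliary arrays take $O(n)$; the total is $O(\mathcal{T}_{max}\cdot n+m)$.

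The main obstacle is the exactness argument, and inside it two subtleties: proving fact (b), so that the scalar $\gr$ optimizes is demonstrably the $\beta$ of a real connected community through $q$ rather than an artifact of a disconnected search space; and showing that the stored snapshot is the \emph{maximal} optimal core, not merely some optimal core — this is exactly where union-closedness (c), the containment $C^{\max}\subseteq temp$, and the strict-improvement (earliest-iteration) tie-break must be combined. The time and space bounds are routine once Theorem~\ref{thm:alg1} and a Fibonacci-heap peeling implementation are assumed.
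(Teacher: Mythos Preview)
Your peeling argument via monotonicity (Lemma~\ref{lem:anti_degree}) is the same route the paper takes, and your complexity accounting matches (the paper simply asserts $O(n\log n+\bar m)$ for the removal loop without naming the Fibonacci heap). You are in fact more careful than the paper: its proof picks the connected component $H\supseteq S$ at the critical round, shows $\min_{w\in H}\rho_H(w)\ge\rho_S(u)$, and then simply asserts ``$H$ will be returned''; it does not explicitly isolate the locality fact (your~(b)) nor verify the maximality condition~(iii) of Problem~1.

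There is, however, one misreading that breaks your maximality step. Line~8 of Algorithm~\ref{algor:gr} tests $\rho(u)\ge\beta^{*}$, not strict $>$; hence the algorithm overwrites $R$ on ties and retains the \emph{latest} (smallest) $temp$ attaining the optimum, not the earliest. Your sentence ``$\gr$ overwrites its stored solution only on a \emph{strict} increase \ldots\ it retains the earliest (hence largest) iteration'' is therefore false as written, and the deduction $CC(R,q)=C^{\max}$ no longer follows: if, after the first vertex $u\in C^{\max}$ is removed, some non-neighbour of $u$ in $temp$ still has $\rho=\beta^{\mathrm{opt}}$, the algorithm will overwrite $R$ with a set whose $q$-component is a proper subset of $C^{\max}$. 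To close the gap you must either read Line~8 as strict $>$ (your argument then goes through verbatim) or show that once any vertex of $C^{\max}$ is removed the global minimum over $temp$ drops strictly below $\beta^{\mathrm{opt}}$ --- which is not obvious and can fail. The paper's own proof is silent on this point, so your attention to condition~(iii) is well placed; only the tie-breaking premise needs correcting.
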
	

\begin{proof}
Let $S$ be the exact \emph{query-centered} temporal community. In Lines 4-12, Algorithm \ref{algor:gr} executes the  greedy removing process. That is, in each round, it greedily deletes the vertex with the minimum query-biased temporal degree. Consider the round $t$ when the first vertex $u$ of $S$ is deleted. Let $V_t$ be the vertex set from the beginning of round $t$. Clearly, $S$ is the subset of $V_t$ because $u$ is the first deleted vertex of $S$. This implies that there must be a connected subgraph $G_H$ of $G_{V_t}$ such that $G_S \subseteq G_H$. Thus, $\rho_{S}(u)\leq \rho_{H}(u)$ according to Lemma \ref{lem:anti_degree}. Moreover, $\rho_{H}(w)\geq \rho_{H}(u)$ for any $w \in H$ since $u$ has the minimum query-biased temporal degree in $V_t$. Thus, $\rho_{H}(w)\geq \rho_{H}(u)\geq\rho_{S}(u)$, which implies that $H$ has optimal minimum query-biased temporal degree. Since  Algorithm \ref{algor:gr} maintains the optimal solution during greedy removing process in Lines 8-9, $H$ will be returned as the exact \emph{query-centered} temporal community in Line 13.
	
Algorithm \ref{algor:gr} first consumes $O(\mathcal{T}_{max}\cdot (m+n))$ time to calculate the \textit{TPPR} for each vertex (Line 1). Subsequently, it consumes $O(n+\bar{m})$ time to initialize the query-biased temporal degree (Line 3). Finally, it consumes $O(n\log n+\bar{m})$ time to perform the greedy removing process (Lines 4-12).  Thus,  Algorithm \ref{algor:gr} consumes a total of $O(\mathcal{T}_{max}\cdot (m+n)+n\log n+\bar{m})$.  Algorithm \ref{algor:gr} takes $O(\mathcal{T}_{max} \cdot n)$ extra space to maintain dictionaries of Algorithm \ref{algor:tppr} for computing \textit{TPPR}. Additionally, we also take $O(m+ n)$ space to maintain the entire temporal graph. Thus, the space complexity of Algorithm \ref{algor:gr} is $O( \mathcal{T}_{max} \cdot n+m)$.	
\end{proof}

\begin{algorithm}[t]
	\scriptsize
	\caption{$\gr$ ($\mathcal{G}, q, \alpha$)} \label{algor:gr}
	\begin{flushleft}
		\hspace*{0.02in} {\bf Input:}
		temporal graph $\mathcal{G}$; query vertex $q$; teleportation probability $\alpha$\\
		\hspace*{0.02in} {\bf Output:}
		the exact \textit{QTCS}
	\end{flushleft}
	\begin{algorithmic}[1]
		\State $tppr \leftarrow$ Compute\_tppr ($\mathcal{G}, q, \alpha$)
		\State $temp \leftarrow V$; $R \leftarrow V$; $\beta^{*} \leftarrow  0$
		\State 	$\rho(u)\leftarrow \sum_{v\in N_{V}(u)} tppr(v)$ for each vertex $u \in V$.
		
		\While{$temp\neq \emptyset$}
		\State $u \leftarrow \arg \min\{\rho(u)|u\in temp\}$
		\If{$u==q$}
		\State break
		\EndIf
		\If{$\rho(u)\geq \beta^{*}$}
		\State $R \leftarrow temp$; $\beta^{*} \leftarrow \rho(u)$
		\EndIf
		\State $temp \leftarrow temp \setminus \{u\}$
		\For {$v \in N_V(u) \cap temp$}
		\State  $\rho(v)$=$\rho(v)-tppr(u)$
		\EndFor
		\EndWhile
		\State \Return $CC(R,q)$, in which $CC(R,q)$ is the vertex set from the maximal connected component of $G_{R}$ containing $q$
	\end{algorithmic}
\end{algorithm}

In most real-life temporal graphs, $n \log n \leq m$ and  $\bar{m} \leq m$  as stated in Section \ref{sec:experiments}. Thus, the  time complexity of Algorithm \ref{algor:gr} can be further reduced to $O(\mathcal{T}_{max}\cdot m)$. Moreover, Algorithm \ref{algor:gr} is even near-linear in practice because $\mathcal{T}_{max}$ is usually small (Section \ref{sec:experiments}).   Clearly, the time complexity of \textit{QTCS} is $\Omega(m)$ because it has to visit the whole graph at least once for calculating the exact \textit{TPPR} of each vertex. Therefore, Algorithm \ref{algor:gr} is nearly optimal.

\stitle{Remark.} We can simply adapt Algorithm \ref{algor:gr} to solve Problem 2. Specifically, in Line 1, we can get $tppr_q$  by executing Compute\_tppr ($\mathcal{G}, q, \alpha$) for each $q\in S$, in which $S$ is the query vertex set. Then, we modify Line 3 as $\rho(u)\leftarrow \sum_{v\in N_V(u)}\sum_{q \in S}tppr_{q}(v)/|S|$ and the iteration terminates (i.e., Lines 4-12) once the current search space is empty or any query vertex $q\in S$ is removed or there is no connected component containing $S$. Finally, we return the vertex set from the maximal  connected component of $G_R$ containing $S$.

\stitle{\underline{Discussion for $\gr$.}} Although  $\gr$ has near-linear time complexity, it is still inefficient for handling huge temporal graphs, especially for processing online real-time queries. For example, on the DBLP dataset, $\gr$ takes 47 seconds to process a query (see Section \ref{sec:experiments}), which is disruptive to the online user experience. The reasons can be explained as follows: (1) It needs to compute the \textit{TPPR} for all vertices in advance, which dominates the time of $\gr$. In particular, $\gr$ takes $99\%$ of the time to compute \textit{TPPR} on most datasets. (2) Computing \textit{TPPR} and the greedy removing process are isolated, which makes the search space of $\gr$ relatively large. Fortunately, in many real-life scenarios, users may allow some inaccuracy for better response time in large networks. Thus, it is desirable to devise approximate solutions for queries. Inspired by this, we propose an approximate local search algorithm to tackle these issues.

\section{Approximate Two-Stage Local Search for \textit{QTCS}} \label{sec:ls}
In this section, we develop an approximate two-stage local search algorithm named $\ls$ for solving our problem \textit{QTCS}.  $\ls$ adopts the expanding and reducing paradigm. The expanding stage estimates the \textit{TPPR} for some vertices, which essentially reduces unnecessary computation. Besides, it also obtains a small vertex set (say $C$) covering all target community members with theoretical guarantees.  The reducing stage identifies an approximate solution directly from $C$ instead of the original large graph, reducing the search space.

\vspace{-0.2cm}
\subsection{The Expanding Stage}  \label{subsec:expanding}

Inspired by the problem of estimating \textit{PPR} \cite{DBLP:conf/focs/AndersenCL06}, we devise a local expanding algorithm. Before proceeding further, we briefly review the simple but efficient algorithm named \textit{Forward\_Push} proposed by Andersen et.al \cite{DBLP:conf/focs/AndersenCL06}.  \textit{Forward\_Push} starts from the source state $s$ and propagates information. The procedure iteratively updates two variables for each state $v$: its  reserve  $\pi(s,v)$ and residue $r(s,v)$. $\pi(s,v)$ indicates the approximate \textit{PPR} value of $v$ w.r.t. $s$ and $r(s,v)$ indicates the information that will be propagated to other states from state $v$. In each iteration, for each state $v$ that needs to propagate information,  \textit{Forward\_Push} propagates $\alpha$$r(s,v)$ to $\pi(s,v)$ and the remaining $(1-\alpha)r(s,v)$ is propagated along its neighbors. After finishing the propagation, \textit{Forward\_Push} sets $r(s,v)$ to zero. \textit{Forward\_Push} has the following equation \cite{DBLP:conf/focs/AndersenCL06}.
\begin{equation} \label{eq:forward}
\textit{PPR}(s,v)=\pi(s,v)+\sum_{w}r(s,w)\textit{PPR}(w,v)
\end{equation}
Where \textit{PPR}$(s,v)$ (resp. \textit{PPR}$(w,v)$) is the \textit{PPR} value of $v$ w.r.t. $s$ (resp. $w$). Our proposed expanding stage is built upon \textit{Forward\_Push}, but incorporates more novel strategies to adapt to ordered temporal edges (because each state in \textit{TPPR} is an ordered temporal edge instead of a vertex).  We first propose one key sub-algorithm in Algorithm \ref{algor:push}, which will be  invoked later to estimate the \textit{TPPR} for some vertices.
The process is similar to \textit{Forward\_Push}, except that the propagation is executed on ordered temporal edges instead of vertices. Note that we set $r(\vec{e})\geq 1/m$ in Algorithm \ref{algor:push} to speed up the propagation and enhance the subsequent pruning technologies.

\begin{lemma} \label{lem:bound}
For any vertex set  $H$ and any vertex $u \in H$, we have $\sum_{v \in N_H(u)}\sum_{\vec{e}_i\in \vec{e}^{in}_{v}}\pi(\vec{e}_i) \leq \rho_{H}(u) \leq \sum_{v \in N_H(u)}\sum_{\vec{e}_i\in \vec{e}^{in}_{v}}\pi(\vec{e}_i)+\sum_{\vec{e}}r(\vec{e})$. 
\end{lemma}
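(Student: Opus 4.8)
The plan is to obtain both inequalities from the temporal analogue of the Forward-Push invariant (Equation~\ref{eq:forward}), combined with two elementary facts that are already available: the reserves $\pi(\cdot)$ and residues $r(\cdot)$ stay non-negative throughout Algorithm~\ref{algor:push} (they start at $0$ and are only incremented by non-negative pushes), and for every single ordered temporal edge $\vec{e}'$ the vector $\widetilde{ppr}(\alpha,\vec{e}')$ — the solution of Equation~\ref{eq:tppr} with $\widetilde{\chi_q}$ replaced by the indicator of $\vec{e}'$ — is a probability distribution over ordered temporal edges, which holds because $\mathbf{P}$ is stochastic (as argued right after Definition~\ref{def:ttp}).

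First I would record the invariant maintained by Algorithm~\ref{algor:push}: for every ordered temporal edge $\vec{e}$,
\[
\widetilde{ppr}(\alpha,\widetilde{\chi_q})(\vec{e}) \;=\; \pi(\vec{e}) \;+\; \sum_{\vec{e}'} r(\vec{e}')\,\widetilde{ppr}(\alpha,\vec{e}')(\vec{e}),
\]
which is Equation~\ref{eq:forward} transcribed to the setting where each state is an ordered temporal edge and the seed is $\widetilde{\chi_q}$. Summing this identity over $\vec{e}_i \in \vec{e}^{in}_v$ and then over $v \in N_H(u)$, and using $tppr(v) = \sum_{\vec{e}_i \in \vec{e}^{in}_v} \widetilde{ppr}(\alpha,\widetilde{\chi_q})(\vec{e}_i)$ (Definition~\ref{def:tppr}) together with $\rho_H(u) = \sum_{v \in N_H(u)} tppr(v)$ (Definition~\ref{def:degree}), I obtain, after swapping the order of summation in the residue term,
\[
\rho_H(u) \;=\; \sum_{v \in N_H(u)}\sum_{\vec{e}_i \in \vec{e}^{in}_v}\pi(\vec{e}_i) \;+\; \sum_{\vec{e}'} r(\vec{e}') \sum_{v \in N_H(u)}\sum_{\vec{e}_i \in \vec{e}^{in}_v} \widetilde{ppr}(\alpha,\vec{e}')(\vec{e}_i).
\]

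The lower bound is then immediate: the residue term is a sum of products of non-negative quantities, so $\rho_H(u) \ge \sum_{v \in N_H(u)}\sum_{\vec{e}_i \in \vec{e}^{in}_v}\pi(\vec{e}_i)$. For the upper bound I would bound the inner double sum $\sum_{v \in N_H(u)}\sum_{\vec{e}_i \in \vec{e}^{in}_v} \widetilde{ppr}(\alpha,\vec{e}')(\vec{e}_i)$ by the total mass $\sum_{\vec{e}} \widetilde{ppr}(\alpha,\vec{e}')(\vec{e}) = 1$; this is valid because every ordered temporal edge has exactly one tail vertex, so the sets $\{\vec{e}^{in}_v : v \in N_H(u)\}$ are pairwise disjoint and contained in the set of all ordered temporal edges. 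Hence the residue term is at most $\sum_{\vec{e}'} r(\vec{e}') = \sum_{\vec{e}} r(\vec{e})$, giving the claimed upper bound. The only non-routine point is justifying the invariant itself for Algorithm~\ref{algor:push}: one must check that a push — which moves $\alpha r(\vec{e})$ into $\pi(\vec{e})$ and distributes $(1-\alpha)r(\vec{e})$ along the transition probabilities $P(\vec{e}\to\cdot)$ — preserves the equality, exactly as in \cite{DBLP:conf/focs/AndersenCL06} but over the $m$ ordered-temporal-edge states, with dangling states (self-loops) and the spread-out seed $\widetilde{\chi_q}$ on $\vec{e}^{out}_q$ handled by the construction of Section~\ref{subsec:tppr}; everything after that is the non-negativity / total-probability argument above.
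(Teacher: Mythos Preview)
Your proposal is correct and follows essentially the same route as the paper's own proof: both rest on the Forward-Push invariant (Equation~\ref{eq:forward}) transcribed to ordered-temporal-edge states, sum over $v\in N_H(u)$ and $\vec e_i\in\vec e^{in}_v$, obtain the lower bound from non-negativity of the residue term, and obtain the upper bound by bounding $\sum_{v\in N_H(u)}\sum_{\vec e_i\in\vec e^{in}_v}\widetilde{ppr}(\alpha,\vec e')(\vec e_i)\le 1$ since each $\widetilde{ppr}(\alpha,\vec e')$ is a probability distribution. The only cosmetic difference is that the paper first invokes linearity to decompose the seed $\widetilde{\chi_q}$ into one-hot vectors and applies the invariant to each before recombining, whereas you state the invariant for the seed $\widetilde{\chi_q}$ directly; this is a harmless streamlining, not a different argument.
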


\begin{proof}
	Let $nnz(\textbf{s})$ and $\textbf{e}_i(\textbf{s})$ be the number of non-zero elements in \textbf{s} and the one-hot vector with only value-1 entry corresponding to the $i$-th non-zero element in \textbf{s}, respectively. Thus, we can write $\textbf{s}=\sum_{i=1}^{nnz(\textbf{s})}s_{i}\textbf{e}_i(\textbf{s})$, where $s_i$ is the i-th non-zeros element in \textbf{s}. According to the linearity \cite{DBLP:conf/focs/AndersenCL06} and Equation \ref{eq:tppr}, we have $\widetilde{ppr}(\alpha,\widetilde{\chi_{q}})=\sum_{i=1}^{|\vec{e}^{out}_q|}(1/|\vec{e}^{out}_q|)\widetilde{ppr}(\alpha,\textbf{e}_i(\widetilde{\chi_{q}}))$. Furthermore, according to Equation \ref{eq:tppr} and \ref{eq:forward}, we have 	$\widetilde{ppr}(\alpha,\textbf{e}_i(\widetilde{\chi_{q}}))(\vec{e})=\pi(\widetilde{\chi_{q}}^{i},\vec{e})+\sum_{\vec{e}_{j}}r(\widetilde{\chi_{q}}^{i}, \vec{e}_j)\\\textit{PPR}(\vec{e}_j, \vec{e})$, where $\widetilde{\chi_{q}}^{i}$ is the ordered temporal edge corresponding to the i-th non-zero element of $\widetilde{\chi_{q}}$. Thus,
	$\rho_H(u)=\sum_{v \in N_H(u)}\sum_{\vec{e}\in \vec{e}^{in}_{v}}\widetilde{ppr}(\alpha,\widetilde{\chi_{q}})(\vec{e})
	=\sum_{v \in N_H(u)}\sum_{\vec{e}\in \vec{e}^{in}_{v}}\sum_{i=1}^{|\vec{e}^{out}_q|}(1/|\vec{e}^{out}_q|)\sum_{\vec{e_{j}}}r(\widetilde{\chi_{q}}^{i}, \vec{e}_j)\\\textit{PPR}(\vec{e}_j, \vec{e})+\sum_{v \in N_H(u)}\sum_{\vec{e}\in \vec{e}^{in}_{v}} \sum_{i=1}^{|\vec{e}^{out}_q|}(1/|\vec{e}^{out}_q|)\pi(\widetilde{\chi_{q}}^{i},\vec{e})$. So, $\rho_H(u) \geq	\sum_{v \in N_H(u)}\sum_{\vec{e}\in \vec{e}^{in}_{v}} \sum_{i=1}^{|\vec{e}^{out}_q|}(1/|\vec{e}^{out}_q|)\pi(\widetilde{\chi_{q}}^{i},\vec{e}) 
 = \sum_{v \in N_H(u)}\sum_{\vec{e}\in \vec{e}^{in}_{v}} \pi(\vec{e})$. $\sum_{v \in N_H(u)}\sum_{\vec{e}\in \vec{e}^{in}_{v}}\sum_{i=1}^{|\vec{e}^{out}_q|}(1/|\vec{e}^{out}_q|)\\\sum_{\vec{e_{j}}}\textit{PPR}(\vec{e}_j, \vec{e})r(\widetilde{\chi_{q}}^{i}, \vec{e}_j)
=\sum_{\vec{e_{j}}}\sum_{v \in N_H(u)}\sum_{\vec{e}\in \vec{e}^{in}_{v}}\textit{PPR}(\vec{e}_j, \vec{e})\\\sum_{i=1}^{|\vec{e}^{out}_q|}(1/|\vec{e}^{out}_q|)r(\widetilde{\chi_{q}}^{i}, \vec{e}_j)
	=\sum_{\vec{e_{j}}}\sum_{v \in N_H(u)}\sum_{\vec{e}\in \vec{e}^{in}_{v}}\\\textit{PPR}(\vec{e}_j, \vec{e})r(\vec{e}_j)
	=\sum_{\vec{e_{j}}}r(\vec{e}_j)\sum_{v \in N_H(u)}\sum_{\vec{e}\in \vec{e}^{in}_{v}}\textit{PPR}(\vec{e}_j, \vec{e})
	\leq \sum_{\vec{e_{j}}}r(\vec{e}_j)$. So, $ \rho_H(u)\leq 	\sum_{v \in N_H(u)}\sum_{\vec{e}\in \vec{e}^{in}_{v}} \pi(\vec{e})+\sum_{\vec{e_{j}}}r(\vec{e}_j)$.
\end{proof}
Based on Lemma \ref{lem:bound}, we present two powerful pruning techniques used in the expanding stage. These techniques can delete some unqualified vertices or early terminate the expanding stage with theoretical guarantees.  For simplicity, we denote $C$ as the expanded vertex set for the following reducing stage, $Q$ as the candidate vertices which are neighbors of $C$ and not in $C$, $\widehat{\beta}$ as the best estimate of minimum query-biased temporal degree so far,  $D$ as the visited vertices to avoid repeated visits. Let $\widehat{tppr}(v)=\sum_{\vec{e}_i\in \vec{e}^{in}_{v}}\pi(\vec{e}_i)$ be the lower bound of \textit{TPPR} for vertex $v$ by Lemma \ref{lem:bound}.

\begin{lemma}\label{lem:upper_pruning} \textbf{[bound-based pruning]}
For a vertex $v$,  we can safely prune the vertex $v$ if $ \sum_{\vec{e}}r(\vec{e})+\sum_{w\in N_V(v)}\widehat{tppr}(w) < \widehat{\beta}$.
\end{lemma}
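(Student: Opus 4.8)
The plan is to argue by contradiction: assuming the stated inequality holds for $v$, I will show that $v$ cannot belong to the target community $S^{*}$ of \textit{QTCS}, hence discarding $v$ from the candidate set $Q$ (and never inserting it into $C$) cannot remove any genuine community member — which is exactly what ``safely prune'' means. Throughout I may assume $v\neq q$, since the query vertex is forced into the community and is never a pruning candidate.

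First I would fix the one structural fact I need about $\widehat{\beta}$: by construction it is the minimum query-biased temporal degree of some connected $q$-containing subgraph maintained so far, evaluated with the lower bounds $\widehat{tppr}(\cdot)\le tppr(\cdot)$ (the left inequality of Lemma \ref{lem:bound}) in place of the exact \textit{TPPR}; since the optimum $\beta^{*}=\min\{\rho_{S^{*}}(u)\mid u\in S^{*}\}$ maximizes the minimum query-biased temporal degree over feasible subgraphs, this gives $\widehat{\beta}\le\beta^{*}$. Next, suppose for contradiction that $v\in S^{*}$. Then Definition \ref{def:tpcore} together with Problem 1 yields $\rho_{S^{*}}(v)\ge\beta^{*}\ge\widehat{\beta}$.

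Now I would upper-bound $\rho_{S^{*}}(v)$ by quantities the expanding stage actually maintains. Since $S^{*}\subseteq V$ and $v\in S^{*}$, the monotonic property (Lemma \ref{lem:anti_degree}) gives $\rho_{S^{*}}(v)\le\rho_{V}(v)$; applying Lemma \ref{lem:bound} with $H=V$ and $u=v$ then gives $\rho_{V}(v)\le\sum_{w\in N_{V}(v)}\widehat{tppr}(w)+\sum_{\vec{e}}r(\vec{e})$. Chaining these, $\widehat{\beta}\le\rho_{S^{*}}(v)\le\rho_{V}(v)\le\sum_{w\in N_{V}(v)}\widehat{tppr}(w)+\sum_{\vec{e}}r(\vec{e})$, which directly contradicts the hypothesis $\sum_{\vec{e}}r(\vec{e})+\sum_{w\in N_{V}(v)}\widehat{tppr}(w)<\widehat{\beta}$. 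Hence $v\notin S^{*}$, and pruning $v$ is safe.

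The delicate point is the very first step: justifying that $\widehat{\beta}$ is a genuine lower bound on $\beta^{*}$ rather than merely ``some estimate'', which requires tying $\widehat{\beta}$ explicitly to a concrete feasible (connected, $q$-containing) subgraph and invoking both $\widehat{tppr}\le tppr$ and the max-min nature of the objective. A secondary point worth stating carefully is that Lemma \ref{lem:bound} must be used in its tight form, contributing a single additive $\sum_{\vec{e}}r(\vec{e})$ rather than one copy per neighbour of $v$ — this is precisely what makes the pruning rule nontrivial, since the naive per-vertex bound $tppr(w)\le\widehat{tppr}(w)+\sum_{\vec e}r(\vec e)$ summed over $N_V(v)$ would be far too weak.
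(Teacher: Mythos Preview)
Your proposal is correct and follows essentially the same route as the paper: assume $v$ lies in the target community, use monotonicity (Lemma \ref{lem:anti_degree}) to pass from $\rho_{S^{*}}(v)$ to $\rho_{V}(v)$, then invoke the upper bound of Lemma \ref{lem:bound} to derive a contradiction with the pruning inequality. The one place where you are actually more careful than the paper is in explicitly arguing that $\widehat{\beta}\le\beta^{*}$ (via $\widehat{tppr}\le tppr$ and feasibility of the current $C$); the paper simply writes ``which contradicts with $S$ being a \emph{query-centered} temporal community'' and leaves that step implicit, so your added paragraph on the ``delicate point'' is a genuine improvement in rigor rather than a deviation.
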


\begin{proof}
 Assume that there is a \emph{query-centered} temporal community $S$ such that $v\in S$. Since the query-biased temporal degree is  monotonically increasing by Lemma \ref{lem:anti_degree}, $\rho_S(v) \leq \rho_V(v)$ for $v$ holds due to $S \subseteq V$.	According to Lemma \ref{lem:bound}, we have $\rho_S(v)  \leq \rho_V(v) \leq \sum_{\vec{e}}r(\vec{e})+\sum_{w\in N_V(v)}\widehat{tppr}(w)$.  If $\sum_{\vec{e}}r(\vec{e})+\sum_{w\in N_V(v)}\widehat{tppr}(w)<\widehat{\beta}$, we have that $\rho_S(v)< \widehat{\beta}$. Clearly, $\min\{\rho_
	S(u)|u\in S\} \leq \rho_S(v) <\widehat{\beta}$, which
	contradicts with $S$ being a \emph{query-centered} temporal community. So, we can safely remove $v$ without loss of accuracy.
\end{proof}

\begin{lemma}\label{lem:stop1} \textbf{[stop expanding-I]}
	Given the current expanded vertices $C$ and candidate vertices $Q$, we can safely  terminate the expanding stage if $Q=\emptyset$.
\end{lemma}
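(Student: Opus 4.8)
The plan is to prove a slightly stronger statement: when $Q=\emptyset$, the expanded set $C$ already contains every vertex of the true \emph{query-centered} temporal community $S$ (the solution of Problem~1), so discarding the rest of the graph and handing $C$ to the reducing stage loses nothing. First I would isolate the invariant the expanding stage maintains: it grows $C$ outward from $q\in C$, and every vertex adjacent to $C$ is eventually placed into the candidate set $Q$; once such a vertex is popped from $Q$ it is either inserted into $C$ or discarded by the bound-based pruning rule of Lemma~\ref{lem:upper_pruning} (and marked in $D$ so it is never reconsidered). Hence the hypothesis $Q=\emptyset$ says exactly that every neighbour of $C$ that lies outside $C$ has already been pruned via Lemma~\ref{lem:upper_pruning}.

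Next I would invoke connectivity. Since $G_S$ is connected and $q\in S\cap C$, if $S\not\subseteq C$ then $S\cap C$ is a non-empty proper subset of the connected vertex set $S$, so there is an edge $(u,v)\in E_S$ with $u\in S\cap C$ and $v\in S\setminus C$. This $v$ is a neighbour of $C$ that is not in $C$, so by the invariant above it must have been pruned via Lemma~\ref{lem:upper_pruning}. But Lemma~\ref{lem:upper_pruning} guarantees that a pruned vertex belongs to no \emph{query-centered} temporal community, contradicting $v\in S$. Therefore $S\subseteq C$, and since the reducing stage searches only within $C$, the optimal community is still attainable; the expansion may be stopped safely.

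The contradiction itself is short; the one point that needs care — and the main obstacle — is making the invariant watertight. I must be sure that ``$Q=\emptyset$'' genuinely exhausts the frontier of $C$ (no adjacent vertex slips through unexamined), and that a vertex pruned at an earlier iteration, when the running estimate $\widehat{\beta}$ was smaller, is still correctly excluded. The latter holds because $\widehat{\beta}$ is nondecreasing over the expansion and never exceeds the optimal $\beta$ (it is witnessed by a concrete connected subgraph through $q$, exactly as used in the proof of Lemma~\ref{lem:upper_pruning}), so by the monotonicity of the query-biased temporal degree (Lemma~\ref{lem:anti_degree}) an earlier pruning decision remains valid; spelling this out is where the real work lies.
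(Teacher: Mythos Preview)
Your proposal is correct and follows essentially the same route as the paper: both argue that when $Q=\emptyset$ every neighbour of $C$ outside $C$ has been pruned via Lemma~\ref{lem:upper_pruning}, and then use connectivity of the target community $S$ (with $q\in S\cap C$) to force $S\subseteq C$. One minor inaccuracy: in Algorithm~\ref{algor:expanding} pruning happens \emph{before} a vertex is pushed into $Q$ (Lines~12--13), not after it is popped; every popped vertex is added to $C$. This does not affect your argument, and your explicit treatment of the invariant and of why earlier pruning decisions remain valid (because $\widehat{\beta}\le\beta^*$ at every step and Lemma~\ref{lem:bound} holds as an invariant of the push procedure) is in fact more careful than the paper's one-line ``we can clearly prune every vertex $u\in N_V(C)$''.
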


\begin{proof}
	Let $N_V(C)=\{u|N_V(u) \cap C \neq \emptyset\}$, we can clearly prune every vertex $u \in N_V(C)$ if $Q=\emptyset$. Assume that there is a query-centered temporal community $S$ containing $C$,  we have $N_V(v) \cap C = \emptyset$ for any $v \in S \setminus C$. Namely, $G_S$ is a disconnected subgraph, which contradicts with $G_S$ is connected by (i) of Definition \ref{def:tpcore}. So, we can safely stop the expanding stage when $Q=\emptyset$.
\end{proof}
\begin{lemma}\label{lem:stop2} \textbf{[stop expanding-II]}
Given the current expanded vertices $C$ and candidate vertices $Q$, we can set $C= C \cup Q$ and safely terminate the expanding stage	if $ \sum_{\vec{e}}r(\vec{e})+\sum_{w \in Q} \widehat{tppr}(w) < \widehat{\beta}$.
\end{lemma}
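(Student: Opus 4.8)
The plan is to show that, under the stated inequality, every vertex lying outside $C\cup Q$ can be discarded by the pruning rule of Lemma~\ref{lem:upper_pruning}, so that no such vertex belongs to any \emph{query-centered} temporal community; in particular the community sought in Problem~1 is entirely contained in $C\cup Q$, and therefore replacing $C$ by $C\cup Q$ and halting the expansion discards nothing. This is the direct analogue of the proof of Lemma~\ref{lem:stop1}, where the frontier $Q$ was empty; here $Q$ is instead ``light'', meaning that the residual mass together with the reserve mass sitting on $Q$ already falls below $\widehat{\beta}$.

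The first step I would make explicit is a support invariant of the \textit{Forward\_Push}-style expansion that the rest of the argument rests on: a nonzero reserve $\pi(\vec{e})$ is ever created only for an ordered temporal edge $\vec{e}$ whose head vertex has already been inserted into $C$ (the expansion seeds the push at the states of $\widetilde{\chi_{q}}$, all with head $q\in C$, and only ever pushes from states whose head lies in $C$). Hence $\pi(\vec{e})>0$ forces $head(\vec{e})\in C$, so $tail(\vec{e})$ is a neighbour of a vertex of $C$ and therefore $tail(\vec{e})\in C\cup Q$ by the definition $Q=N_V(C)\setminus C$. Summing the reserves entering a fixed vertex, this yields $\widehat{tppr}(x)=\sum_{\vec{e}\in\vec{e}^{in}_{x}}\pi(\vec{e})=0$ for every $x\notin C\cup Q$.

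Next I would take an arbitrary $v\notin C\cup Q$ and show it satisfies the hypothesis of Lemma~\ref{lem:upper_pruning}. Because $v\notin C$ and $v\notin Q$, the definition of $Q$ gives $N_V(v)\cap C=\emptyset$; combined with the support fact above and $\widehat{tppr}\ge 0$, this gives $\sum_{w\in N_V(v)}\widehat{tppr}(w)=\sum_{w\in N_V(v)\cap Q}\widehat{tppr}(w)\le\sum_{w\in Q}\widehat{tppr}(w)$. Consequently $\sum_{\vec{e}}r(\vec{e})+\sum_{w\in N_V(v)}\widehat{tppr}(w)\le\sum_{\vec{e}}r(\vec{e})+\sum_{w\in Q}\widehat{tppr}(w)<\widehat{\beta}$, which is precisely the premise of Lemma~\ref{lem:upper_pruning} for $v$; hence $v$ is in no \emph{query-centered} temporal community (this invocation silently uses Lemma~\ref{lem:bound} with $H=V$, the monotonicity of Lemma~\ref{lem:anti_degree}, and that $\widehat{\beta}$ never exceeds the optimal $\beta$). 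Since $v$ was arbitrary outside $C\cup Q$, the target community is contained in $C\cup Q$, so we may set $C\leftarrow C\cup Q$ and terminate the expanding stage, passing $C\cup Q$ to the reducing stage.

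The hard part is the support invariant of the second paragraph — that reserves $\pi$ stay confined to ordered temporal edges incident to $C\cup Q$. This is a property of the way the expanding stage drives Algorithm~\ref{algor:push} (pushing only from states whose head is already in $C$), rather than something that follows from the lemmas proved so far, so I would have to state that discipline precisely and check it is preserved by each expansion step. Everything after that is a one-line chain of inequalities resting on Lemmas~\ref{lem:bound}, \ref{lem:anti_degree} and \ref{lem:upper_pruning}, with no real computation involved.
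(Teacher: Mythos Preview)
Your argument rests on the identification $Q=N_V(C)\setminus C$, but in Algorithm~\ref{algor:expanding} this is not what $Q$ is: a neighbour $v$ of some $u\in C$ is added to the visited set $D$ but is pushed into $Q$ only if it passes the bound test of lines~12--13. Consequently there may be \emph{pruned} vertices sitting in $D\setminus(C\cup Q)$, and these break both of your key claims. First, such a pruned vertex $x$ is a neighbour of $C$, so some $\vec{e}\in\vec{e}^{in}_{x}$ with $head(\vec{e})\in C$ has been pushed and $\widehat{tppr}(x)>0$; your support invariant ``$\widehat{tppr}(x)=0$ for every $x\notin C\cup Q$'' is therefore false in general. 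Second, a vertex $v\notin C\cup Q$ can still have neighbours in $C$ (namely when $v$ itself is pruned), and an unvisited $v\in V\setminus D$ can have a pruned neighbour $w\in D\setminus(C\cup Q)$ with $\widehat{tppr}(w)>0$; in either situation $\sum_{w\in N_V(v)}\widehat{tppr}(w)$ is not dominated by $\sum_{w\in Q}\widehat{tppr}(w)$, so the hypothesis of Lemma~\ref{lem:upper_pruning} need not hold and your invocation of it fails.

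The paper's proof sidesteps this by not trying to re-apply Lemma~\ref{lem:upper_pruning}. Vertices in $D\setminus(C\cup Q)$ are already known (from their earlier pruning) to lie in no target community $S$. For each unvisited $u\in V\setminus D$ the paper bounds $\rho_S(u)$ directly via Lemma~\ref{lem:bound} with $H=S$: since $N_V(u)\cap C=\emptyset$ and $S$ contains no pruned vertex, the only neighbours of $u$ in $S$ carrying nonzero $\widehat{tppr}$ lie in $Q$, whence $\rho_S(u)\le\sum_{\vec{e}}r(\vec{e})+\sum_{w\in Q}\widehat{tppr}(w)<\widehat{\beta}$ and a contradiction. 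The crucial move is replacing $N_V(u)$ by $N_S(u)$, which is precisely what kills the contributions from pruned vertices that your argument cannot control.
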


\begin{proof}
By Algorithm \ref{algor:push} and \ref{algor:expanding}, we have $\widehat{tppr}(v) \neq 0$ for  vertex $v \in D$. For any unvisited vertex $u \in V \setminus D$, we assume that there is a \emph{query-centered} temporal community $S$ such that $u \in S$. Thus, we have $\sum_{w\in N_S(u)}\widehat{tppr}(w) = \sum_{w\in N_S(u)\cap D}\widehat{tppr}(w) \leq \sum_{w\in Q}\widehat{tppr}(w)+\sum_{w\in N_S(u)\cap (D\setminus (C\cup Q))}\widehat{tppr}(w)=\sum_{w\in Q}\widehat{tppr}(w)$, because $D\setminus (C\cup Q)$ is the unqualified vertex set during the expanding stage and $S \cap (D\setminus (C\cup Q)) = \emptyset$.  If $\sum_{\vec{e}}r(\vec{e})+ \sum_{w\in Q}\widehat{tppr}(w) <\widehat{\beta}$, we have  $\sum_{\vec{e}}r(\vec{e})+\sum_{w\in N_S(u)}\widehat{tppr}(w)<\widehat{\beta}$. Moreover, according to Lemma \ref{lem:bound}, we have that $\rho_S(u)< \widehat{\beta}$. Clearly, $\min\{\rho_S(v)|v\in S\} \leq \rho_S(u) <\widehat{\beta}$, which
contradicts with $S$ is a \emph{query-centered} temporal community. So, we can safely remove $u$. That is, we can prune any vertex $u \in V \setminus D$ if $\sum_{\vec{e_{j}}}r(\vec{e}_j)+ \sum_{w\in Q}\widehat{tppr}(w) <\widehat{\beta}$. There is no evidence to remove any vertex $u\in Q$, thus we directly set $C=C\cap Q$ for simplicity.
\end{proof}

\begin{algorithm}[t]
	\scriptsize
	\caption{\textit{Propagation}($\vec{e}$)} \label{algor:push}
	\begin{algorithmic}[1]
		\If{$r(\vec{e})\geq 1/m$}
		
		\For{each $\vec{e}_1 \in N^{>}(\vec{e})$}
		\State $r(\vec{e}_1) \leftarrow r(\vec{e}_1)+(1-\alpha)r(\vec{e})P(\vec{e} \to \vec{e}_1)$
		\EndFor
		\State $\pi(\vec{e}) \leftarrow \pi(\vec{e})+\alpha r(\vec{e})$, $\widehat{tppr}(tail(\vec{e})) \leftarrow \widehat{tppr}(tail(\vec{e}))+ \alpha r(\vec{e})$
		\State $r(\vec{e}) \leftarrow 0$
		\EndIf
	\end{algorithmic}
\end{algorithm}

\begin{algorithm}[t]
	\scriptsize
	\caption{\textit{Expanding} $(\mathcal{G}, q, \alpha)$} \label{algor:expanding}
			\begin{flushleft}
		\hspace*{0.02in} {\bf Input:}
		temporal graph $\mathcal{G}$; query vertex $q$; teleportation probability $\alpha$\\
		\hspace*{0.02in} {\bf Output:}
	expanded vertex set $C$, $r$  and  $\widehat{tppr}$
	\end{flushleft}
	\begin{algorithmic}[1]
				\State $r \leftarrow \{\}$; $\pi \leftarrow \{\}$; $\widehat{tppr} \leftarrow \{\}$
		\State  $r(\vec{e}) \leftarrow 1/|\vec{e}^{out}_q|$ for all $\vec{e} \in \vec{e}^{out}_q$
		\State  $C \leftarrow \emptyset$; $\widehat{\beta}\leftarrow 0$; $Q \leftarrow \{q\}$; $D \leftarrow \{q\}$
		\While{$Q \neq \emptyset$}
		\State $u\leftarrow Q.pop()$;   $C\leftarrow C \cup \{u\}$
		\For {$\vec{e}\in \vec{e}^{out}_u$}
\State $Propagation(\vec{e})$
\EndFor

	\If{$\min\{\sum_{v\in N_C(w)}\widehat{tppr}(v)|w \in C\}>\widehat{\beta}$}
		\State $\widehat{\beta}\leftarrow \min\{\sum_{v\in N_C(w)}\widehat{tppr}(v)|w \in C\}$
		\EndIf
		\For {$v \in N_V(u)$ and $v \notin D$}
		\State $D \leftarrow D \cup \{v\}$
		\If{$\sum_{\vec{e}}r(\vec{e})+\sum_{w\in N_V(v)}\widehat{tppr}(w) \geq \widehat{\beta}$}
		\State $Q.push(v)$
		\EndIf
		\EndFor	
		\If {$\sum_{\vec{e}}r(\vec{e})+\sum_{w \in Q} \widehat{tppr}(w) < \widehat{\beta}$}
		\State $C \leftarrow C \cup Q$
		\State break
		\EndIf

				\EndWhile
		\State \Return $C$, $r$ and $\widehat{tppr}$
	\end{algorithmic}
\end{algorithm}

With these powerful pruning techniques, we introduce  Algorithm
\ref{algor:expanding} to implement the expanding stage. Specifically, in Lines 1-2, the algorithm first initializes $r$ and $\pi$ for ordered temporal edges, which are used to estimate the query-biased temporal degree (Lemma \ref{lem:bound}).  In Lines 4-16, it executes the expanding process. In particular, it pops a vertex $u$ from queue $Q$ to execute the propagation process and adds $u$ into the expanded vertex set $C$ (Lines 5-7). After the propagation, it updates the estimate of minimum query-biased temporal degree (Lines 8-9). In Lines 10-13,  for each neighbor vertex $v$ of $u$, it uses the bound-based pruning technique (Lemma \ref{lem:upper_pruning}) to remove unqualified vertices. Once the queue $Q$ becomes the empty set or $\sum_{\vec{e}}r(\vec{e})+\sum_{w \in Q} \widehat{tppr}(w) < \widehat{\beta}$, the algorithm  stops expanding according to stop expanding pruning techniques in Lemma \ref{lem:stop1} and Lemma \ref{lem:stop2}. Clearly, the  vertex set $C$ returned by Algorithm \ref{algor:expanding} covers all target community members.

\begin{theorem} \label{thm:alg5}
The  time complexity and space complexity of Algorithm \ref{algor:expanding} are $O(\sum_{u \in C} \sum_{\vec{e}\in \vec{e}^{out}_{u}} |N^{>}(\vec{e})|)$ and $O(n+m)$ respectively.
\end{theorem}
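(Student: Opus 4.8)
The plan is to bound the work performed by Algorithm~\ref{algor:expanding} by a charging argument that attributes all cost to the \emph{Propagation} calls on ordered temporal edges leaving the vertices that actually enter $C$. First I would observe that the main loop (Lines 4--16) pops each vertex $u$ at most once — because a vertex is added to the visited set $D$ before it is ever pushed onto $Q$ (Lines 10--13 and the initialization in Line 3), and nothing is removed from $D$ — so the total number of iterations is $|C|$. Within one iteration for vertex $u$, the dominant work is the loop over $\vec{e}\in\vec{e}^{out}_u$ that invokes $Propagation(\vec{e})$ (Lines 6--7); by Algorithm~\ref{algor:push}, one such call touches each $\vec{e}_1\in N^{>}(\vec{e})$ once to update its residue, plus $O(1)$ bookkeeping for the reserve and $\widehat{tppr}$. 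Hence the cost of the propagation phase for $u$ is $O\!\left(\sum_{\vec{e}\in\vec{e}^{out}_u}|N^{>}(\vec{e})|\right)$, and summing over all $u\in C$ gives the claimed $O\!\left(\sum_{u\in C}\sum_{\vec{e}\in\vec{e}^{out}_u}|N^{>}(\vec{e})|\right)$ bound, provided the remaining per-iteration bookkeeping is subsumed.

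The next step is to verify that the auxiliary work per iteration — maintaining $\widehat{\beta}$ in Lines 8--9, scanning neighbours in Lines 10--13, and evaluating the stop-condition in Line 14 — does not exceed this bound. The neighbour scan over $N_V(u)$ is naturally charged to the de-temporal edges incident to $u$, and each such static edge $(u,v)$ corresponds to at least one ordered temporal edge in $\vec{e}^{out}_u$, so $|N_V(u)|\le \sum_{\vec{e}\in\vec{e}^{out}_u}1 \le \sum_{\vec{e}\in\vec{e}^{out}_u}(1+|N^{>}(\vec{e})|)$, which is within the stated order (here one uses that $N^{>}(\vec{e})$ being empty only costs the $O(1)$ term already paid inside \emph{Propagation}). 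For the $\widehat{\beta}$ update and the stop-test one maintains $\sum_{\vec{e}}r(\vec{e})$ incrementally (each \emph{Propagation} changes it by a computable amount) and maintains $\sum_{v\in N_C(w)}\widehat{tppr}(v)$ via the same neighbour scan already charged, so these are $O(1)$ or already-charged per iteration. The space bound $O(n+m)$ follows because $r$ and $\pi$ have one entry per ordered temporal edge ($O(m)$ total), $\widehat{tppr}$, $C$, $Q$, $D$ have one entry per vertex ($O(n)$), and storing $\mathcal{G}$ itself is $O(n+m)$.

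The main obstacle I anticipate is the bookkeeping for the queue-and-priority structure: the analysis needs the pop in Line~5 and the membership tests for $D$ and $Q$ to be $O(1)$ amortized (a hash set for $D$, a plain FIFO or stack for $Q$), so that no $\log$ factor creeps in — the theorem statement carries no $\log n$ term, unlike Theorem~\ref{thm:alg2}. One must also be slightly careful that a vertex $v$ pruned by Lemma~\ref{lem:upper_pruning} in Line~12 is nonetheless placed in $D$ (Line~11), so it is never re-examined; this is what makes the $|C|$-iteration count and the per-edge charging honest. Finally, a subtle point worth one sentence: $C$ may be enlarged wholesale by $C\leftarrow C\cup Q$ in Line~15 upon the stop-expanding-II break, but those extra vertices in $Q\setminus(\text{already-propagated } C)$ are never propagated from, so they contribute only to the $O(n)$ space term and to an $O(|Q|)=O(n)$ additive term in time, which is dominated; I would remark that the stated bound should be read as $O\!\left(\sum_{u\in C'}\sum_{\vec{e}\in\vec{e}^{out}_u}|N^{>}(\vec{e})| + n\right)$ where $C'$ is the set of vertices actually popped, matching the intent of the theorem.
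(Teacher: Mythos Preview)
Your proposal is correct and follows essentially the same approach as the paper: charge the dominant cost to the \emph{Propagation} calls, note that each call on $\vec{e}$ costs $O(|N^{>}(\vec{e})|)$, and sum over $\vec{e}\in\vec{e}^{out}_u$ for $u\in C$; the space bound comes from $r,\pi$ being indexed by ordered temporal edges and the remaining structures by vertices. In fact your argument is more careful than the paper's own proof, which simply asserts that Lines~6--7 dominate without discussing the incremental maintenance of $\widehat{\beta}$, $\sum_{\vec{e}}r(\vec{e})$, or the $C\leftarrow C\cup Q$ subtlety that you rightly flag.
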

\begin{proof}	
Algorithm \ref{algor:push} consumes $O(|N^{>}(\vec{e})|)$ time to execute the  propagation process for each ordered temporal edge $\vec{e}$. Thus, in Lines 6-7 of Algorithm \ref{algor:expanding}, it takes $O(\sum_{\vec{e}\in \vec{e}^{out}_{u}} |N^{>}(\vec{e})|)$ time for every vertex $u \in C$. So, Algorithm \ref{algor:expanding} consumes $O(\sum_{u \in C} \sum_{\vec{e}\in \vec{e}^{out}_{u}} |N^{>}(\vec{e})|)$ in total. 
Algorithm \ref{algor:expanding} uses $O(m)$ extra space to maintain the reserve $r$ and residue $\pi$  for estimating the query-biased temporal degree. Besides, we also need $O(m+ n)$ space to maintain the whole temporal graph. So, the space complexity of Algorithm \ref{algor:expanding} is $O(n+m)$.
\end{proof}
\stitle{Remark.} By Theorem \ref{thm:alg5}, the time complexity of Algorithm \ref{algor:expanding} depends on the vertex set $C$, while our experiments (Section \ref{sec:experiments}) show $C$  is typically very small due to the proposed powerful pruning techniques in Lemma \ref{lem:upper_pruning}, \ref{lem:stop1} and \ref{lem:stop2}. Thus, the expanding stage can drastically delete many unqualified vertices, saving the time of the following reducing stage.

\subsection{The Reducing Stage} \label{subsec:reducing}

In the reducing stage, we identify an approximate \emph{query-centered} temporal community directly from the subset $C$ found by the previous expanding stage. At a high level, this stage progressively removes the vertices in $C$ that are not contained in the approximate solution. Until the remaining vertices meet the given approximation ratio. Choosing which vertices to remove is a significant challenge. Thus, we devise the following definition and lemma to guarantee the quality of the search.

\begin{definition}
\label{def:approximation}
For a vertex set $H$ and  $\epsilon \geq 1$, if $\min\{\rho_{H}(u)|u \in H\} \leq \beta^{*}\leq \epsilon \cdot \min\{\rho_{H}(u)|u \in H\}$, we say $H$ is an $\epsilon$-approximate \textit{QTCS}, where $\beta^{*}$ is the optimal value for \textit{QTCS}.
\end{definition}

\begin{lemma} \label{lem:approximation}
	For the current search space $R$ and  $\epsilon \geq 1$, we can safely prune $u\in R$ without losing any $\epsilon$-approximate \textit{QTCS} if $\epsilon \cdot \sum_{v \in N_R(u)}\widehat{tppr}(v)<\max \{\sum_{w \in N_C(v)}\widehat{tppr}(w)| v \in C\}+\sum_{\vec{e}}r(\vec{e})$.
\end{lemma}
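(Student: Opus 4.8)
The plan is to argue by contradiction, following the same template as the proofs of Lemmas \ref{lem:upper_pruning}--\ref{lem:stop2}. I would suppose that the stated inequality holds and yet some $\epsilon$-approximate \textit{QTCS} $H$ with $H\subseteq R$ (the current search space) still contains $u$, and then derive a contradiction with Definition \ref{def:approximation}. The engine of the proof will be the two-sided estimate of Lemma \ref{lem:bound} together with the monotonic property of Lemma \ref{lem:anti_degree}, applied once to the search space $R$ and once to the expanded set $C$.

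First I would write down what the approximation guarantee forces: by Definition \ref{def:approximation}, $\beta^{*}\le \epsilon\cdot\min\{\rho_H(v)\mid v\in H\}$, hence $\rho_H(u)\ge \min\{\rho_H(v)\mid v\in H\}\ge \beta^{*}/\epsilon$; the whole point will be to contradict this by showing $\rho_H(u)<\beta^{*}/\epsilon$. To bound $\rho_H(u)$ from above, I use $H\subseteq R$, so $N_H(u)\subseteq N_R(u)$ and, by Lemma \ref{lem:anti_degree}, $\rho_H(u)\le \rho_R(u)$; then Lemma \ref{lem:bound} instantiated at $R$ gives $\rho_R(u)\le \sum_{v\in N_R(u)}\widehat{tppr}(v)+\sum_{\vec e}r(\vec e)$. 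Multiplying by $\epsilon$ and feeding in the pruning hypothesis $\epsilon\sum_{v\in N_R(u)}\widehat{tppr}(v)<\max\{\sum_{w\in N_C(v)}\widehat{tppr}(w)\mid v\in C\}+\sum_{\vec e}r(\vec e)$ then bounds $\epsilon\,\rho_H(u)$ above by $\max\{\sum_{w\in N_C(v)}\widehat{tppr}(w)\mid v\in C\}$ plus residue terms.

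The remaining and most delicate step, which I expect to be the main obstacle, is to certify that $\max\{\sum_{w\in N_C(v)}\widehat{tppr}(w)\mid v\in C\}+\sum_{\vec e}r(\vec e)$ can legitimately play the role of $\beta^{*}$ so that the above chain collapses to $\epsilon\,\rho_H(u)<\beta^{*}$. Here I would invoke the invariant established right after Algorithm \ref{algor:expanding}, namely that the expanded set $C$ covers every vertex of the exact \emph{query-centered} temporal community $S$, so $S\subseteq C$; by Lemma \ref{lem:anti_degree} this gives $\rho_C(w)\ge \rho_S(w)\ge \beta^{*}$ for each $w\in S$, hence $\max\{\rho_C(v)\mid v\in C\}\ge \beta^{*}$, and a second use of Lemma \ref{lem:bound} at the set $C$ ties $\rho_C(v)$ to $\sum_{w\in N_C(v)}\widehat{tppr}(w)$ up to the residue mass $\sum_{\vec e}r(\vec e)$. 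The careful accounting of exactly how the residue terms $\sum_{\vec e}r(\vec e)$ and the gap between $\widehat{tppr}$ and the true \textit{TPPR} values distribute across the $R$-side bound on $\rho_R(u)$ and the $C$-side lower estimate of $\beta^{*}$ is where the argument is subtle; once that bookkeeping is made consistent, the pruning hypothesis yields $\rho_H(u)<\beta^{*}/\epsilon$, contradicting $\rho_H(u)\ge \beta^{*}/\epsilon$. Therefore no $\epsilon$-approximate \textit{QTCS} contained in $R$ can contain $u$, so $u$ may be removed from the search space without losing any $\epsilon$-approximate \textit{QTCS}.
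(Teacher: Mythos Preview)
Your overall plan mirrors the paper's argument closely: contradiction via Definition~\ref{def:approximation}, monotonicity (Lemma~\ref{lem:anti_degree}) to pass from $H$ to $R$, Lemma~\ref{lem:bound} for the estimates, and the invariant $S\subseteq C$ to cap $\beta^{*}$ by quantities over $C$. The one substantive divergence is which half of Lemma~\ref{lem:bound} is applied on the $R$-side. The paper uses the \emph{lower} bound $\rho_R(u)\ge \sum_{v\in N_R(u)}\widehat{tppr}(v)$, so no residue term ever appears on the $R$-side; it then bounds $\beta^{*}$ from above by $\max_{v\in C}\rho_C(v)\le \max_{v\in C}\sum_{w\in N_C(v)}\widehat{tppr}(w)+\sum_{\vec e}r(\vec e)$, and that is where the single residue in the hypothesis comes from. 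You instead reach for the \emph{upper} bound $\rho_R(u)\le \sum_{v\in N_R(u)}\widehat{tppr}(v)+\sum_{\vec e}r(\vec e)$.

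The residue bookkeeping you flag as ``subtle'' is a real gap on your route and does not disappear with more care: chaining your bounds gives $\epsilon\,\rho_H(u)\le \epsilon\,\rho_R(u)\le \epsilon\sum_{v\in N_R(u)}\widehat{tppr}(v)+\epsilon\sum_{\vec e}r(\vec e)$, and feeding in the hypothesis only yields $\epsilon\,\rho_H(u)<\max\{\cdots\}+(1+\epsilon)\sum_{\vec e}r(\vec e)$, which overshoots the available estimate $\beta^{*}\le \max\{\cdots\}+\sum_{\vec e}r(\vec e)$ (and there is no compensating lower bound on $\beta^{*}$ to absorb the extra $\epsilon\sum_{\vec e} r(\vec e)$). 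The paper avoids this by never picking up a residue on the $R$-side, at the price that its implication chain then reads in the opposite direction --- from $\rho_R(u)\ge \sum\widehat{tppr}$ one cannot conclude that $\epsilon\sum\widehat{tppr}<\beta^{*}$ implies $\epsilon\,\rho_R(u)<\beta^{*}$ --- so the paper's own write-up is informal at exactly this step. In short: same skeleton, different half of Lemma~\ref{lem:bound}; your instinct that the residue accounting is the crux is correct, and it is a genuine obstruction along the route you chose.
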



\begin{proof}
Assume that there is an $\epsilon$-approximate \textit{QTCS} $H\subseteq R$ such that $u \in H$,  we have $\epsilon \cdot \rho_{H}(u)\geq \beta^{*}$ due to Definition \ref{def:approximation}. Thus, if $\epsilon \cdot \rho_{R}(u)< \beta^{*}$, we can derive that there does not exist an $\epsilon$-approximate \textit{QTCS} $H\subseteq R$ such that $u \in H$. 
Moreover, $\rho_{R}(u)\geq \sum_{v \in N_R(u)} \widehat{tppr}(v)$ by Lemma \ref{lem:bound}. So, $\epsilon \cdot \sum_{v \in N_R(u)}\widehat{tppr}(v)< \beta^{*}$. On the one hand, since $C$ covers all target community members (Algorithm \ref{algor:expanding}), $\beta^{*} \leq \max\{\rho_C(v)|v \in C\}$ due to Definition \ref{def:tpcore} and Lemma \ref{lem:anti_degree}. On the other hand, we have $\max\{\rho_C(v)|v \in C\} \leq  \max \{\sum_{w \in N_C(v)}\widehat{tppr}(w)| v \in C\}+\sum_{\vec{e}}r(\vec{e})$ by  Lemma \ref{lem:bound}. Therefore, $\epsilon \cdot \sum_{v \in N_R(u)}\widehat{tppr}(v)<  \max \{\sum_{w \in N_C(v)}\widehat{tppr}(w)| v \in C\}+\sum_{\vec{e}}r(\vec{e})$. So,  vertex $u$ can be removed from $R$ if $\epsilon \cdot \sum_{v \in N_R(u)}\widehat{tppr}(v)<\max \{\sum_{w \in N_C(v)}\widehat{tppr}(w)| v \in C\}+\sum_{\vec{e}}r(\vec{e})$.
\end{proof}

Unfortunately, $\epsilon$ does not know in advance. Thus, to obtain a high-quality estimation error $\epsilon$, we use a binary search to continuously refine $\epsilon$. The idea of the reducing stage is outlined in Algorithm \ref{algor:reducing}. Specifically, it first initializes the current search space $R$ as vertex set $C$ found by the previous expanding stage and the estimated query-biased temporal degree $\widehat{\rho}(u)$ by the lower bound of \textit{TPPR}  (Lines 1-3). Subsequently, in Line 4, it computes $\overline{\epsilon}$ as the upper bound of the approximation ratio. In Lines 5-21, it proceeds by continuously refining $\overline{\epsilon}$ and iteratively removing the unpromising vertices in each round to meet the current approximation ratio $\overline{\epsilon}$ by  Lemma \ref{lem:approximation}. In particular, in each round, it first initializes a queue
$Q$ to collect vertices to be deleted
and a set $D$ to maintain all deleted vertices (Line 6). Then it applies Lemma \ref{lem:approximation} to push those unpromising vertices into $Q$ in Lines 7-9 and processes iteratively the vertices in $Q$ to remove more unpromising vertices in Lines 12-17. The algorithm uses $flag$ to indicate whether query vertex $q$ is removed or not. If $flag$ is $True$, it updates the target approximation ratio $\epsilon$, search space $R$ and $\overline{\epsilon}$ (in Lines 20-21). The iteration terminates once query vertex $q$ is removed. Finally, the algorithm returns $CC(R,q)$ as the $\epsilon$-approximate \emph{query-centered} temporal community (Line 22).  Clearly, Algorithm \ref{algor:reducing} can correctly find an $\epsilon$-approximate \emph{query-centered} temporal community based on  Lemma \ref{lem:approximation}.

\begin{algorithm}[t] 	
	\scriptsize
	\caption{\textit{Reducing} $(C, r, \widehat{tppr}, q, \alpha)$} \label{algor:reducing}
	\begin{flushleft}
		\hspace*{0.02in} {\bf Input:}
		expanded vertex set $C$, $r$ and $\widehat{tppr}$ from Algorithm \ref{algor:expanding}; query vertex $q$; teleportation probability $\alpha$\\
		\hspace*{0.02in} {\bf Output:}
		the $\epsilon$-approximate \textit{QTCS}
	\end{flushleft}
	\begin{algorithmic}[1]
		
		\State  $R \leftarrow C$; $\widehat{\rho} \leftarrow \{\}$; $flag \leftarrow True$
		\For{$u \in C$}
		\State $\widehat{\rho}(u) \leftarrow \sum_{v \in N_C(u)}\widehat{tppr}(v)$
		\EndFor
		\State $temp \leftarrow \max \{\widehat{\rho}(u)| u \in C\}+ \sum_{\vec{e}}r(\vec{e})$; $\overline{\epsilon} \leftarrow \frac{temp}{\min \{\widehat{\rho}(u)| u \in C\}}$
		\While {$flag$}
		\State  $Q \leftarrow \emptyset$; $D \leftarrow \emptyset$
		\For{$u \in R$}
		\If{$\overline{\epsilon} \widehat{\rho}(u)\leq temp$}
		\State $Q.push(u)$
		\If{$u==q$}
		\State $flag\leftarrow False$; $Q \leftarrow \emptyset$
		\EndIf
		\EndIf
		\EndFor
		
		\While{$Q \neq \emptyset$}
		\State $u\leftarrow Q.pop$ and $D \leftarrow D \cup\{u\}$
		
		\For {$v \in N_R(u)$ and $v \notin D$}
		\State  $\widehat{\rho}(v)= \widehat{\rho}(v)-\widehat{tppr}(u)$
		\If{$\overline{\epsilon} \widehat{\rho}(v) \leq temp$}
		\State $Q.push(v)$
		\If{$v==q$}
		\State $flag\leftarrow False$; $Q \leftarrow \emptyset$
		\EndIf
		\EndIf
		\EndFor
		\EndWhile
		\If {$flag$}
		\State $\epsilon \leftarrow \overline{\epsilon}$; $R \leftarrow R \setminus D$; $\overline{\epsilon}\leftarrow \overline{\epsilon}/2$
		\EndIf
		
		\EndWhile
		
		\State \Return  ($\epsilon$, $CC(R,q)$), in which $CC(R,q)$ is the vertex set from the maximal connected component of $G_{R}$ containing $q$ and $\epsilon$ is the corresponding approximation ratio
	\end{algorithmic}
\end{algorithm}

\begin{theorem} \label{thm:alg6}
The  time complexity and space complexity of Algorithm \ref{algor:reducing} are $O(|G_C| \log m)$ and $O(|G_C|)$ respectively, where $G_C=\{(u,v)\in E|u,v \in C\}$.	
\end{theorem}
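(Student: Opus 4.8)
The plan is to charge the running time of Algorithm \ref{algor:reducing} to three pieces and bound each: the initialization in Lines 1--4, the cost of one pass of the outer \textbf{while} loop (Lines 5--21), and the number $T$ of such passes; since the outer loop is the only place where work is repeated, the total time is the initialization cost plus $T$ times the per-pass cost. The initialization computes $\widehat{\rho}(u)=\sum_{v\in N_C(u)}\widehat{tppr}(v)$ for every $u\in C$, which is a single sweep over the edges of $G_C$, hence $O(|C|+|E_C|)=O(|G_C|)$, and the quantities $temp$, $\overline{\epsilon}$ in Line 4 then cost $O(|C|)$. For one pass of the outer loop: Lines 7--9 scan $R\subseteq C$ once, costing $O(|C|)$; in the inner \textbf{while} loop every vertex is inserted into the set $D$ at most once, its neighborhood in $G_R$ is scanned only the first time it leaves $Q$, and each neighbor triggers only the $O(1)$ update $\widehat{\rho}(v)\leftarrow\widehat{\rho}(v)-\widehat{tppr}(u)$ and a threshold test, so the inner loop does $O(|G_R|)=O(|G_C|)$ work in total; Lines 20--21 cost $O(|C|)$. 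Thus each pass runs in $O(|G_C|)$ time, and it remains to show $T=O(\log m)$.

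Bounding $T$ is the delicate step. Whenever a pass does not terminate the loop, Line 21 at least halves $\overline{\epsilon}$, so $T\le \log_2 \overline{\epsilon}_{\mathrm{init}} + O(1)$ where $\overline{\epsilon}_{\mathrm{init}}=temp/\min_{u\in C}\widehat{\rho}(u)$ and $temp=\max_{u\in C}\widehat{\rho}(u)+\sum_{\vec e}r(\vec e)$, so I need $\overline{\epsilon}_{\mathrm{init}}=O(m)$. For the numerator: by the lower-bound direction of Lemma \ref{lem:bound} we have $\widehat{tppr}(v)\le tppr(v)$, hence $\widehat{\rho}(u)\le\sum_{v\in N_V(u)}tppr(v)=\rho_V(u)\le\sum_{v}tppr(v)=1$; and the residue mass $\sum_{\vec e}r(\vec e)$ equals $\|\widetilde{\chi_q}\|_1=1$ initially and is non-increasing under each call to Algorithm \ref{algor:push} (a propagation of $\vec e$ replaces $r(\vec e)$ by at most $(1-\alpha)r(\vec e)$ spread over $N^{>}(\vec e)$ and zeroes the rest), so $\sum_{\vec e}r(\vec e)\le 1$; therefore $temp\le 2$. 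For the denominator: every reserve update in Algorithm \ref{algor:push} adds $\alpha r(\vec e)$ with $r(\vec e)\ge 1/m$, so every nonzero $\widehat{tppr}(v)$ is at least $\alpha/m$, and since the expansion grows $C$ outward from the seed $q$ and keeps only vertices possessing a $C$-neighbor with a positive estimate, $\widehat{\rho}(u)\ge \alpha/m$ for every $u\in C$. Hence $\overline{\epsilon}_{\mathrm{init}}\le 2m/\alpha=O(m)$ and $T=O(\log m)$; combined with the per-pass bound this gives the claimed $O(|G_C|\log m)$ running time. (Termination itself is immediate: once $\overline{\epsilon}\le 1$, every $u\in R$ satisfies $\overline{\epsilon}\,\widehat{\rho}(u)\le\widehat{\rho}(u)\le\max_{v}\widehat{\rho}(v)\le temp$, so in particular $q$ is scheduled for deletion in Line 9 and $flag$ turns $False$.) I expect the denominator bound --- arguing $\min_{u\in C}\widehat{\rho}(u)=\Omega(1/m)$, i.e.\ that no retained vertex ends up with an all-zero estimated neighborhood --- to be the one point that needs care; the rest is bookkeeping.

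For the space complexity, Algorithm \ref{algor:reducing} keeps $R\subseteq C$, the array $\widehat{\rho}$ indexed by $C$, the queue $Q$ and the set $D$ over $C$, and needs the adjacency structure of $G_C$ to enumerate $N_R(u)$ and $N_C(u)$; from the previous stage it reuses $\widehat{tppr}$ restricted to $C$ and only the scalar $\sum_{\vec e}r(\vec e)$ of $r$. All of these fit in $O(|G_C|)$, so the space complexity is $O(|G_C|)$.
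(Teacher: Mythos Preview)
Your proposal is correct and follows essentially the same route as the paper: charge $O(|G_C|)$ to initialization and to each outer-loop pass, then bound the number of passes by $\log_2\overline{\epsilon}_{\mathrm{init}}=O(\log m)$ via the halving in Line 21 together with $temp=O(1)$ and $\min_{u\in C}\widehat{\rho}(u)=\Omega(1/m)$. The paper's argument is terser---it simply asserts $temp\le 1$ and $\min\widehat{\rho}\ge 1/m$ ``by the previous expanding stage'' without further justification---so your flagging of the denominator bound as the one delicate step is well placed; your $temp\le 2$ can be sharpened to $temp\le 1$ using the mass-conservation identity $\sum_{\vec e}\pi(\vec e)+\sum_{\vec e}r(\vec e)=1$ maintained by Algorithm \ref{algor:push}, but this does not affect the asymptotics.
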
	

\begin{proof}
Algorithm \ref{algor:reducing} first takes $O(|G_C|)$ time to compute  the estimated query-biased temporal degree (Lines 2-3). Then, in Lines 5-21, it executes  the iterative update process. In each round, it takes $O(|G_C|)$ time to remove unpromising vertices and update the search space. Moreover, there are at most $\log_{2}(\frac{temp}{\min \{\widehat{\rho}(u)| u \in C\}})$ rounds  due to the binary search. Since $temp \leq 1$ and $\min \{\widehat{\rho}(u)| u \in C\}\geq 1/m$ (by the previous expanding stage),  $\log_{2}(\frac{temp}{\min \{\widehat{\rho}(u)| u \in C\}}) \leq \log m$.  Putting these together,  Algorithm \ref{algor:reducing} takes $O(\log m \cdot |G_C|)$ time in total. Algorithm \ref{algor:reducing} needs $O(|C|)$ space to store $\widehat{\rho}$ for the vertex set $C$. And we also require $O(|G_C|)$ space to store the subgraph graph $G_C$. So, the space complexity of Algorithm \ref{algor:reducing} is $O(|G_C|)$.	
\end{proof}

\stitle{Remark.} We can simply adapt Algorithm \ref{algor:expanding} and \ref{algor:reducing} to solve Problem 2. Let $S$ is the query vertex set. For Algorithm \ref{algor:expanding},  we set $r(\vec{e})\leftarrow 1/|\vec{e}_S^{out}|$ for all $\vec{e} \in \vec{e}_S^{out}$ where $\vec{e}_S^{out}=\cup_{q\in S} \vec{e}_q^{out}$ (Line 2), $Q \leftarrow S$, $D \leftarrow S$ (Line 3). For Algorithm \ref{algor:reducing}, the iteration terminates (i.e., Lines 5-21) once any query vertex $q\in S$ is removed or there is no connected component containing $S$. Finally, we return the vertex set from the maximal  connected component of $G_R$ containing $S$.
	

%
%
\section{Experimental Evaluation}\label{sec:experiments}
In this section, we conduct comprehensive experiments to test the efficiency, effectiveness, and scalability of the proposed solutions. These experiments are executed on a server with an Intel Xeon 2.50GHZ CPU and 32GB memory running Ubuntu 18.04.

\subsection{Experimental setup}
\underline{\textit{Datasets.}} We evaluate our solutions on eight graphs\footnote{\scriptsize http://snap.stanford.edu/,  http://konect.cc/, http://www.sociopatterns.org/} which are used in recent work \cite{DBLP:conf/icde/LiSQYD18, DBLP:conf/icde/QinLWQCY19, 9351686, DBLP:journals/tsmc/LinYLWLJ22, Chun} as benchmark datasets (Table \ref{tab:data}).  Reality Mining (Rmin for short), Lyonschool (Lyon), and Thiers13 (Thiers) are temporal face-to-face networks,  in which a vertex represents a person, and a temporal edge indicates when the corresponding persons had physical contact. Facebook and Twitter are temporal social networks, in which vertices represent users
and temporal edges indicate when they had online interactions.  Lkml and Enron are temporal communication networks in which a vertex indicates an ID
and a temporal edge signifies when the corresponding IDs had a message.
DBLP is a temporal collaboration network, in which each temporal edge denotes when the authors coauthored a paper.

\underline{\textit{Algorithms.}} We implement several state-of-the-art methods for comparison.  Specifically, \textit{CSM} \cite{DBLP:conf/sigmod/CuiXWW14} identifies the maximal $k$-core containing the query vertex with largest $k$.  \textit{TCP} \cite{DBLP:conf/sigmod/HuangCQTY14} applies the triangle connectivity and $k$-truss   to model the higher-order truss community. \textit{PPR\_NIBBLE} \cite{DBLP:conf/focs/AndersenCL06} is a local clustering method, which adopts the conductance as the criterion of a community. Note that \textit{CSM}, \textit{TCP}, and  \textit{PPR\_NIBBLE} are static community search methods. \textit{MPC} \cite{DBLP:conf/icde/QinLWQCY19} extends the concept of clique to adapt the temporal setting. \textit{PCore}\cite{DBLP:conf/icde/LiSQYD18} maintains persistently a $k$-core structure.  \textit{DBS} \cite{DBLP:journals/pvldb/ChuZYWP19}  uses the density and duration to model bursting communities.   But \textit{MPC}, \textit{PCore}, \textit{DBS} address the problem of temporal community detection. Thus, to fit our problem,  we first find all possible communities by the predefined criteria\cite{DBLP:conf/icde/QinLWQCY19, DBLP:journals/pvldb/ChuZYWP19, DBLP:conf/icde/LiSQYD18}, and then select the target community containing the query vertex from these communities. \textit{MTIS} \cite{DBLP:conf/bigdataconf/TsalouchidouBB20} and \textit{MSCS} \cite{ DBLP:journals/tkdd/GalimbertiCBBCG21} are temporal community search methods. \textit{MTIS} and  \textit{MSCS} model the temporal cohesiveness of the community by extending the network-inefficiency and $k$-core to temporal setting, respectively. \textit{QTCS\_{Baseline}} is an intuitive variant model (Definition \ref{def:basemodel}). \textit{EGR} and \textit{LAS} are our proposed methods.

\underline{\textit{Effectiveness metrics.}} Evaluating the utility of temporal community  is more difficult than static community since there are no ground-truth communities for temporal networks yet. Thus, we adopt the following two widely used effectiveness metrics  \cite{DBLP:journals/pvldb/ChuZYWP19, DBLP:conf/www/SilvaSS18, 9351686, DBLP:journals/tsmc/LinYLWLJ22, Chun}: temporal density (\textit{TD}) and temporal conductance (\textit{TC}). Specifically, let $S$ be the target community, the two metrics are defined as follows.    $TD(S)= 2*|\{(u,v,t)\in \mathcal{E}| u, v \in S\}|/|S|(|S|-1)|T_S|$, in which $T_S=\{t|(u,v,t)\in \mathcal{E}, u, v \in S\}$. Clearly, \textit{TD} computes the average density of the internal structure of the temporal community. $TC(S)= |Tcut(S,V\setminus S)|/\min\{|Tvol(S)|,|Tvol(V\setminus S)|\}$, where $Tcut(S,V\setminus S)=\{(u,v,t)\in \mathcal{E}|u \in S, v\in V \setminus S\}$, $Tvol(S)=\sum_{u \in S} \{(u,v,t) \in \mathcal{E}\}$. Clearly, \textit{TC} measures the separability of the temporal  community. Thus, the  larger the value of \textit{TD}($ S $), the denser $S$ is in the temporal network. The smaller the value of \textit{TC}($ S $), the farther $S$  is away from the rest of the  temporal network. In addition, we also report the value of our proposed objective function. Let \textit{MD}($S$)=$\min\{\rho_C(u)|u\in S\}$ be the minimum query-biased temporal degree within $S$. So, the  larger the value of \textit{MD}($S$), the better the quality of $S$ in terms of \emph{query-centered} temporal community search.

\begin{table}[t!]
	\centering
	\caption{\small Dataset statistics.  $TS$ is the time scale of the timestamp} \vspace{-0.3cm}
	\footnotesize
	\scalebox{1}{
		\begin{tabular}{c|cccccc}
			\toprule
			Dataset & $|V|$ & $|\mathcal{E}|$ & $|E|$  & \scriptsize $\mathcal{T}_{max}$ &  TS\\
			\midrule
			Rmin & 96  &76,551  & 2,539 &2,478 &Hour \\
			Lyon & 242  &218,503  & 26,594 &20 &Hour  \\
			Thiers & 328  &352,374  & 43,496 &49 &Hour  \\ 	
			\midrule
			Facebook & 45,813  &585,743  & 183,412 & 552  &Day  \\
			Twitter & 304,198 &464,653 & 452,202 & 7  &Day  \\
			\midrule
			Lkml & 26,885 & 547,660  &159,996 &2,663 &Day \\
			Enron & 86,978 & 912,763   &297,456 &765 &Day  \\
			\midrule
			DBLP & 1,729,816  &12,007,380  & 8,546,306 &49&Year\\
			\bottomrule		
	\end{tabular}}\vspace{-0.3cm}
	\label{tab:data}
\end{table}

\underline{\textit{Parameters.}} Unless otherwise stated, the teleportation probability $\alpha$ is set to 0.2 in all experiments as \cite{DBLP:conf/wsdm/LofgrenBG16, DBLP:conf/sigmod/WeiHX0SW18}. For other methods, we take their corresponding default parameters. To be more reliable,  we randomly select 50 vertices as query vertices and report the average running time and quality.

\subsection{Efficiency testing}
\vspace{-0.2cm}
\stitle{Exp-1: Running time of various temporal methods.} From Table \ref{tab:running}, we can see that $\ls$ is consistently faster than other methods on most datasets. For example, $\ls$ takes 3.038 seconds and 191.889 seconds to obtain the result from  Facebook and Lkml, respectively, while \textit{PCore} and \textit{MTIS} cannot get the result within two days. Moreover, our methods (i.e., \textit{QTCS\_Baseline}, \textit{EGR}, and \textit{ALS}) are more efficient than the existing methods.  The reasons can be explained as follows. (1) \textit{MPC}, \textit{PCore} and \textit{DBS}  need to enumerate all possible temporal communities in advance and then select the target community containing the query vertex from these communities, resulting in very high time overheads. (2) \textit{MTIS} and \textit{MSCS} first perform the very time-consuming Steiner tree procedure to identify a tree $T$ containing all query vertices, and then greedily add some desirable vertices to $T$ to derive the final result. (3) they are NP-hard in theory, thus they cannot be solved in polynomial time unless P=NP.  Furthermore, $\ls$ is faster than $\gr$ on all datasets. For example, $\ls$ only consumes about 13 seconds to identify the result from DBLP,  while $\gr$ consumes over 47 seconds. These results give some preliminary evidence that the proposed pruning strategies (Section \ref{sec:ls}) are efficient in practice.

\begin{table*}[t!]
	\centering
	\caption{\footnotesize Running time of various temporal methods (second).  AVG.RANK is the average rank of each method across testing datasets.} \vspace{-0.3cm}
	\footnotesize
	\scalebox{1}{
		\begin{tabular}{c|ccccccccc}
			\toprule
			\multicolumn{1}{c|}	{Temporal methods} & Rmin& Lyon&Thiers& Facebook&Twitter& Lkml& Enron& DBLP &AVG.RANK\\
			\midrule
			
			\textit{MPC}& 2133.440 & 6.153& 59.746 & 3.987 & 1.318& 47563.571& 729.380 &2605.572& 4\\
			\textit{PCore}& 35913.248 & 28561.989& $>$48h & $>$48h& 148.447& $>$48h& 21221.338 &24.493 & 7\\	
			\textit{DBS}& 47.363 & 1722.200& 2150.320  & 48.792 & 33179.300&\textbf{91.411}& 614.998 & 2462.040& 5\\
			\textit{MTIS}& $>$48h & 42.339& 154.161  & $>$48h & 152.064&$>$48h& $>$48h & 78252.764& 8\\
			\textit{MSCS}& 241.613 & 25.204& 28.786  & 753.186 & 42.699&859.255& 1290.521 & 3083.327& 6\\
			\midrule
			\textit{QTCS\_Baseline}  & 47.283 & 1.879& 6.703& 16.107 & 1.800& 226.457& 82.66 & 45.391& \underline{2}\\
			
			\textit{EGR} \ & 47.293 & 1.881& 6.711 &16.067 & 2.604& 224.592& 83.168 & 47.259& \underline{3}\\
			
			\textit{ALS}& \textbf{28.326} &\textbf{1.030}& \textbf{3.049} & \textbf{3.038} & \textbf{1.257}& 191.889&\textbf{30.557} & \textbf{13.707} & \underline{\textbf{1}}\\

			\bottomrule	
	\end{tabular}} \vspace{-0.3cm}
	\label{tab:running}
\end{table*}

\begin{figure*}[t!]
	\centering
	\subfigure[Rmin (vary rank)]{
		\includegraphics[width=3.5cm]{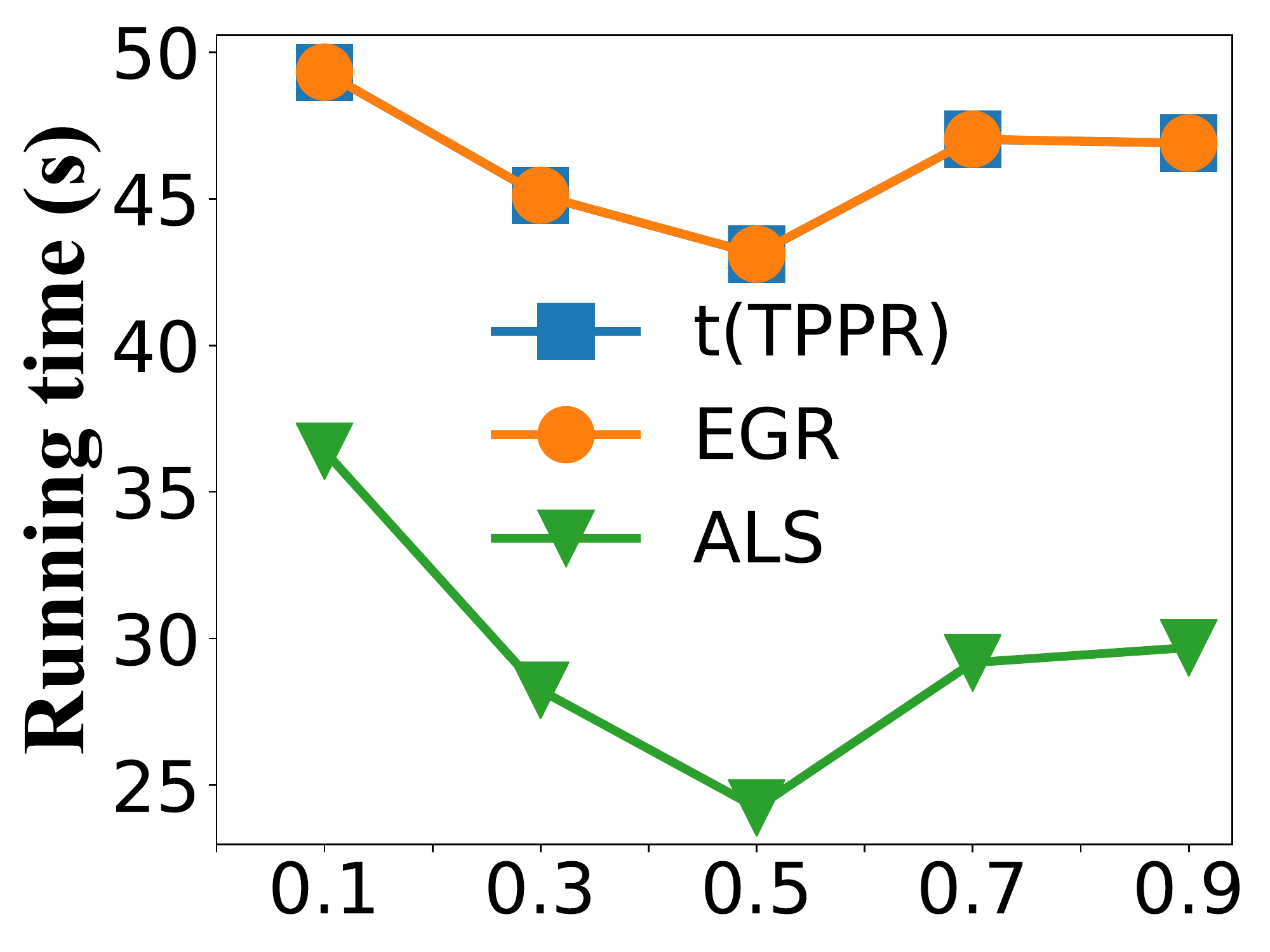}
	}\hspace{0.7cm}
	\subfigure[Facebook (vary rank)]{
		\includegraphics[width=3.5cm]{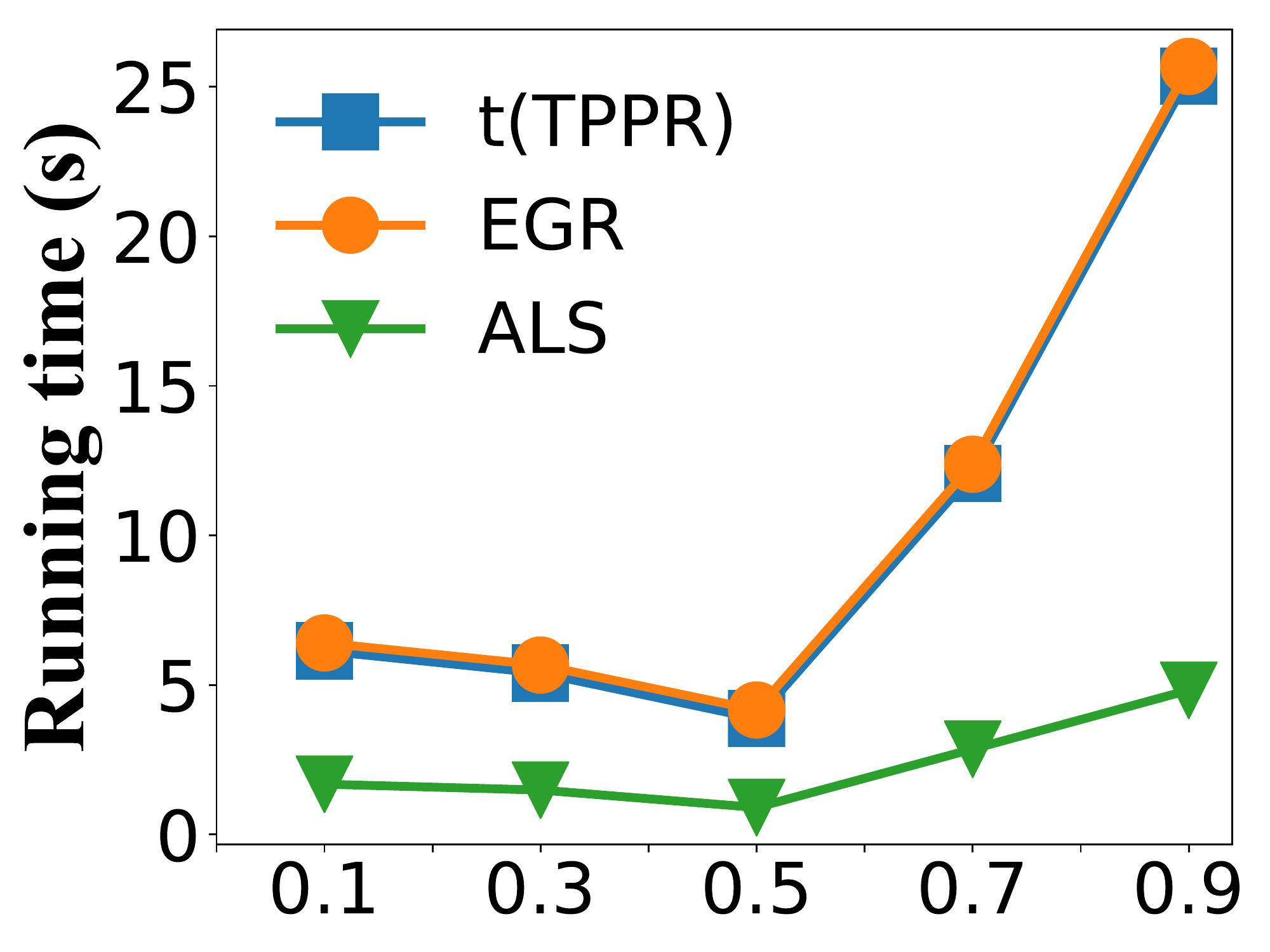}
	}\hspace{0.7cm}
	\subfigure[Enron (vary rank)]{
		\includegraphics[width=3.5cm]{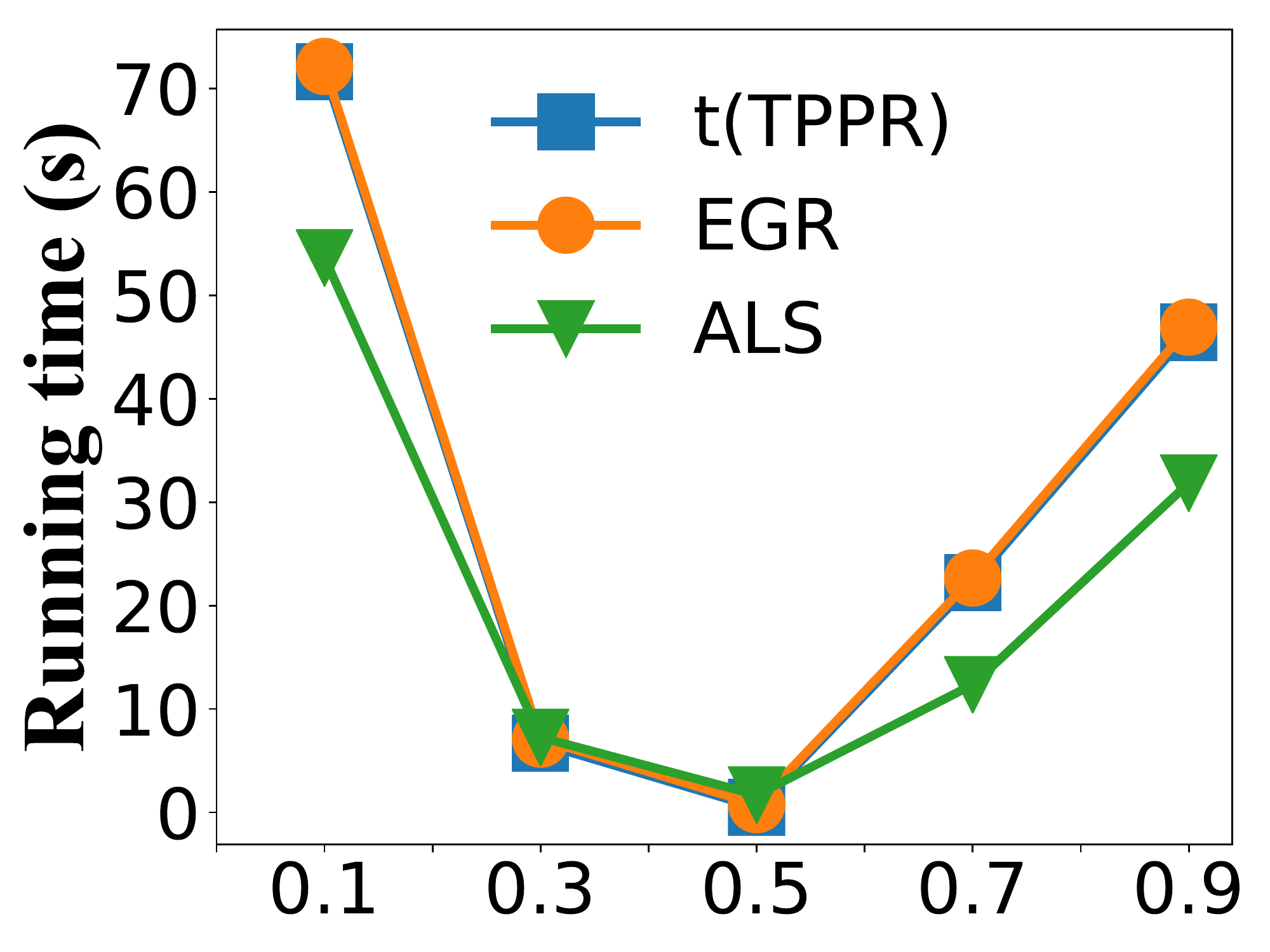}
	}\hspace{0.7cm}
	\subfigure[DBLP (vary rank)]{
		\includegraphics[width=3.5cm]{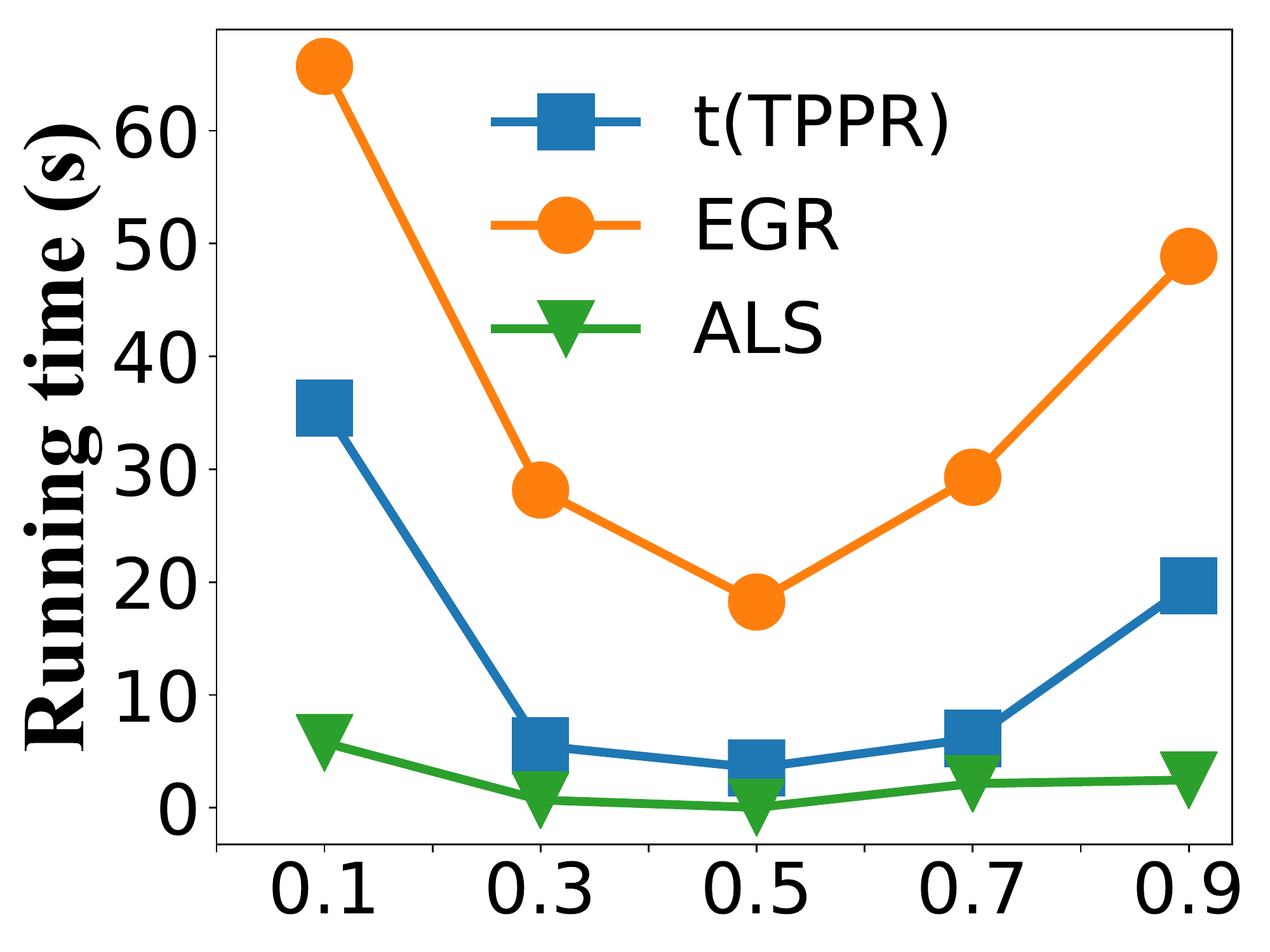}
	}\hspace{0.7cm}
	\subfigure[Rmin (vary $\alpha$)]{
		\includegraphics[width=3.5cm]{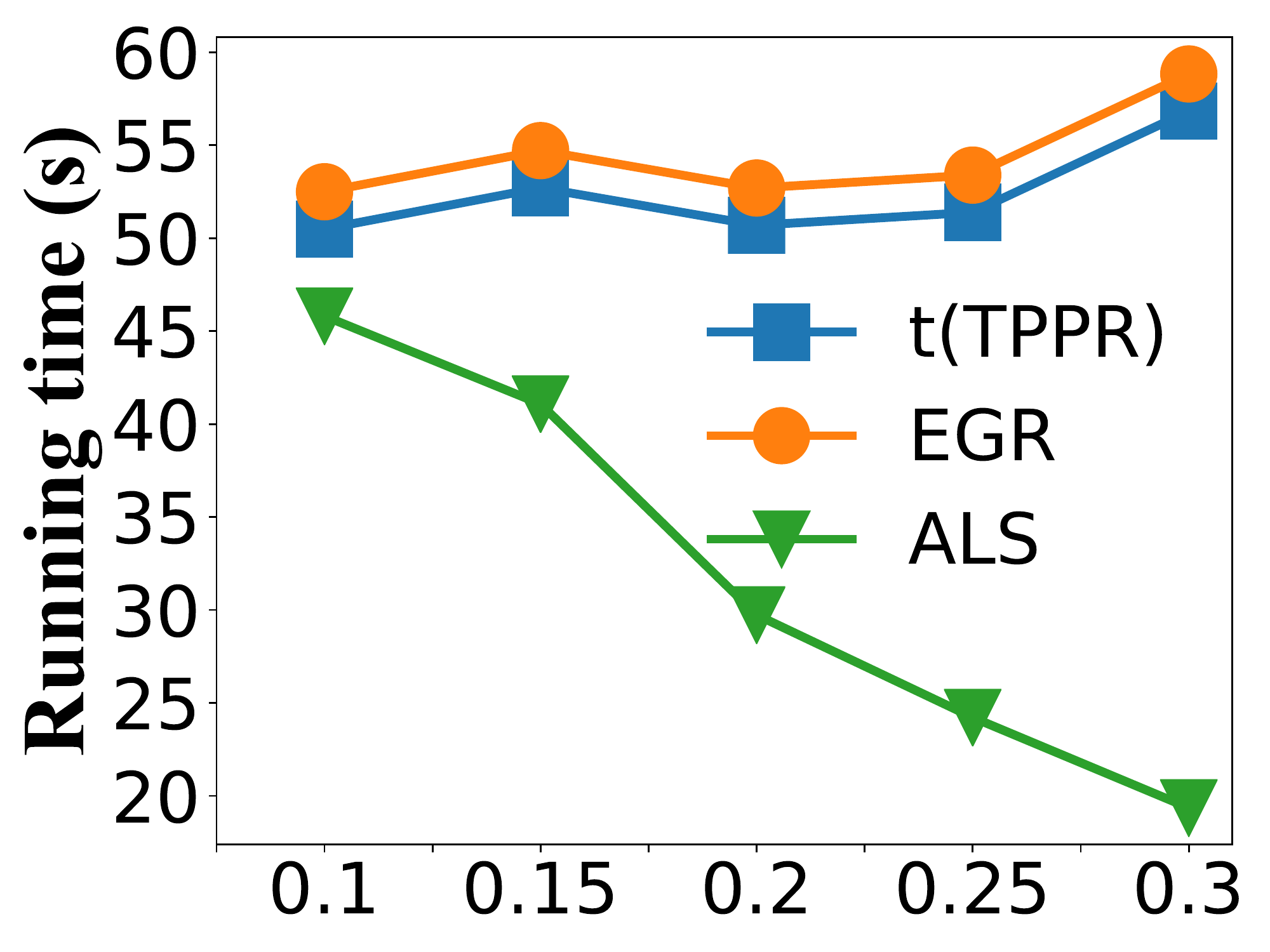}
	}\hspace{0.7cm}
	\subfigure[Facebook (vary $\alpha$)]{
		\includegraphics[width=3.5cm]{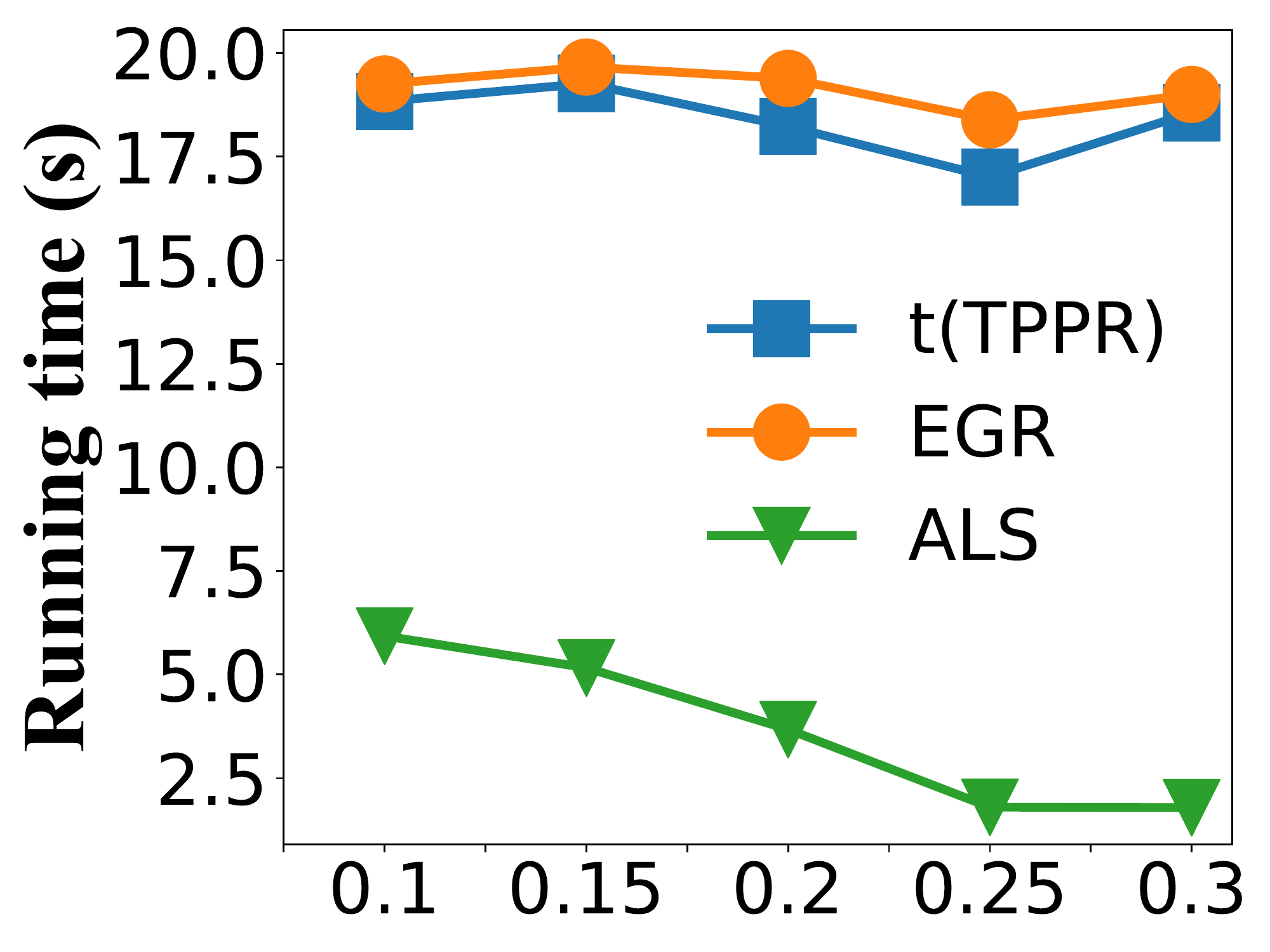}
	}\hspace{0.7cm}
	\subfigure[Enron (vary $\alpha$)]{
		\includegraphics[width=3.5cm]{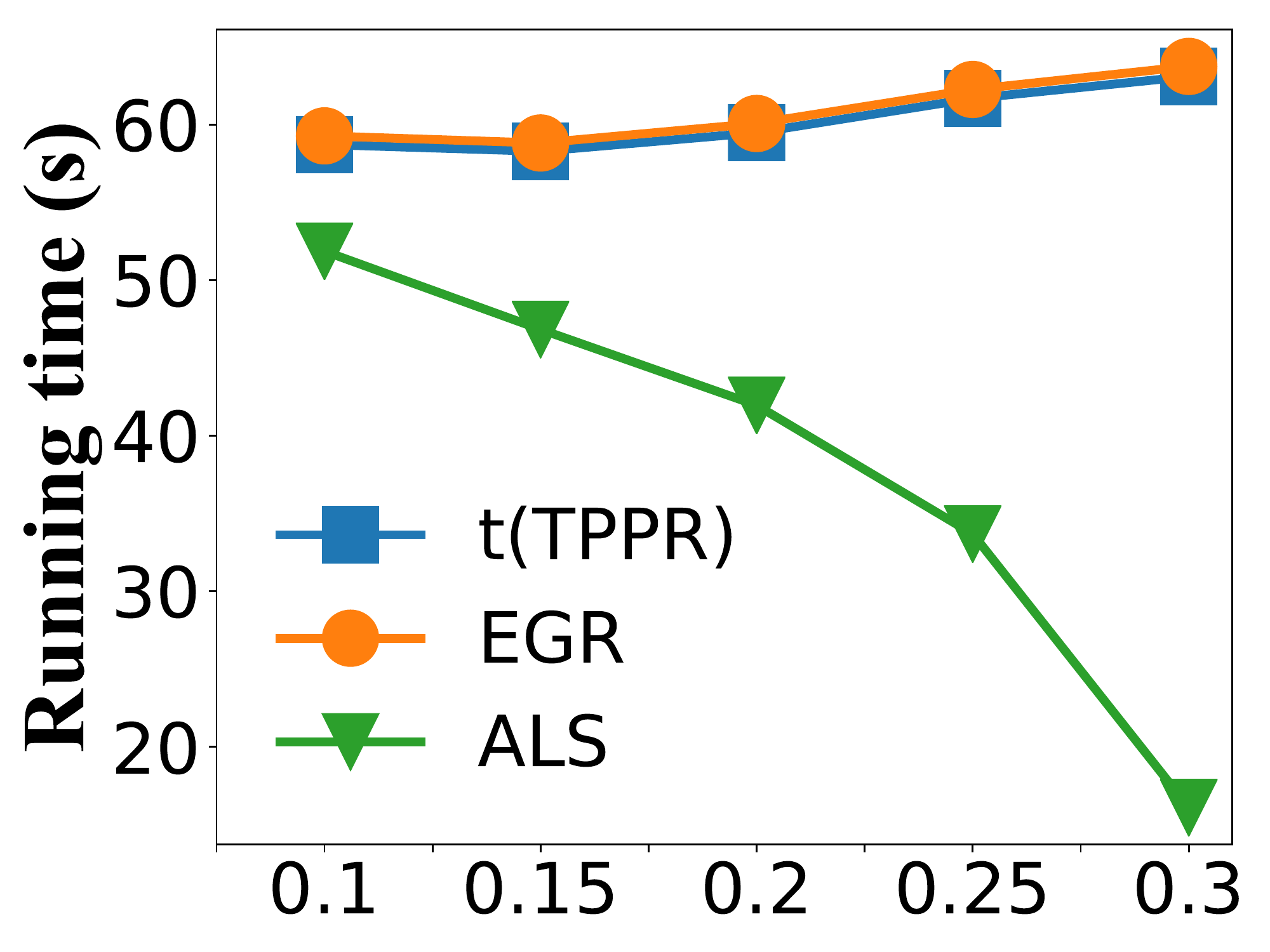}
	}\hspace{0.7cm}
	\subfigure[DBLP (vary $\alpha$)]{
		\includegraphics[width=3.5cm]{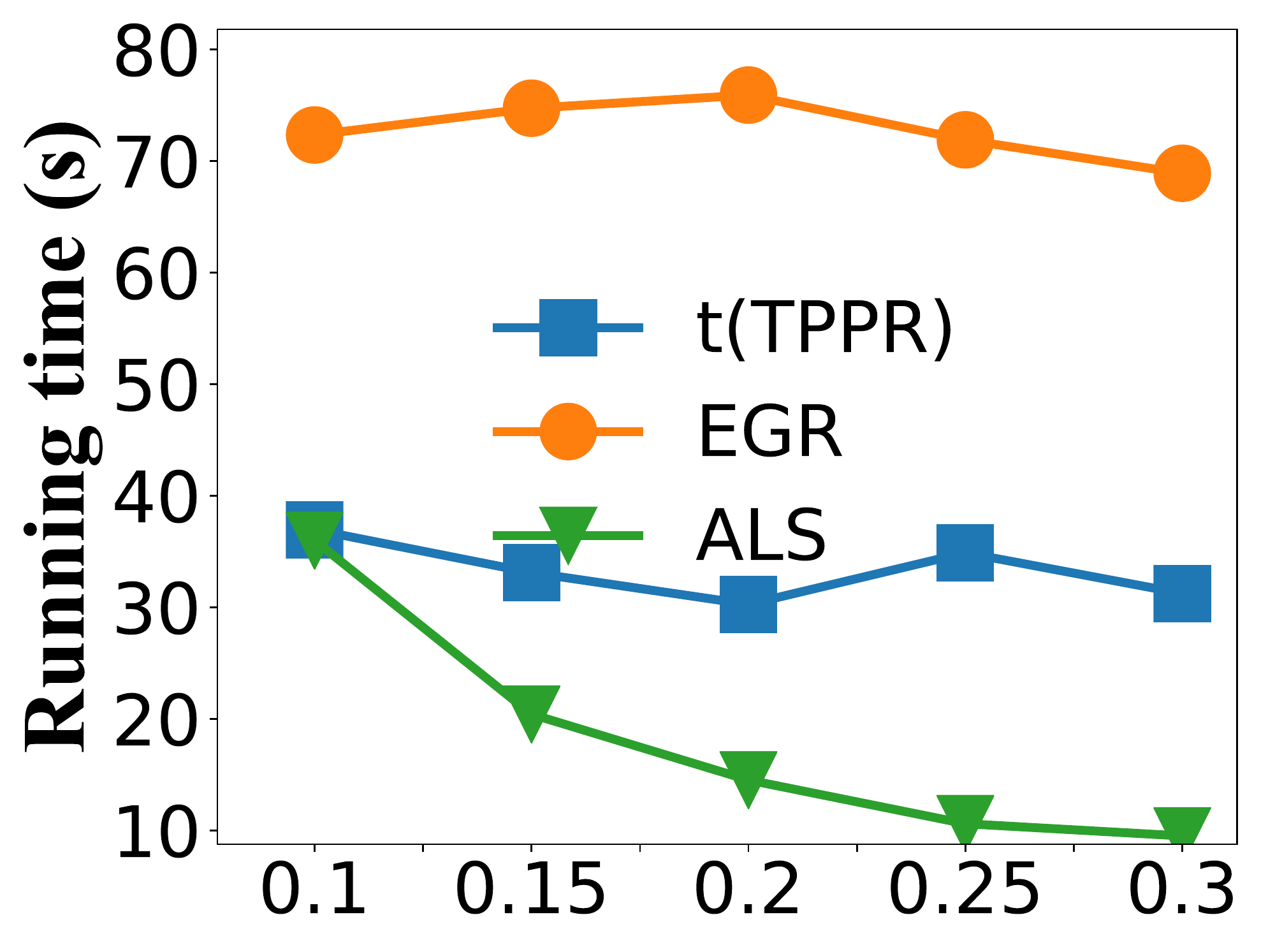}
	}
	\subfigure[Rmin (vary $\alpha$)]{
		\includegraphics[width=3.5cm]{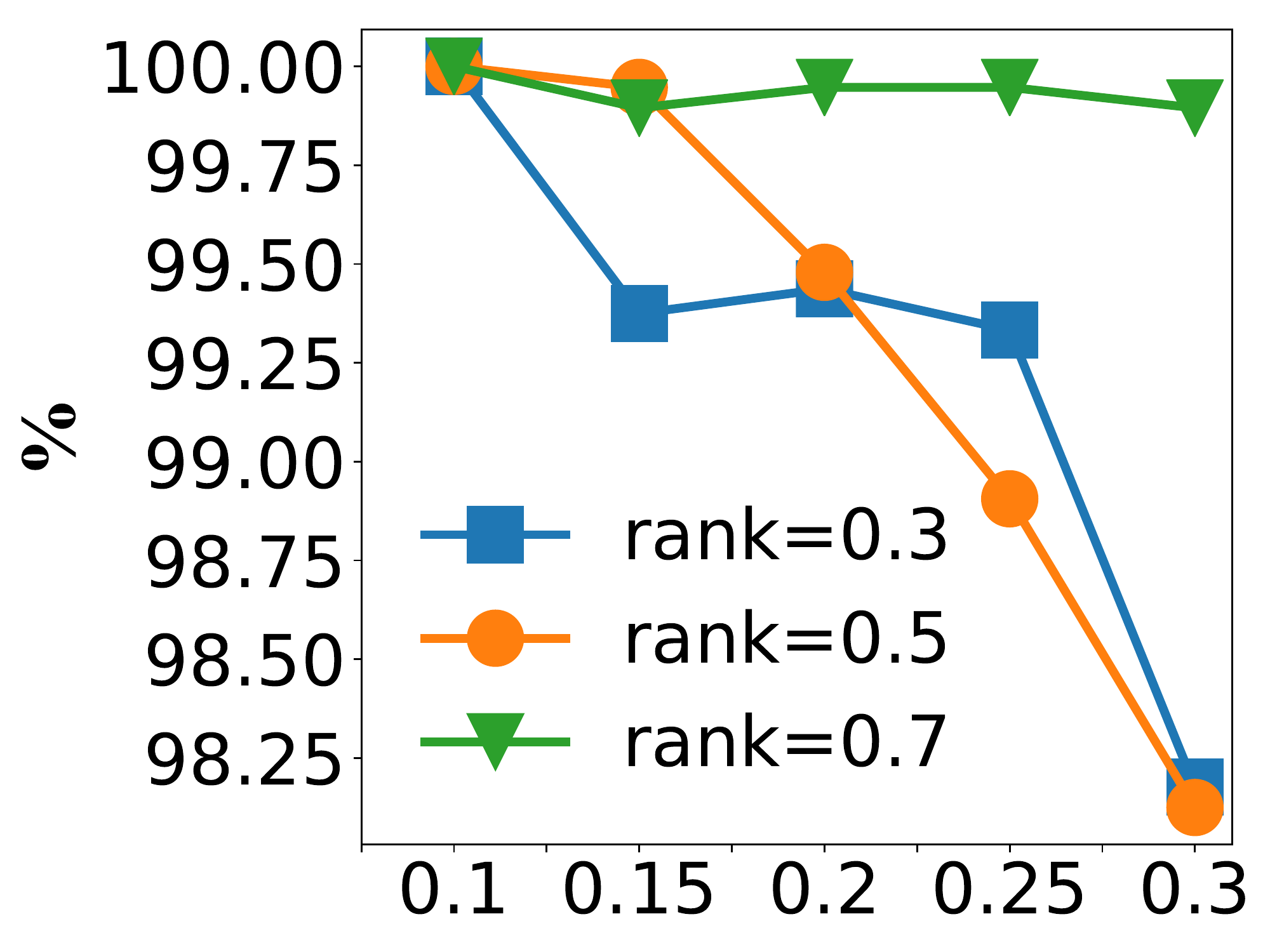}
	}\hspace{0.7cm}
	\subfigure[Facebook (vary $\alpha$)]{
		\includegraphics[width=3.5cm]{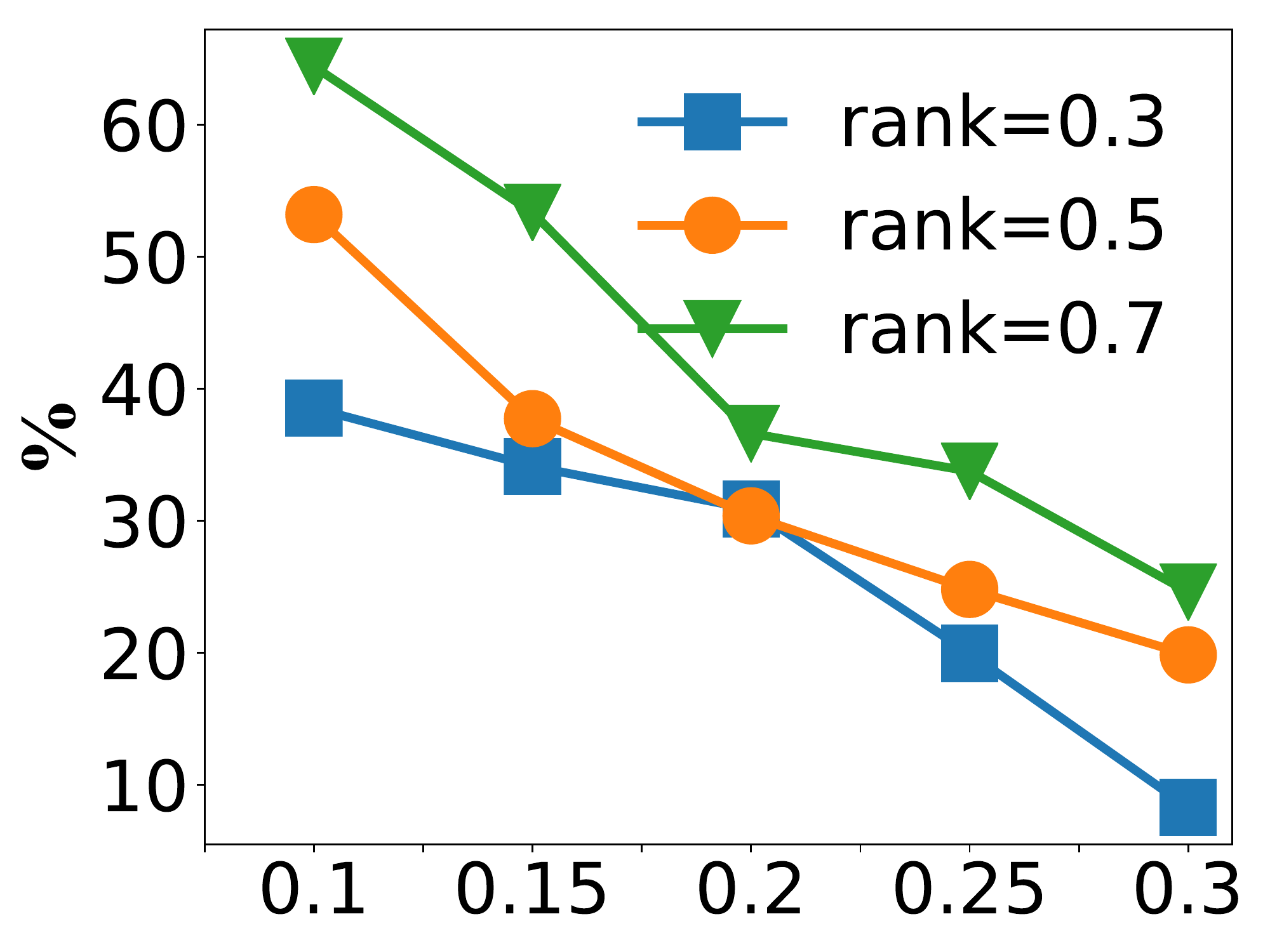}
	}\hspace{0.7cm}
	\subfigure[Enron (vary $\alpha$)]{
		\includegraphics[width=3.5cm]{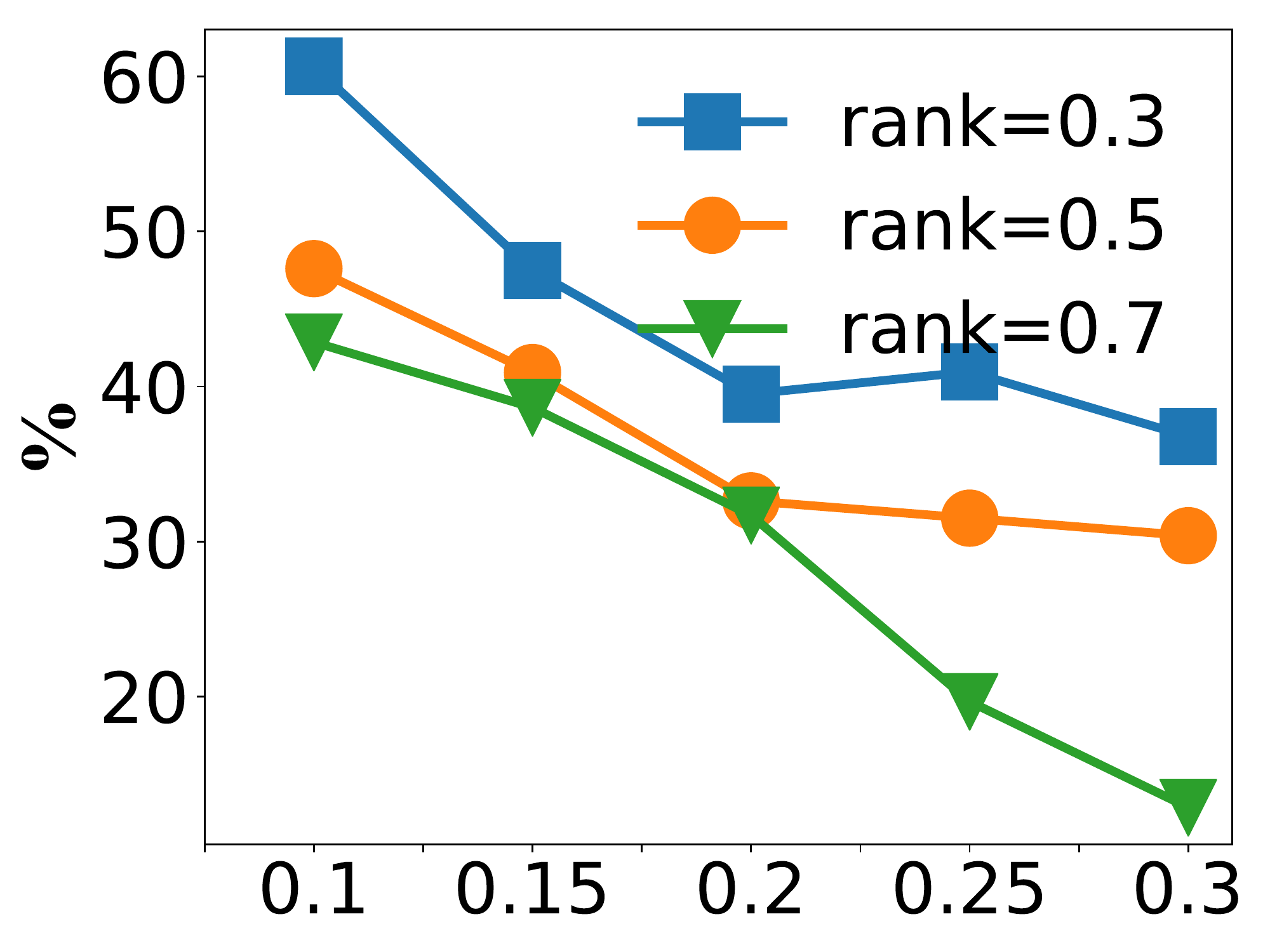}
	}\hspace{0.7cm}
	\subfigure[DBLP (vary $\alpha$)]{
		\includegraphics[width=3.5cm]{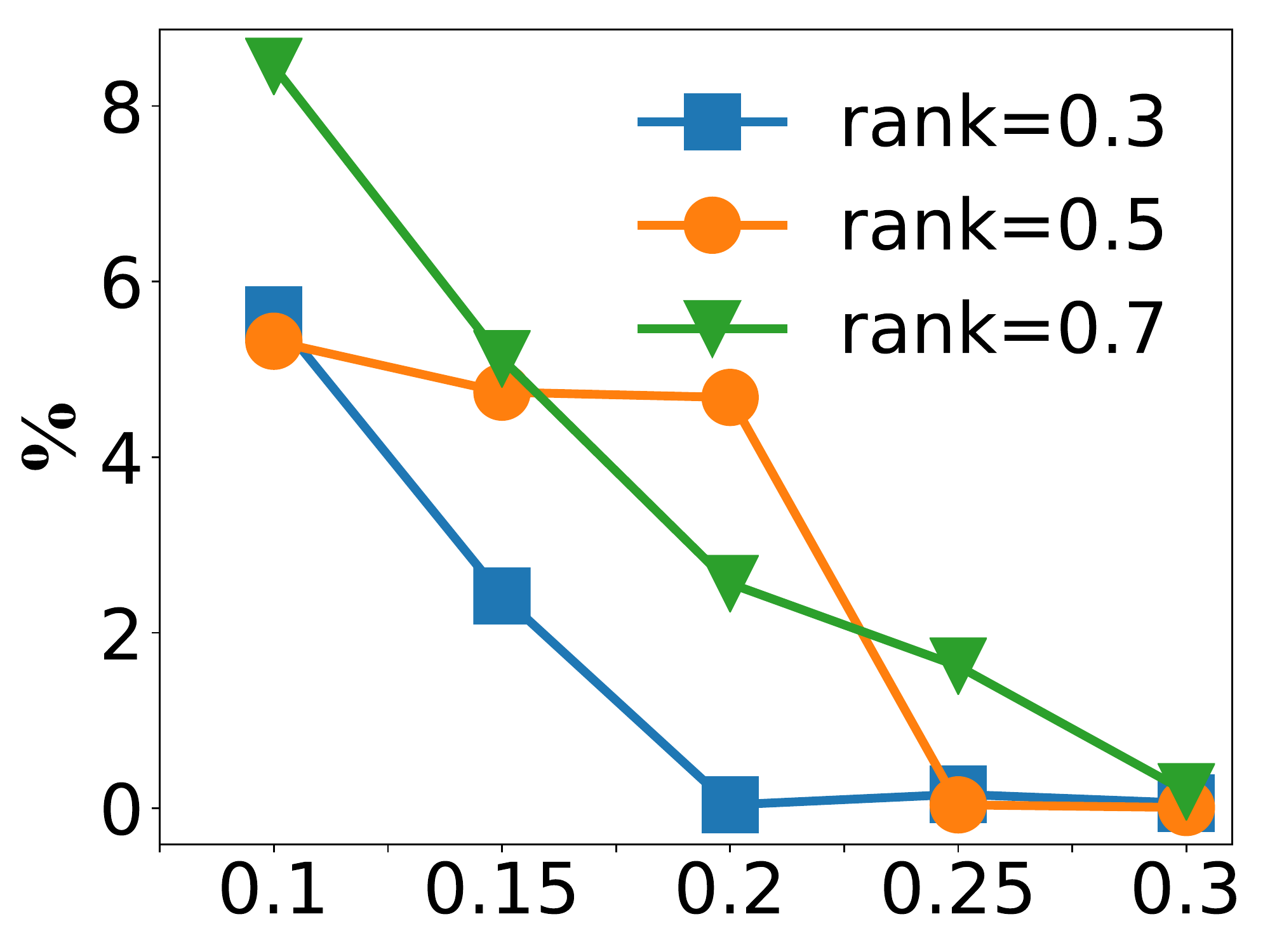}
	} \vspace{-0.2cm}
	\caption{ The efficiency of various algorithms with varying parameters} \vspace{-0.3cm}
	\label{fig:alg}
\end{figure*}

\begin{figure}[t!]
	\centering
	\subfigure[$\gr$]{
		\includegraphics[width=0.22\textwidth]{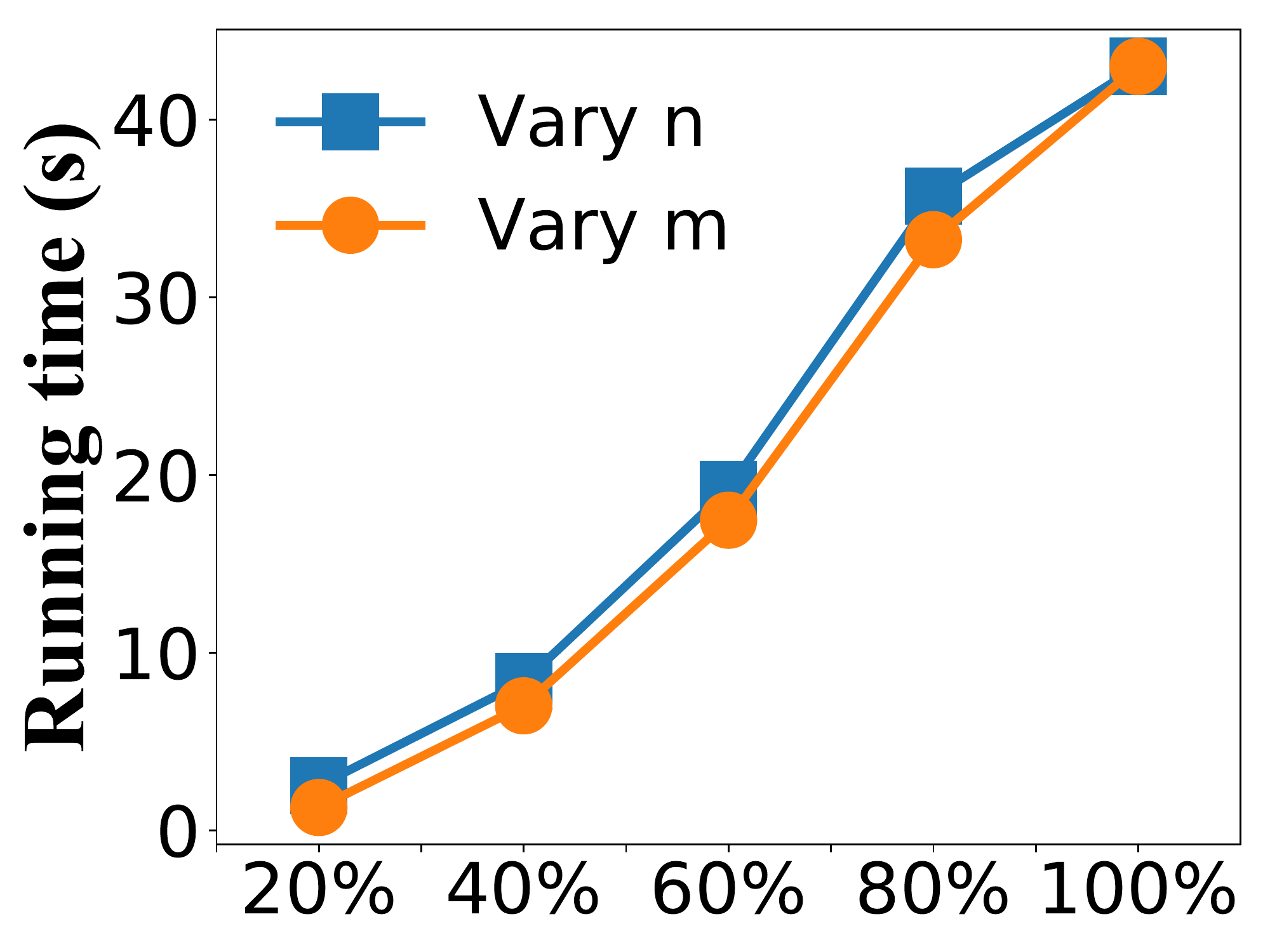}
		\label{fig:sca(a)}
	}
	\subfigure[$\ls$]{
		\includegraphics[width=0.22\textwidth]{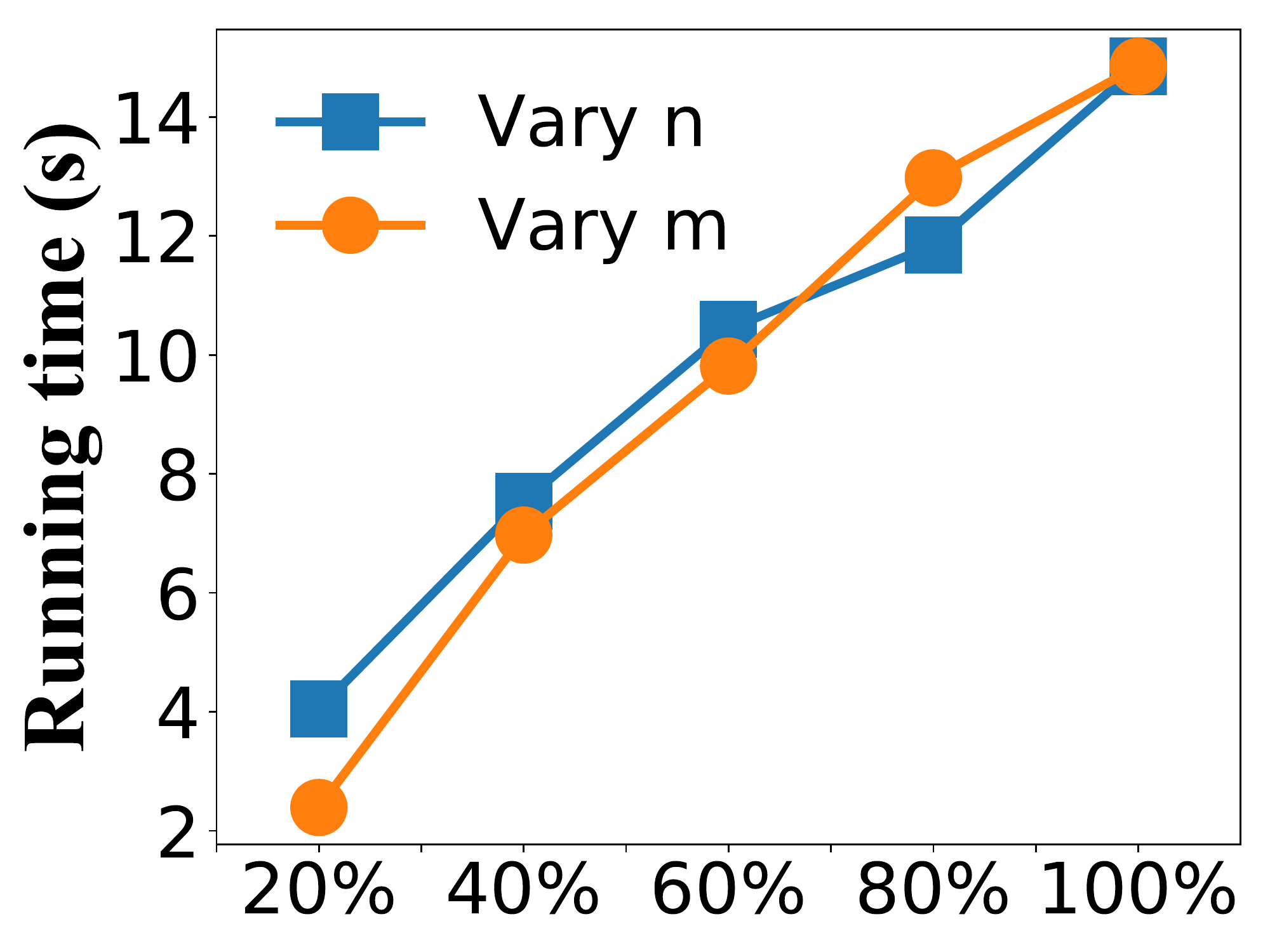}
		\label{fig:sca(b)}
	}\vspace*{-0.3cm}
	\caption{Scalability testing}
	\label{fig:sca} 
\end{figure}

\begin{table}[t!]
	\centering
	\caption{\small  Memory overhead of $\gr$ and $\ls$ (MB)}
	\footnotesize \vspace{-0.3cm}
	\scalebox{1}{
		\begin{tabular}{c|ccc}
			\toprule
			& \small Graph in memory& Memory of $\gr$& Memory of $\ls$ \\ \midrule
			Rmin & 9.291  &12.871  & 16.669\\
			Lyon & 34.780  &35.236  & 35.072\\
			Thiers & 62.381  &63.917  & 63.430  \\ 	
			Facebook & 149.538  &162.873 & 159.564  \\
			Twitter & 311.206 &393.152 & 331.207  \\
			Lkml &131.514  &148.0143  &182.439  \\
			Enron & 244.577 & 272.900  &247.764\\
			DBLP & 5190.925 &5758.229&5302.925  \\
			\bottomrule		
	\end{tabular}} \vspace{-0.5cm}
	\label{tab:memory}
\end{table}

\vspace{-0.2cm}

\stitle{Exp-2: Running time of various \textit{QTCS} algorithms with varying parameters.} In this experiment, we investigate how the parameter $\alpha$ affects the running time of different \textit{QTCS} algorithms. Additionally, we also study the effect of the temporal occurrence rank of query vertices. Let $\mathcal{T}_{u}=|\{t|(u,v,t) \in \mathcal{E}\}|$ be the temporal occurrence of the vertex $u$, which indicates how many timestamps are associated with $u$. Thus, we denote the temporal occurrence rank of a vertex
as 0.1 if its temporal occurrence is in the bottom 1\%- 10\%, and the temporal occurrence ranks
0.2, . . ., 0.9 are defined accordingly. For $\gr$
algorithm, we know that the search time is composed of Algorithm \ref{algor:tppr}
and the greedy removing process. We denote t(TPPR) as the time spent in Algorithm \ref{algor:tppr}. Fig. \ref{fig:alg} (a-h) show the results with varying rank and $\alpha$ on Rmin, Facebook, Enron, and DBLP.  Other datasets can also obtain similar results.  As can be seen, t(TPPR)  dominates the time of $\gr$ on all datasets except for DBLP. This is because the size of DBLP is relatively large, so it needs more time to perform the greedy removing process. Moreover, as shown in Fig. \ref{fig:alg} (a-d), the running time decreases first and then increases as rank increases, and the optimal time is taken when rank=0.5. Thus, we recommend users set the vertex with rank 0.5 as the query vertex for faster performance. On the other hand, by Fig. \ref{fig:alg} (e-h), we know that the running time of $\ls$ decreases with increasing $\alpha$. An intuitive explanation is that when $\alpha$ increases, the vertices have a higher probability of running temporal random walk around the query vertex, resulting in the locality of $\ls$ being stronger. As a result, the techniques of bound-based pruning and stop expanding are enhanced with increasing $\alpha$, thus more search spaces or vertices are pruned (Section  \ref{subsec:expanding}). Note that the running time of t(TPPR) and $\gr$ is stable with varying $\alpha$. This is because the time complexity of t(TPPR) and $\gr$ is independent of $\alpha$.

\stitle{Exp-3: The size of the expanded graph with varying parameters.} Fig \ref{fig:alg} (i-l) shows the size of the expanded graph obtained by the expanding stage (i.e., $|C|$ in Section  \ref{subsec:expanding}), divided by the size of the original graph, with varying rank and $\alpha$. We can see that the expanding stage obtains a very small graph. For instance, on Enron and DBLP, the number of vertices obtained by the expanding stage are only about 35$\%$ and 4$\%$ of the original graph, respectively. And the size of the expanded graph decreases with increasing $\alpha$. This is because the power of both bound-based pruning and stop expanding are enhanced when $\alpha$ increases. These results give some preliminary evidence that the proposed expanding algorithm (Section \ref{subsec:expanding}) is very effective when handling real-life temporal graphs. Moreover, we also observe that the size of the expanded graph is irregular as rank increases.

\begin{table*}[t!]
	\caption{\small Effectiveness of different methods. AVG.RANK is the average rank of each method across the testing datasets.} \vspace{-0.3cm}
	\centering
	\scriptsize
	\scalebox{1}{
		\begin{tabular}{c|cccccccccc}
			\toprule
			\multicolumn{1}{c|}	{TC/TD/MD} & Rmin& Lyon&Thiers& Facebook&Twitter& Lkml& Enron& DBLP & AVG.RANK\\
			\midrule
			
			\multirow{8}{*} \textit{CSM} & 0.33/0/0.35 & 0.87/0.42/0.76&0.92/0.14/0.49  & 0.43/0.08/0 &0.71/0.04/0& 0.68/ 0.06/0.07& 0.48/ 0.02/0 &0.72/ 0.30/0.01& 4/9/\underline{3}\\
			
			\textit{TCP}&0.92/0/0.10 & 1/0.38/0.55& 1/0.13/0.32  & 0.50/0.28/0.03 & 0.71/0.52/0.03& 0.36/0.08/0& 0.40/0.09/0 &0.68/0.40/0 &5/8/4\\
			\textit{PPR\_NIBBLE}  & 0.48/0/0.07 &0.50/0.51/0.28& 0.44/0.17/0.17 & 0.17/0.01/0 & \textbf{0.11}/0/0& 0.07/0/0&\textbf{0.27}/0.01/0 & 0.09/0/0& \underline{2}/10/9\\
			\textit{MPC}& 0.71/\textbf{0.29}/0.03 & 0.79/0.76/0.13& 0.82/\textbf{0.64}/0.02 & 0.50/\textbf{0.50}/0 & 1/\textbf{0.79}/0& 0.96/\textbf{0.22}/0& 0.94/\textbf{0.44}/0 & 0.84/\textbf{0.59}/0&9/\underline{\textbf{1}}/8\\
			\textit{PCore}& 0.75/0/0.24 & 0.55/0.52/0.30& 0.62/0.58/0.11 & 0.72/0.09/0 & 0.94/0.03/0& 0.76/0.02/0.11& 0.76/0.06/0.04 & 0.60/0.08/0&7/4/5\\	
			\textit{DBS}&  0.66/0.18/0.21 & 0.72/\textbf{0.77}/0.18& 0.52/0.56/0.07  & 0.67/0.41/0 & 0.95/0.66/0&0.95/0.21/0.15& 0.92/0.33/0.09 & 0.70/0.43/0&8/\underline{2}/7\\
			\textit{MTIS}&  0.67/0.02/0.13 & 0.98/0.43/0.02& 0.98/0.27/0  & 1/0.32/0 & 1/0.26/0&1/0/0& 1/0/0 & 1/0/0&10/7/10\\
			\textit{MSCS}&0.53/0.08/0.38 &0.58/0.54/0.49   &0.31/0.29/0.54 &0.72/0.18/0 & 0.72/0.12/0&0.72/0/0.01& 0.59/0/0 & 0.60/0/0&6/6/6\\
			\textit{QTCS\_Baseline}  & 0.30/0.01/0.43 & 0.56/0.52/0.58& 0.45/0.17/0.46& 0.49/0.07/0 &0.68/0/0& 0.53/0.03/0.06& 0.54/0.20/0.04 & 0.55/0.05/0&\underline{3}/5/\underline{2}\\
			
			our model \ & \textbf{0.01}/0.18/\textbf{0.73} & \textbf{0.44}/0.73/\textbf{0.81}& \textbf{0.16}/0.56/\textbf{0.67} &\textbf{0.11}/0.46/\textbf{0.15} & \textbf{0.11}/0.57/\textbf{0.08}& \textbf{0.02}/0.20/\textbf{0.25}& 0.32/0.33/\textbf{0.26} & \textbf{0.03}/0.40/\textbf{0.15}& \underline{\textbf{1}}/\underline{3}/\underline{\textbf{1}}\\
			
			\bottomrule	
	\end{tabular}} \vspace{-0.5cm}
	\label{table:metric}
\end{table*}

\vspace{-0.1cm}
\stitle{Exp-4: Scalability testing on synthetic datasets.} To test the scalability of $\gr$ and $\ls$, we first artificially generate eight temporal subgraphs by selecting randomly  20\%, 40\%, 60\% and 80\% vertices or edges from DBLP.  Subsequently, we test the runtime of $\gr$ and $\ls$ on these temporal subgraphs. Fig. \ref{fig:sca} shows the results. As can be seen,  $\gr$ and $\ls$ scales near-linear w.r.t. the size of the temporal subgraphs. These
results indicate that our proposed algorithms can handle massive temporal networks.

\stitle{Exp-5: Memory overhead of $\gr$ and $\ls$.} From Table \ref{tab:memory}, we can see that the memory overhead of $\gr$ and $\ls$ is less than twice that of the original graph. Moreover, we can also see that the memory overhead of $\ls$ is less than $\gr$ in six of the eight datasets. This is because $\ls$ is a local search algorithm, thus fewer vertices may be visited (Exp-3 also confirms this), which further results in less space used to store reserve and residue for estimating the \textit{TPPR} values. But, $\gr$ is a global algorithm, which needs to store $D[u]$ for computing the exact \textit{TPPR} values.
These results show that $\gr$ and $\ls$ can achieve near-linear space cost, which is consistent with our analysis in Section \ref{sec:gr} and \ref{sec:ls}.

\vspace{-0.2cm}
\subsection{Effectiveness testing}
\vspace{-0.2cm}
\stitle{Exp-6: Effectiveness of different methods.} Table \ref{table:metric} reports our results. For the \textit{TC} metric, we have: (1) our model achieves the best scores on seven of the eight datasets. This is because our model can mitigate the \qd issue  (Section \ref{subsection:query_drift}), resulting in that it can keep good temporal separability by removing out many temporal irrelevant vertices to the  query vertex (i.e., \emph{query-drifted vertices}). (2) \textit{PPR\_NIBBLE} and \textit{QTCS\_Baseline} are the runner-up and third-place, respectively, which shows that these random walk methods can also obtain better temporal separability. (3) \textit{MPC}, \textit{PCore}, \textit{DBS}, \textit{MTIS}, and \textit{MSCS} have the worst performance. This is because they focus on internal temporal cohesiveness but ignore the separability from the outside. For the \textit{TD} metric,  we have:  (1) \textit{MPC} and \textit{DBS} outperform other methods (but they have poor \textit{TC}), and our model is the third-place and slightly worse than \textit{MPC} and \textit{DBS}. This is because \textit{MPC} and \textit{DBS} respectively adopt the clique and density as the criteria of the community, which has a strong density in itself. (2) \textit{CSM}, \textit{TCP} and \textit{PPR\_NIBBLE} have the worst performance. This is because they are static methods that ignore the temporal dimension of the graph. For the \textit{MD} metric, we have: (1) our model achieves the best scores on all datasets while other models are almost zero on large datasets. (2) The gap between other models and our model is smaller on small datasets (i.e., Rmin, Lyon, and Thiers) than on large datasets. Thus, these results indicate that existing models cannot optimize our proposed objective function well, and  our model is much denser and more separable in terms of temporal feature than existing models.

\stitle{Remark.} Optimizing \textit{TD} and \textit{TC} simultaneously is very challenging (or even impossible). So, our model is a trade-off between them. The reasons can be explained as follows. (1) Although the \textit{TD} score of our model is slightly worse than the baselines (i.e., \textit{MPC} and \textit{DBS}), our algorithm is at least three orders of magnitude faster than the baselines.  Thus, our solutions achieve better runtime by losing a small amount of quality, which is particularly important for processing massive datasets. (2) As we all know, a good community not only requires the vertices in the community to be internally cohesive (\textit{TD}) but also separates from the remainder of the network (\textit{TC}). In Table \ref{table:metric}, we  see that \textit{MPC} and \textit{DBS} rank ninth and eighth in terms of \textit{TC}, respectively, but our model is the best.

\stitle{Exp-7: Quality comparison between \textit{EGR} and \textit{ALS}.} Here, we compare the community identified by the approximate local search algorithm $\ls$ with that identified by the exact greedy removing algorithm $\gr$. Specifically, we use the community derived by $\gr$ as the ground-truth for evaluating the quality of $\ls$. Table \ref{tab:quality} reports the results. Here, $\epsilon$ is the theoretically  approximation ratio of $\ls$ (Algorithm \ref{algor:reducing}) and $\epsilon^{*}=\min\{\rho_{H_1}(u)| u \in H_{1}\}/\min\{\rho_{H_2}(u)| u \in H_{2}\}$ is the \textit{true}  approximation ratio, where $H_1$ and $H_2$ are the communities identified by $\gr$ and $\ls$, respectively.  We have the following observations. (1) $\ls$ obtains better results than the theoretical $\epsilon$-approximation ratio.  In particular, the \textit{true} approximate ratio of $\ls$ is between 1 and 4. (2) $\ls$ obtains a good recall value, which indicates the community found by $\ls$ covers almost all  members of the ground-truth. (3) $\ls$ obtains relatively high scores of precision and F1-Score, which implies the size of the community returned by $\ls$ is close to the ground-truth. In summary, the approximate algorithm $\ls$ can find high-quality communities in practice.

\stitle{Exp-8:The quality of \textit{ALS} with various $\alpha$.} Fig. \ref{fig:quality} shows the \textit{true} approximation ratio $\epsilon^{*}$ and the minimum query-biased temporal degree \textit{MD} with various $\alpha$. Due to the space limit, we only report the results on Rmin, Facebook, Enron, and DBLP. Other datasets can also obtain similar results. As shown in Fig. \ref{fig:quality(a)}, $\epsilon^{*}$ increases first and then decreases as $\alpha$ increases. The reasons are: (1) when $\alpha$ is small, the target community is closer to the query vertex and the locality of \textit{ALS} is stronger. As a result,  the community found by \textit{ALS} matches the target community. (2) When $\alpha$ is large, the target community may be very small. Thus, once the community identified by \textit{ALS} is slightly different from the target community, it will cause  $\epsilon^{*}$ to drop rapidly. From Fig. \ref{fig:quality(b)}, we can observe that \textit{MD} increases with increasing $\alpha$. This is because when $\alpha$ increases, the \textit{TPPR} value tends to be concentrated near the query vertex and these \textit{TPPR} values are large, which leads to a larger \textit{MD} by Definition \ref{def:degree}.

\begin{table}[t!]
	\centering
	\caption{\small Quality comparison between \textit{EGR} and \textit{ALS}} \vspace{-0.3cm}
	\footnotesize
	\scalebox{1}{
		\begin{tabular}{c|ccccc}
			\toprule
			&$\epsilon$ &$\epsilon^*$	& Precision& Recall& F1-Score \\
			\midrule
			Rmin &3.350 & 1.657 & 0.646  &0.984  & 0.780  \\
			Lyon &2.745 & 1.302 & 0.848  &1.000  & 0.918 \\
			Thiers &3.439 & 1.489 & 0.772  &1.000  & 0.871  \\ 	
			Facebook  &7.410 & 1.751 & 0.504  &0.977  & 0.665\\
			Twitter & 5.160 & 1.584 & 0.266 &0.983 & 0.419 \\
			Lkml &7.601 &1.937 & 0.477 & 0.995  &0.645 \\
			Enron  & 8.580 & 1.863 & 0.575 & 0.964   &0.720 \\
			DBLP  &13.024 & 3.279 & 0.224  &0.950&0.362\\
			\bottomrule		
	\end{tabular}} 	\vspace{-0.5cm}
	\label{tab:quality}
\end{table}

\begin{figure}[t!]
	\centering
	\subfigure[$\epsilon^{*}$ (vary $\alpha$)]{
		\includegraphics[width=0.22\textwidth]{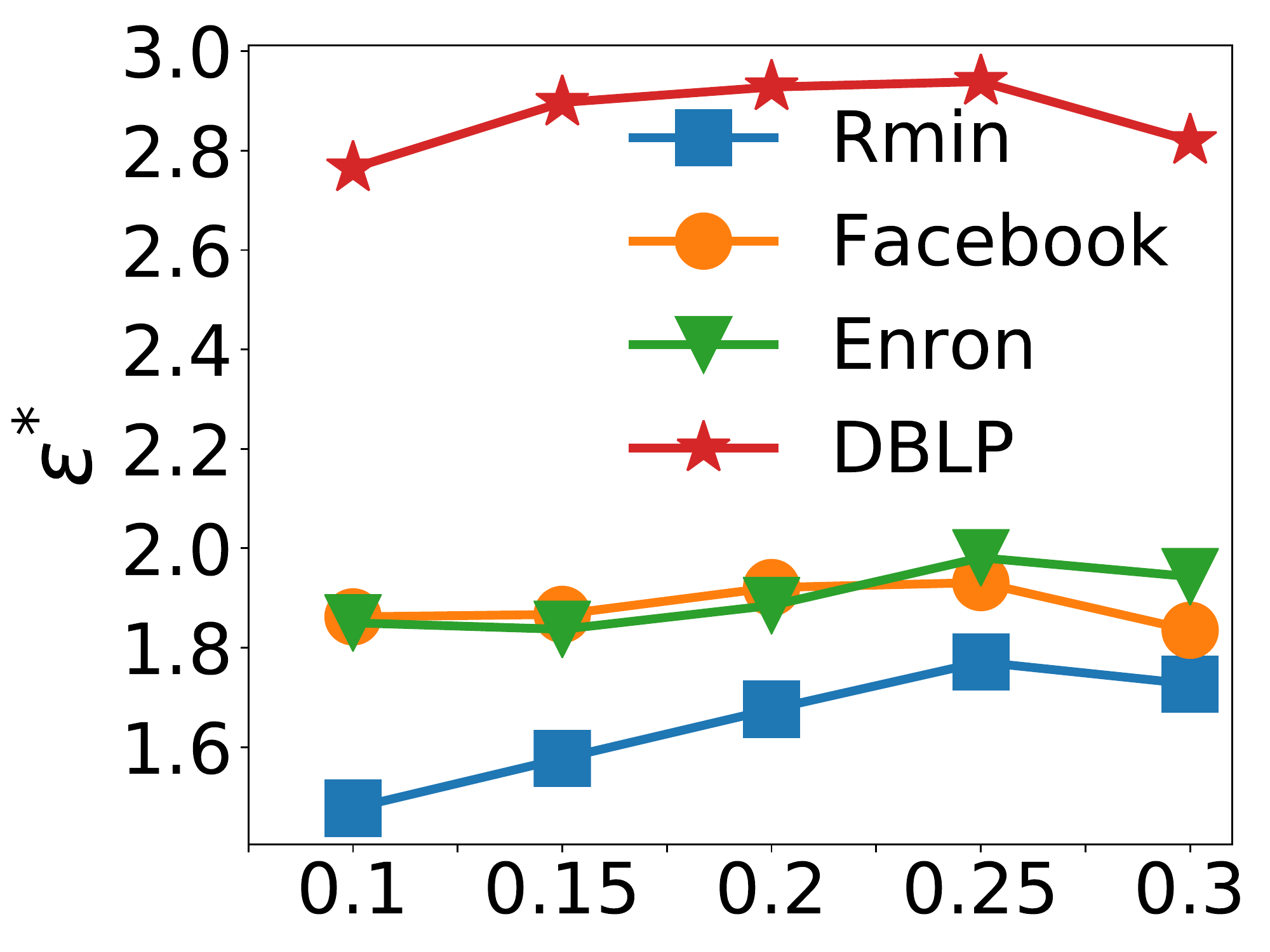}
		\label{fig:quality(a)}
	}
	\subfigure[\textit{MD} (vary $\alpha$)]{
		\includegraphics[width=0.22\textwidth]{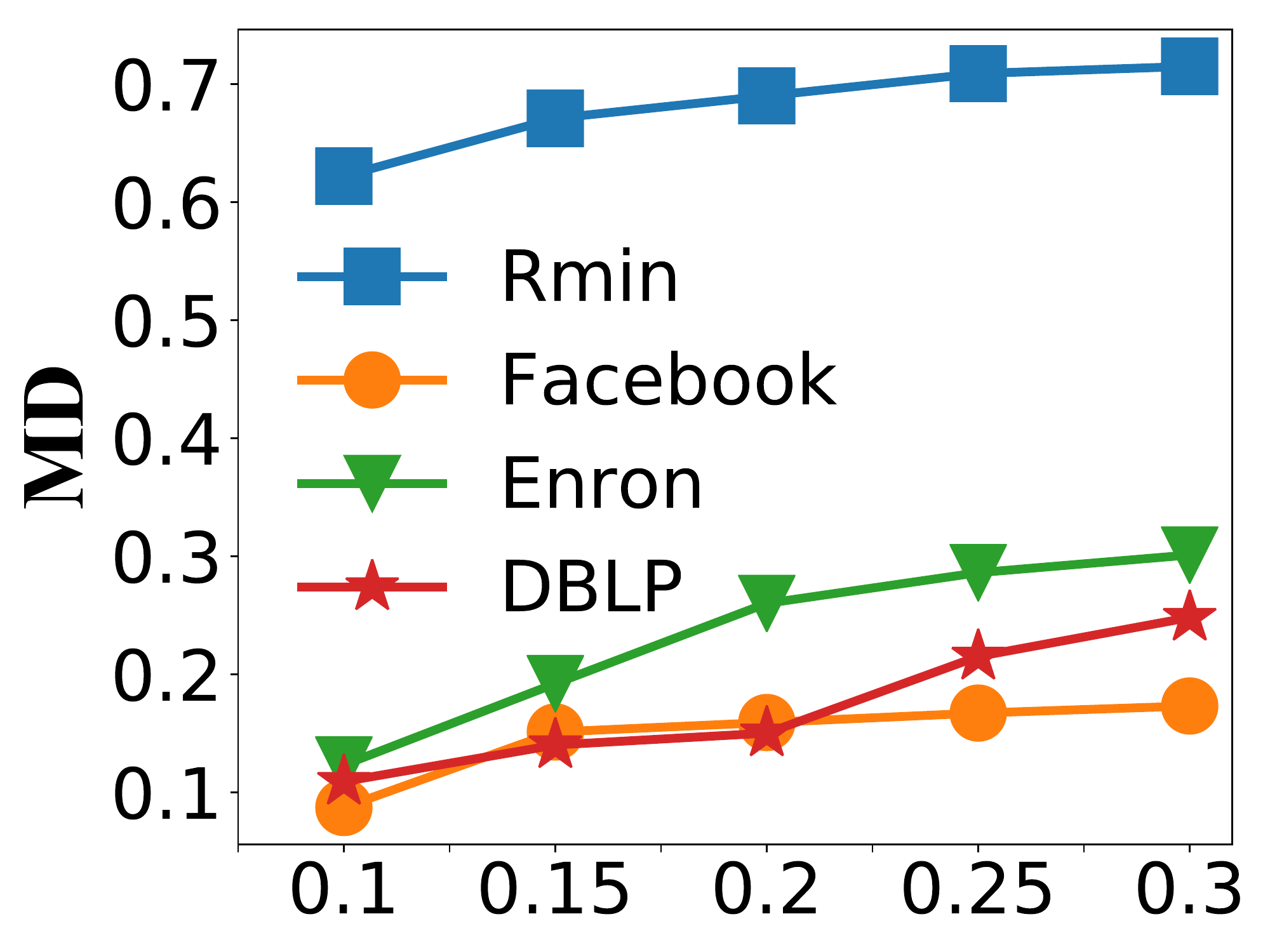}
		\label{fig:quality(b)}
	}\vspace*{-0.3cm}
	\caption{The quality of \textit{ALS} with various $\alpha$.}
	\label{fig:quality} \vspace{-0.5cm}
\end{figure}

\begin{figure*}[t!]
	\centering
	\subfigure[\textit{PCore}]{
		\includegraphics[height=0.14\textwidth,width=0.23\textwidth]{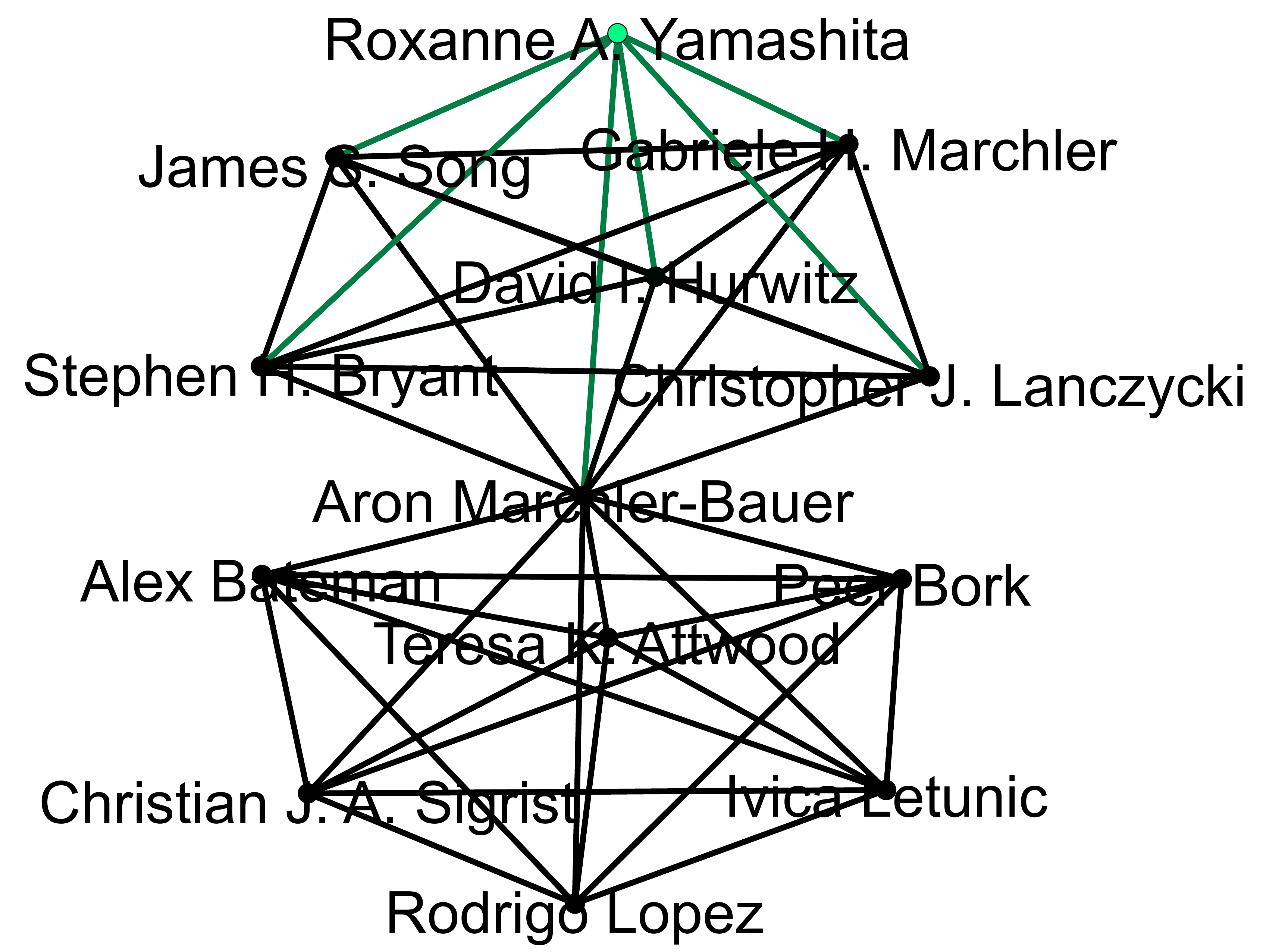}
	}
	\subfigure[\textit{MSCS}]{
		\includegraphics[height=0.14\textwidth,width=0.24\textwidth]{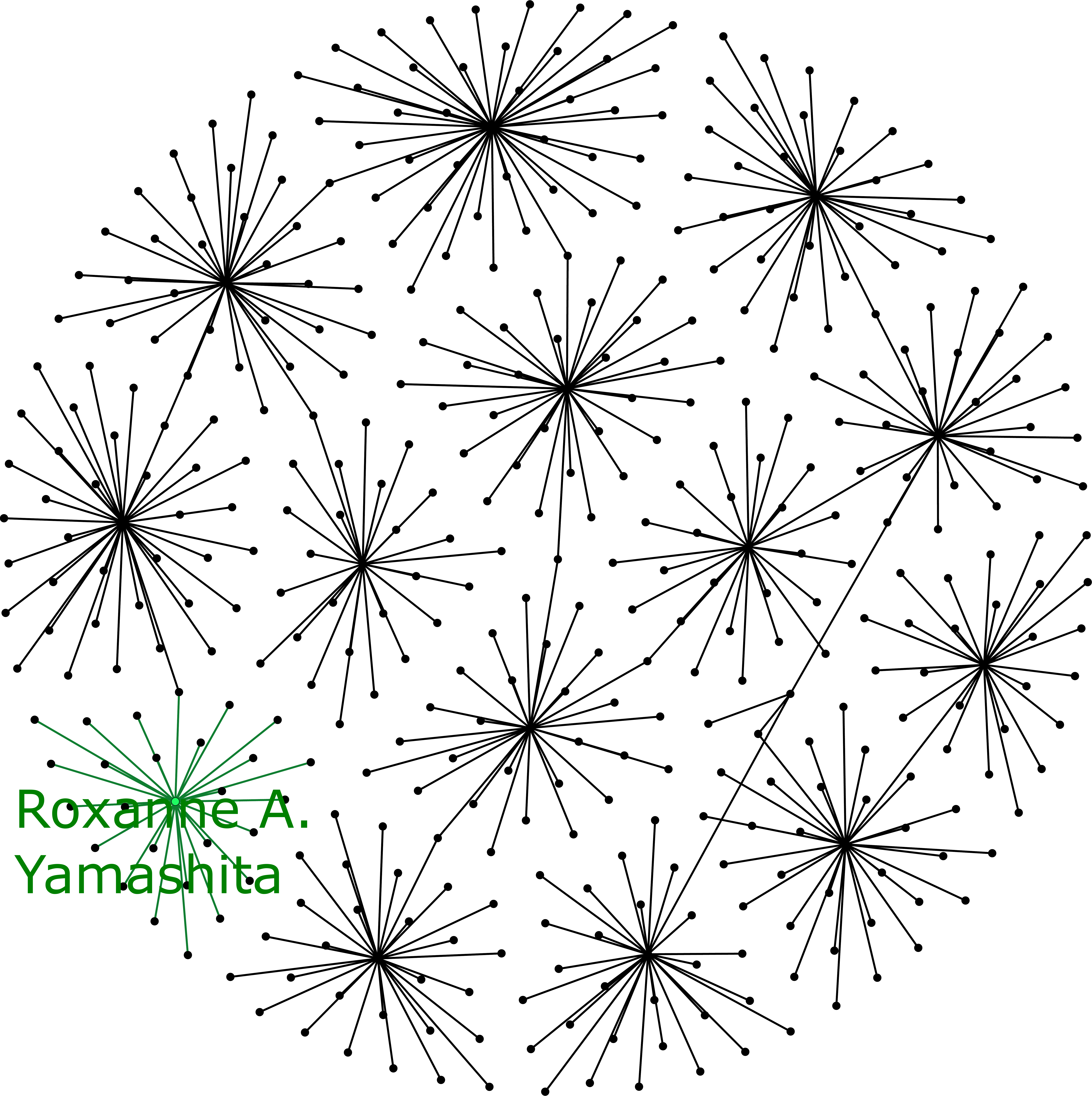}
	}
	\subfigure[\textit{Our model}]{
		\includegraphics[height=0.14\textwidth,width=0.22\textwidth]{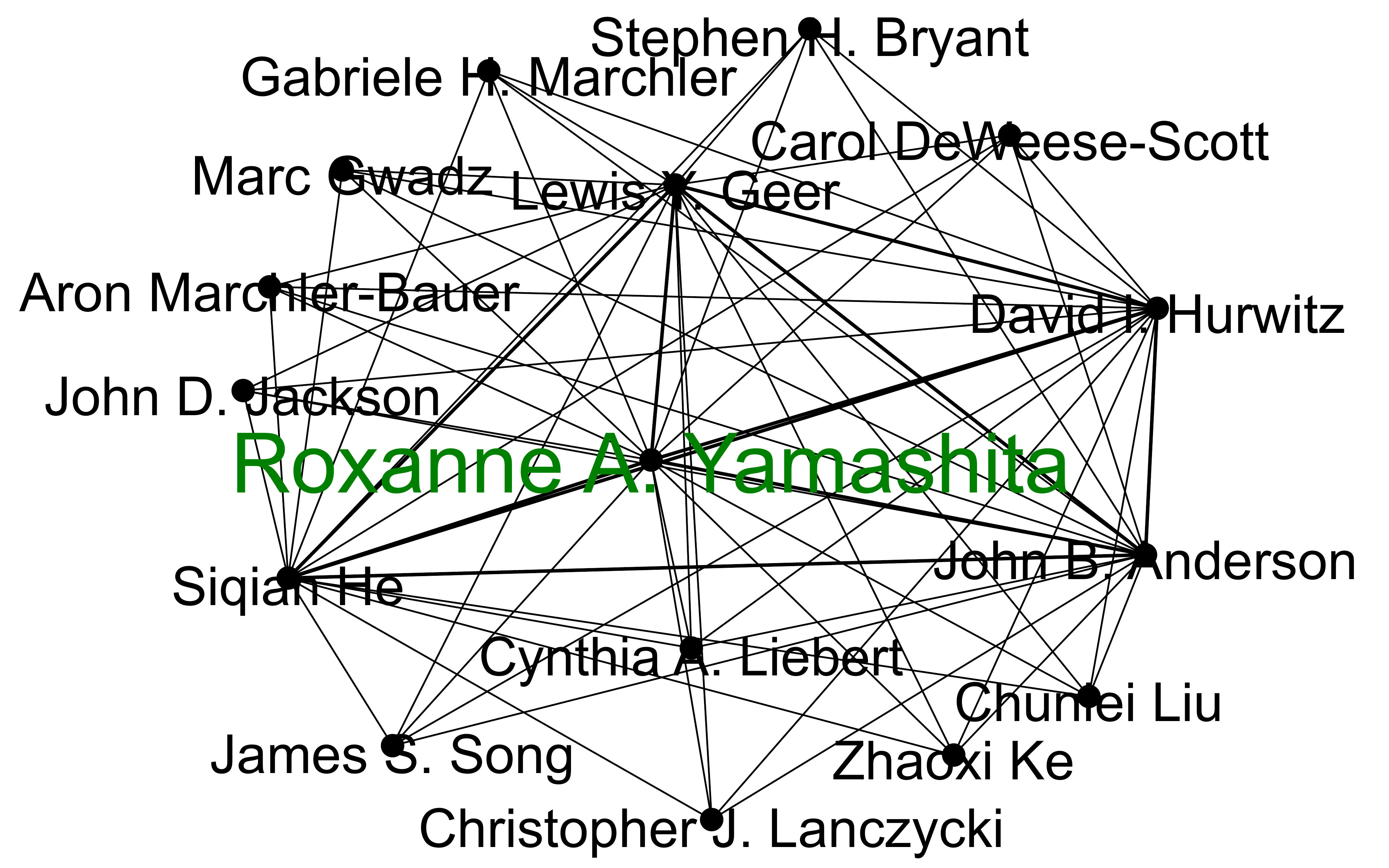}
	}
	\subfigure[\textit{Our model}]{
		\includegraphics[height=0.14\textwidth,width=0.23\textwidth]{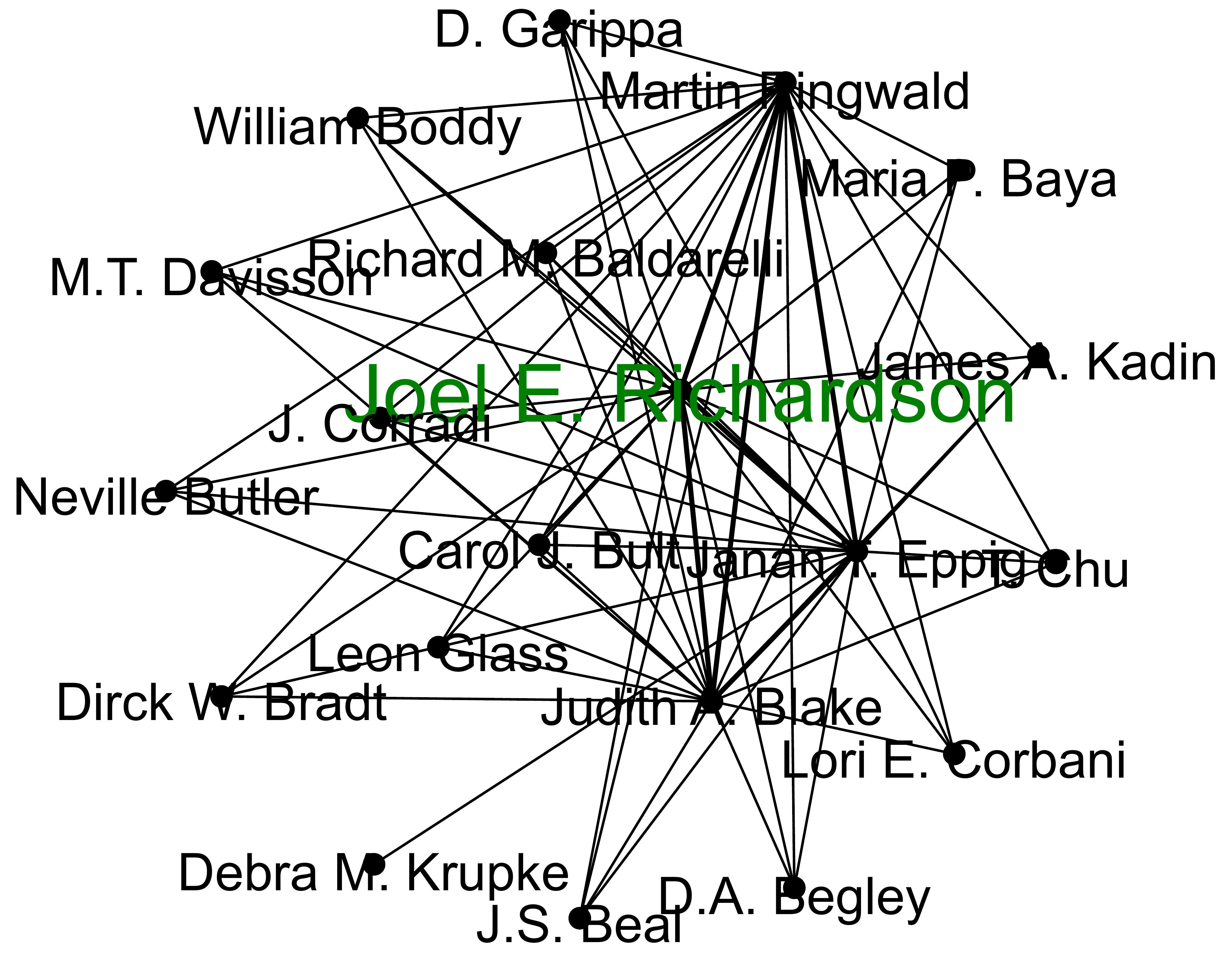}
	}
	\vspace{-0.3cm}
	\caption{Case studies on DBLP. (a-c) (resp. (d)) are the communities of Prof. Roxanne A. Yamashita (resp. Joel E. Richardson)}
	\label{fig:case} \vspace{-0.6cm}
\end{figure*}

 \stitle{Exp-9: Case studies on DBLP.} Here, we further show that our model can eliminate the \qd issue (Section \ref{subsection:query_drift}) while other models cannot eliminate it. Due to the space limit, we mainly report the results on \textit{PCore}, \textit{MSCS}, \textit{QTCS\_Baseline}, and our model. Similar results can also be obtained by the other models. Specifically, we choose Prof. Roxanne A. Yamashita or  Joel E. Richardson as the query vertex. Note that the community identified by \textit{QTCS\_Baseline} contains more than 1,000 authors (since it is too large to show in a figure, we do not visualize the community) that come from diverse research domains. This is because \textit{QTCS\_Baseline} considers structural cohesiveness and temporal proximity separately, which forces the result to include many vertices with poor temporal proximity to satisfy the structural cohesiveness. Thus, \textit{QTCS\_Baseline}  suffers from the \qd issue. On the other hand, as shown in  Fig. \ref{fig:case} (c), the community obtained by our model is a meaningful \emph{query-centered} temporal community and does not cause the \qd issue. This is because Roxanne A. Yamashita is \emph{centered} in the detected community and worked closely and frequently with other researchers. Besides, these researchers mainly investigate conserved sequence, amino acid sequence, and proteins,  which is consistent with Roxanne A. Yamashita. Thus, we can explain that this community is formed by their shared research interests and long-term cooperation with Roxanne A. Yamashita. However, from Fig. \ref{fig:case} (a), we can see that Roxanne A. Yamashita is marginalized, and the members on the upper and lower parts are connected by the hub vertex Aron Marchler-Bauer. Thus, the lower part is \emph{query-drifted vertices}. Additionally, by looking at the homepages of these researchers, we find that they come from different research backgrounds. Moreover, several important collaborators of Roxanne A. Yamashita in Fig. \ref{fig:case} (c) do not appear in Fig. \ref{fig:case} (a). Such as Stephen H. Bryant, Gabriele H. Marchler, and David I. Hurwitz (we can also see the importance of these three researchers to Roxanne A. Yamashita from https://www.aminer.cn/). By Fig. \ref{fig:case} (b), we can see that the community obtained by \textit{MSCS} is a connected subgraph composed of multiple stars. Furthermore, Fig. \ref{fig:case} (b) contains many \emph{query-drifted vertices}, which come from various backgrounds. Similar trends can also be observed in the community of Prof. Joel E. Richardson (due to the space limit, we only visualize the result of our model in Fig. \ref{fig:case} (d)). Since \textit{PCore} and \textit{MSCS} only consider the temporal cohesiveness but ignore the temporal proximity with the query vertex, they may find many temporal irrelevant vertices  to the query vertex for satisfying their cohesiveness, resulting in the query vertex being marginalized. Thus, \textit{PCore} and \textit{MSCS} also suffer from the \qd issue. In summary, these case studies further indicate that our model is indeed more effective than the other models to search \emph{query-centered} temporal communities.

\vspace{-0.2cm}
\section{Related Work}\label{sec:relate}
\stitle{Community detection.} Existing studies mainly rely on structure-based approach to identify all communities from graphs, including  modularity optimization \cite{newman2004fast},  spectral analysis \cite{donetti2004detecting}, hierarchical clustering \cite{rokach2005clustering} and cohesive subgraph discovering \cite{DBLP:conf/icde/ChangQ19}. However, all these methods do not consider the temporal dimension of networks. Until recently, a few researches have been done on community detection over temporal networks \cite{9351686, DBLP:conf/kdd/YangYWCZL16, DBLP:journals/tsmc/LinYLWLJ22, DBLP:conf/icde/LiSQYD18, DBLP:conf/icde/MaHWLH17,DBLP:conf/icdm/RozenshteinBGST18, DBLP:conf/icde/QinLWQCY19, DBLP:journals/pvldb/ChuZYWP19, Chun}. For instance, Lin et al. \cite{DBLP:journals/tsmc/LinYLWLJ22} proposed the stable quasi-clique  to capture the stability of cohesive subgraphs. Ma et al. \cite{DBLP:conf/icde/MaHWLH17} studied the heavy subgraphs for detecting traffic hotspots. But, all these researches are query-independent, which are often costly to mine all communities. Thus, they cannot be  extended to perform online community search on temporal networks.

\stitle{Community search.} As a meaningful counterpart, community search has recently become a focal point of research in network analysis \cite{DBLP:conf/icde/HuangLX17, DBLP:journals/vldb/FangHQZZCL20}. For simple graphs, they aim to identify the subgraphs that contain the given query vertices and satisfy a specific community model such as $k$-core \cite{DBLP:conf/kdd/SozioG10,DBLP:conf/sigmod/CuiXWW14, DBLP:journals/datamine/BarbieriBGG15}, $k$-truss \cite{DBLP:conf/sigmod/HuangCQTY14, DBLP:conf/sigmod/LiuZ0XG20}, clique \cite{DBLP:conf/sigmod/CuiXWLW13, DBLP:journals/tkde/YuanQZCY18}, density \cite{DBLP:journals/pvldb/WuJLZ15}, connectivity \cite{DBLP:conf/kdd/TongF06, DBLP:conf/sigmod/RuchanskyBGGK15, DBLP:conf/cikm/RuchanskyBGGK17} and conductance \cite{DBLP:conf/focs/AndersenCL06, DBLP:conf/icdm/BianYCWLZ18, DBLP:conf/sigmod/YangXWBZL19}.  For instance, Sozio et al. \cite{DBLP:conf/kdd/SozioG10} introduced a framework of community search, which requires the target community is a connected subgraph containing query vertices and has a good score w.r.t. the proposed quality function. In particular, they used the  $k$-core as the quality function. Since the $k$-core is not necessarily dense, Huang et al. \cite{DBLP:conf/sigmod/HuangCQTY14} adopted a more cohesive subgraph model $k$-truss to model the community.  Recently, Wu et al. \cite{DBLP:journals/pvldb/WuJLZ15} observed the above approaches exist the free rider issue, that is, the returned community often contains many  redundant vertices. However,  our proposed \qd issue (Definition \ref{def:query_drift}) is more strict than the free rider issue. That is, if an objective function $f(.)$ suffers from the \qd issue, then $f(.)$ must  have the free rider issue, and vice versa is not necessarily true (see Section \ref{subsection:query_drift} for details). Besides, graph diffusion-based local clustering methods have also been considered. For example, Tong et al. \cite{DBLP:conf/kdd/TongF06} applied random walk with restart to measure the goodness score of any vertex w.r.t. the query vertices.   Andersen et al. \cite{DBLP:conf/focs/AndersenCL06} used Personalized PageRank to sort vertices and then executed a sweep cut procedure to obtain the local optimal conductance. However, the random walk used in these works is mainly tailored to static networks. Besides simple graphs, more complicated attribute information associated with vertices or edges also has been investigated, such as  keyword-based graphs \cite{DBLP:journals/pvldb/FangCLH16, DBLP:journals/pvldb/HuangL17, DBLP:conf/icde/LiuZZHXG20}, location-based social networks \cite{DBLP:journals/pvldb/FangCLLH17, DBLP:conf/kdd/ChenL0XY020}, multi-valued graphs \cite{DBLP:conf/sigmod/LiQYYXXZ18} and heterogeneous information networks \cite{DBLP:journals/pvldb/FangYZLC20, DBLP:journals/pvldb/JianWC20}. However, they ignore the temporal properties of networks that frequently appear in applications. Recently, two studies are done on temporal community search \cite{DBLP:conf/bigdataconf/TsalouchidouBB20, DBLP:journals/tkdd/GalimbertiCBBCG21}. But, they suffer from several defects (Section \ref{sec:introduction}, \ref{subsection:query_drift} and \ref{sec:experiments}).

 \stitle{Temporal proximity.} 
Node-to-node proximity is a fundamental concept in graph analysis, which captures the relevance between two nodes in a graph \cite{DBLP:journals/tkde/WuJZ16}. Perhaps, the most representative proximity model is the Personalized PageRank \cite{Page1999ThePC, DBLP:conf/wsdm/LofgrenBG16, DBLP:conf/sigmod/WeiHX0SW18} due to its effectiveness and solid theoretical foundation. However, this model only considers graph structural information and ignores the temporal properties. Recently, several studies were done on temporal proximity. For example,  \cite{DBLP:conf/wise/HuZG15,Rocha_2014} first converted the temporal graph into a weighted graph and then applied the traditional method over the weighted graph to define the temporal PageRank. These methods, however, only consider the temporal information of two directly-connected vertices, missing higher-order temporal and structural information. \cite{LV20191215} adopted the fourth-order tensor to represent the temporal network and calculated the eigenvector of the tensor to rank the vertices, which is inefficient for handling large graphs. The most related work to ours is \cite{DBLP:conf/pkdd/RozenshteinG16}. However,  \cite{DBLP:conf/pkdd/RozenshteinG16}  focuses on modeling the importance of vertices at a certain timestamp $t$. Thus, the method is to track the evolution of the importance of vertices. However, our \textit{TPPR} models the importance of vertices on the entire graph by non-trivially considering all timestamps. Thus, our \textit{TPPR} considers more structural and temporal information, which is more reasonable to capture temporal proximity. 

\vspace{-0.15cm}
\section{CONCLUSION}\label{sec:conclude}

In this work, we are the first to introduce and address the \emph{query-centered} temporal community search problem. We first develop the Time-Constrained Personalized PageRank to capture the temporal proximity between query vertex and other vertices. Then, we introduce $\beta$-temporal proximity core to combine seamlessly structural cohesiveness and temporal proximity. Subsequently, we formulate our problem as an optimization task, which returns a $\beta$-temporal proximity core with the largest $\beta$. To query quickly, we first devise an exact and near-linear time greedy removing algorithm $\gr$.  To further boost efficiency, we then propose an approximate two-stage local search algorithm $\ls$. Finally, extensive experiments on eight real-life temporal networks and nine competitors show the superiority of the proposed solutions.

\bibliographystyle{IEEEtran}
\bibliography{sample}



%

\begin{IEEEbiography}[{\includegraphics[width=1in,height=1.25in,clip,keepaspectratio]{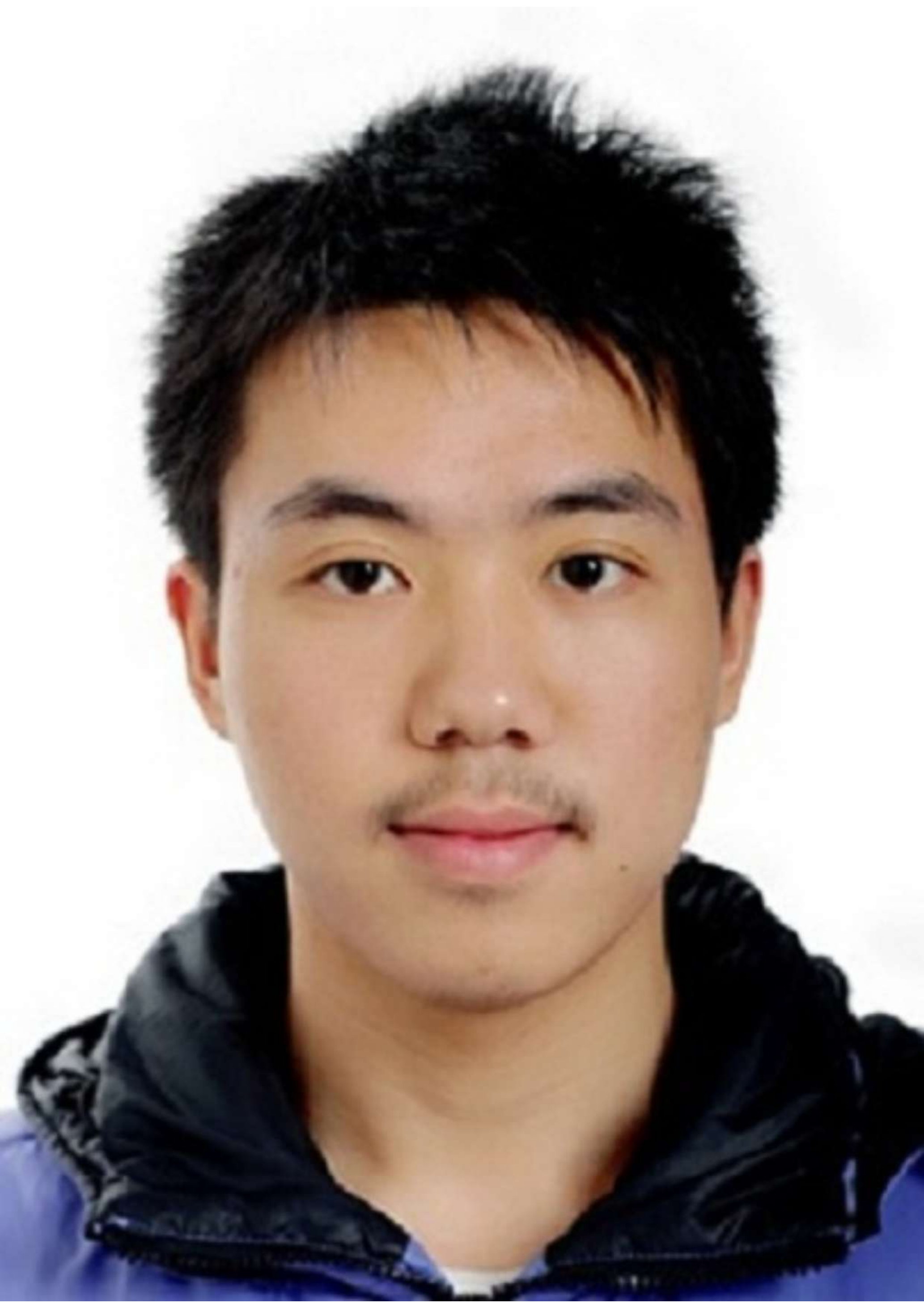}}]{Longlong Lin} received his Ph.D. degree in the theory of computer software from Huazhong University of Science and Technology (HUST), Wuhan, in 2022. He is currently an associate professor in the College of
	Computer and Information Science, Southwest University, Chongqing. His current research interests include social network analysis and temporal network mining.
\end{IEEEbiography}
\vspace{-10mm}

\begin{IEEEbiography}[{\includegraphics[width=1in,height=1.25in,clip,keepaspectratio]{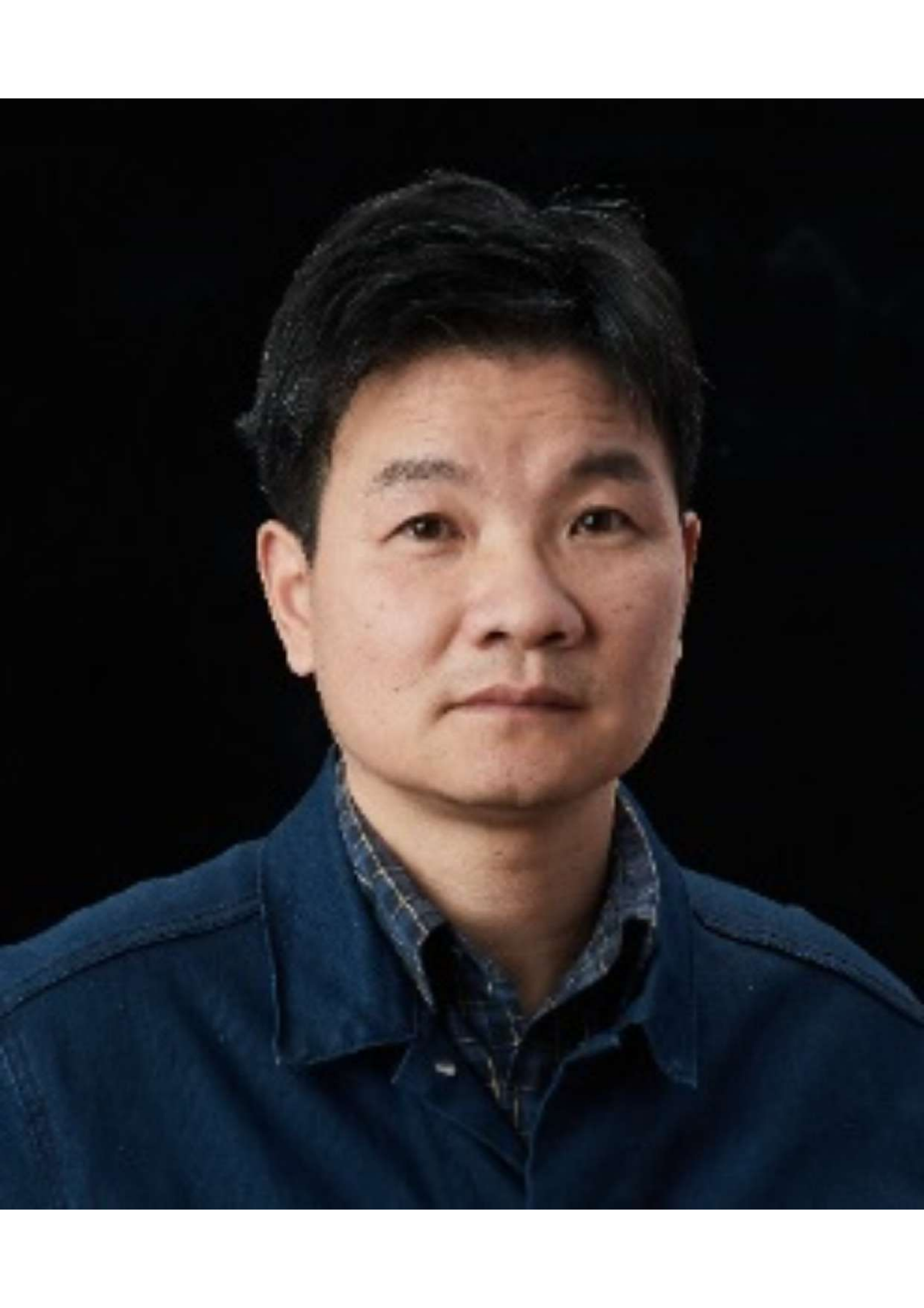}}]{Pingpeng Yuan}
	received his Ph.D. degree in computer science from Zhejiang University, Hangzhou, in 2002. He is now a professor in the School of Computer Science and Technology at Huazhong University of Science
	and Technology (HUST), Wuhan. His research interests include databases, knowledge representation and reasoning, and natural language processing, with a focus on high performance computing. He is the principle developer in multiple system prototypes, including TripleBit, PathGraph and SemreX.
\end{IEEEbiography}
\vspace{-10mm}

\begin{IEEEbiography}[{\includegraphics[width=1in,height=1.25in,clip,keepaspectratio]{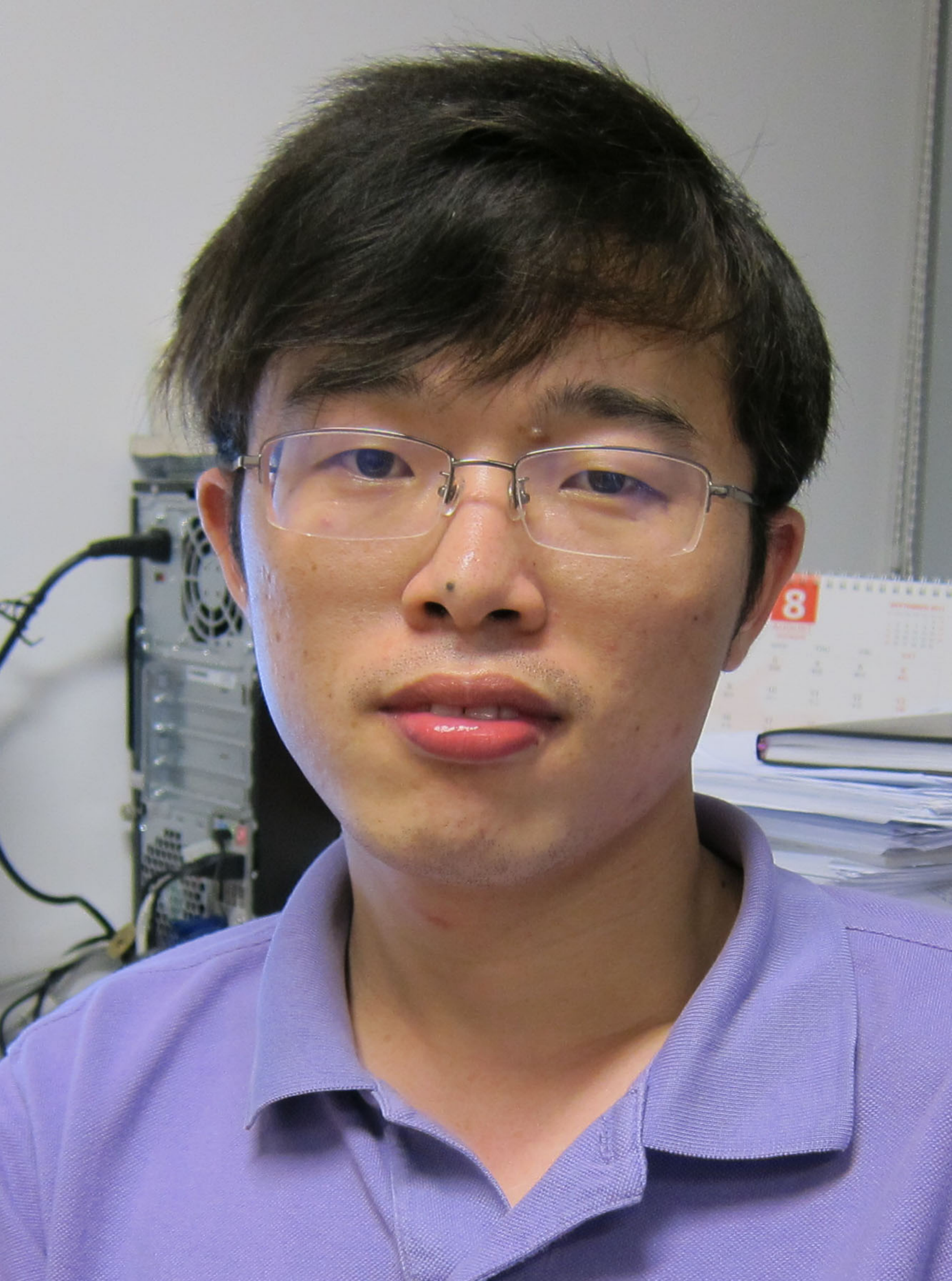}}]{Rong-Hua Li}
	received the PhD degree from the Chinese University of Hong Kong, in 2013. He is currently a professor with the Beijing Institute of Technology (BIT), Beijing, China. Before joining BIT in 2018, he was an assistant professor with Shenzhen University. His research interests include graph data management and mining, social network analysis, graph computation systems, and graph-based machine learning.
\end{IEEEbiography}
\vspace{-10mm}

\begin{IEEEbiography}[{\includegraphics[width=1in,height=1.25in,clip,keepaspectratio]{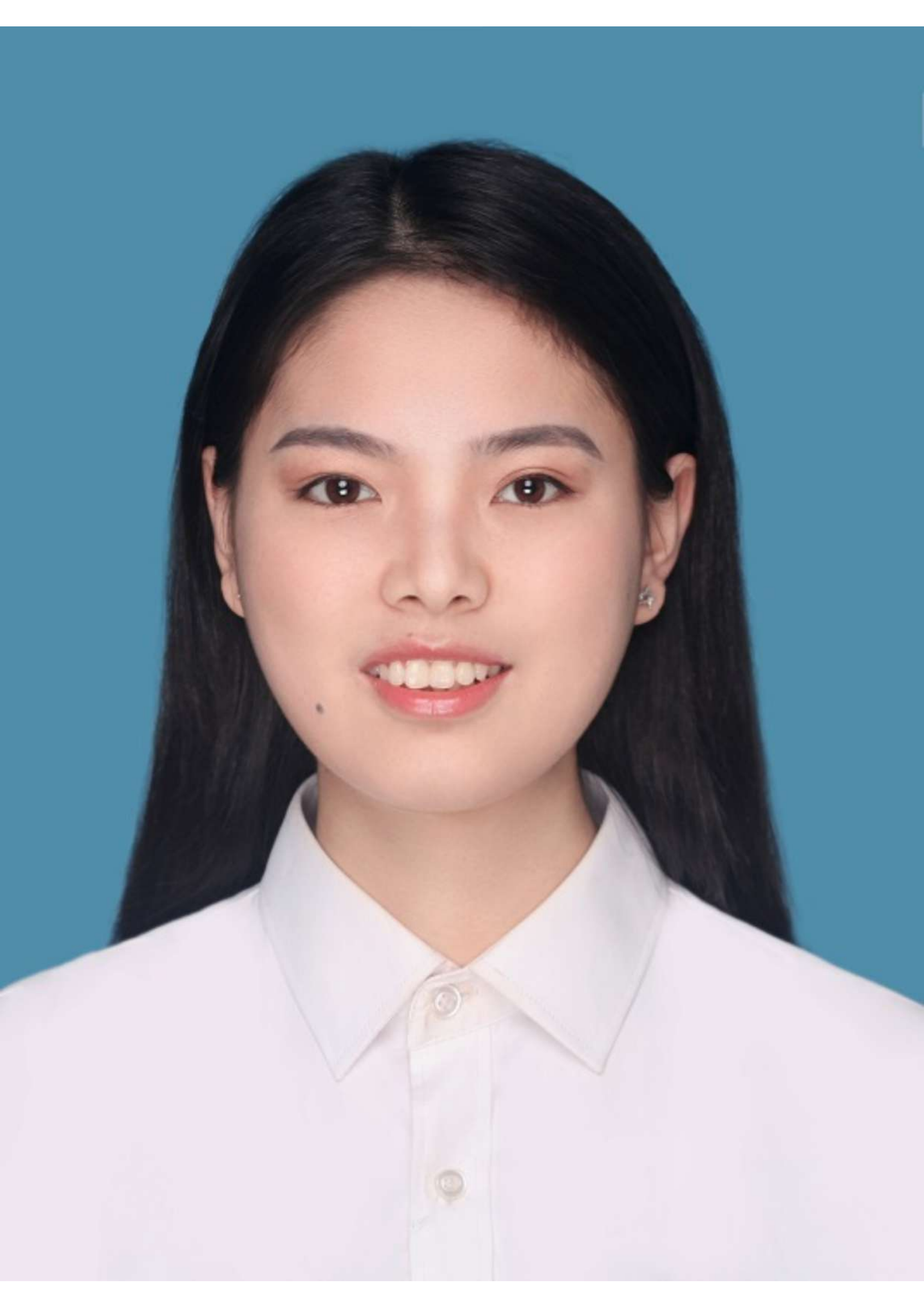}}]{Chunxue Zhu}
	received her B.E. degree in computer science and technology from Chongqing University, 	Chongqing, in 2020. She is currently pursuing her Master’s degree in School of Computer Science and Technology, Huazhong University of Science and Technology (HUST), Wuhan. Her current research interest is temporal network mining.
\end{IEEEbiography}
\vspace{-10mm}

\begin{IEEEbiography}[{\includegraphics[width=1in,height=1.25in,clip,keepaspectratio]{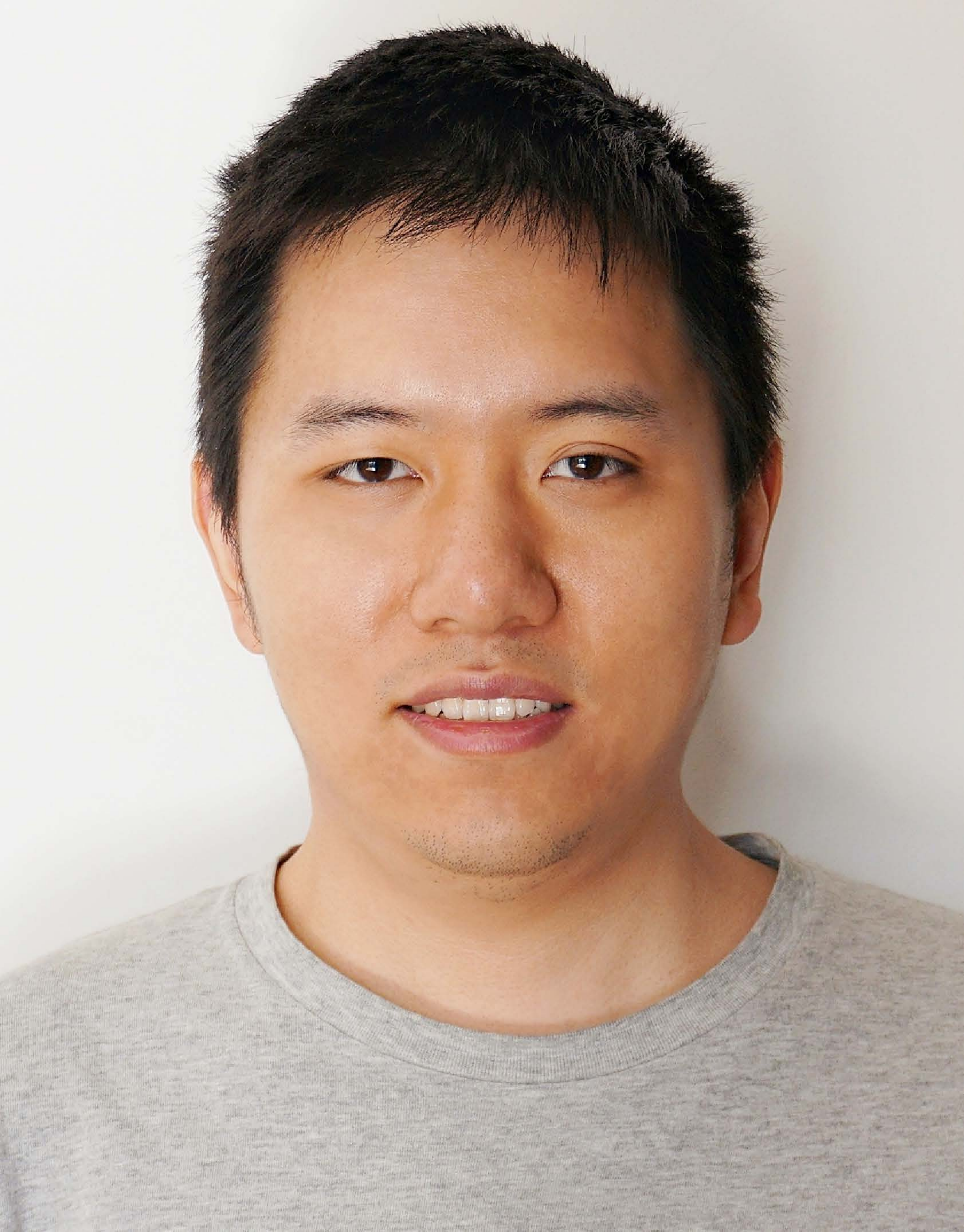}}]{Hongchao Qin}  is currently a Postdoc in Beijing Institute of Technology, China. He received the B.S. degree in mathematics, M.E. degree and Ph.D. degree in computer science from Northeastern University, China in 2013, 2015 and 2020, respectively. His current research interests include social network analysis and data-driven graph mining.
\end{IEEEbiography}
\vspace{-10mm}

\begin{IEEEbiography}[{\includegraphics[width=1in,height=1.25in,clip,keepaspectratio]{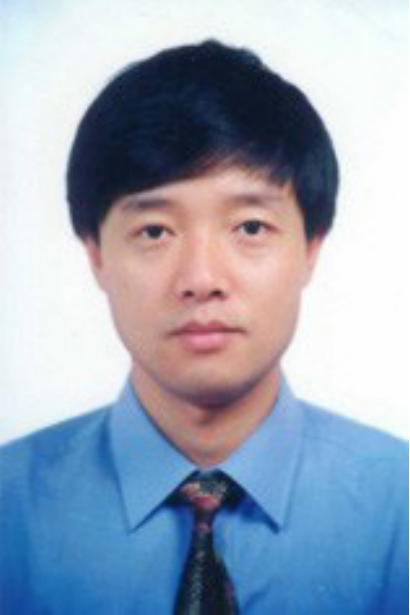}}]{Hai Jin}
	is a chair professor of computer science and engineering at Huazhong University of Science and Technology (HUST), Wuhan. Jin received his Ph.D. degree in computer engineering from HUST, Wuhan, in 1994. In 1996, he was awarded a German Academic Exchange Service fellowship to visit the Technical University of Chemnitz in
	Germany. Jin worked at The University of Hong Kong, Hong Kong, between 1998 and 2000, and as a visiting
	scholar at the University of Southern California between 1999 and 2000. He was awarded Excellent Youth Award
	from the National Science Foundation of China in 2001. Jin is a fellow of CCF and IEEE, and a life member
	of ACM. He has co-authored more than 20 books and published over 900 research papers. His research interests
	include computer architecture, parallel and distributed computing, big data processing, data storage, and system security.
\end{IEEEbiography}
\vspace{-10mm}

\begin{IEEEbiography}[{\includegraphics[width=1in,height=1.25in,clip,keepaspectratio]{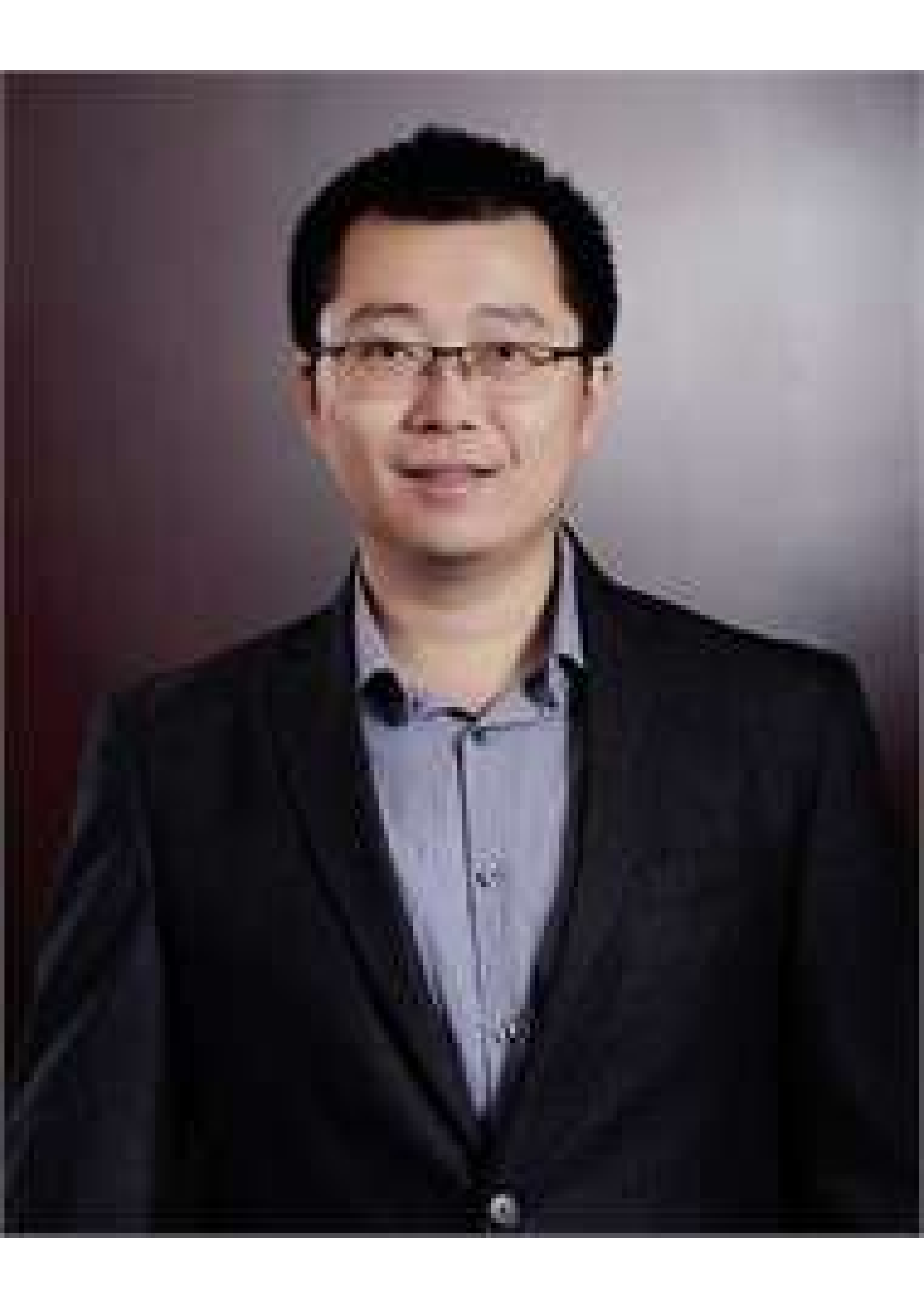}}]{Tao Jia} received the BSc degree from Nanjing University, China. He received his MSc and PhD degree from Virginia Tech, USA. He is currently a Professor at Southwest University, China. His research interest includes graph mining, brain networks, and social computing.
\end{IEEEbiography}
\vspace{-10mm}

%




\end{document}